%% file: main.tex
\documentclass{ieeecolor}
\usepackage{generic}
\usepackage{cite}
\usepackage{amsmath,amssymb,amsfonts}
\usepackage{algorithmic}
\usepackage{graphicx}
\usepackage{algorithm,algorithmic}
\usepackage{hyperref}
\usepackage{textcomp}

\usepackage[utf8]{inputenc} 
\usepackage[T1]{fontenc}        
\usepackage{url}           
\usepackage{booktabs}            
\usepackage{nicefrac}       
\usepackage{microtype}      
\usepackage[compatibility=false]{caption}
\usepackage{subcaption}
\usepackage{nicematrix}

\usepackage{float}
\usepackage{dsfont}
\usepackage{bm}
\usepackage{mathtools}

\usepackage{xurl}
\usepackage{hyperref}
\usepackage{xspace}
\newtheorem{assumption}{Assumption}
\newtheorem{definition}{Definition}
\newtheorem{lemma}{Lemma}
\newtheorem{theorem}{Theorem}
\newtheorem{remark}{Remark}
\newtheorem{proposition}{Proposition}
\newtheorem{example}{Example}
\newtheorem{corollary}{Corollary}

\usepackage{paralist}

\newcommand{\R}{\mathbb R}
\newcommand{\F}{\mathcal F}
\newcommand{\Z}{\mathcal Z}
\newcommand{\D}{\mathcal D}

\newcommand{\Nvz}{\mathcal{N}_{v,z}}
\newcommand{\Sb}{\mathbb{S}}
\newcommand{\Pb}{\mathbb{P}}
\newcommand{\Eb}{\mathbb{E}}

\makeatletter
\newcommand{\vast}{\bBigg@{4}}
\newcommand{\Vast}{\bBigg@{5}}
\makeatother

\newif\ifshowcomments
\showcommentstrue 

\showcommentsfalse 

\newcommand{\ThetaSet}{\mathcal{U}_{\Theta}^{T}} 

\makeatletter
\def\ps@titlepagestyle{%
  \def\@oddfoot{}\def\@evenfoot{}%
  \def\@oddhead{\hbox{}\scriptsize\leftmark\hfil\thepage}%
  \def\@evenhead{\scriptsize\thepage\hfil\leftmark\hbox{}}%
}
\makeatother

\def\BibTeX{{\rm B\kern-.05em{\sc i\kern-.025em b}\kern-.08em
    T\kern-.1667em\lower.7ex\hbox{E}\kern-.125emX}}

\begin{document}


\title{Finite Sample Identification of    Analytic \\Nonlinear Systems}
\input{authors}

\maketitle

\input{Abstract}
\input{Introduction}

\input{Problem_Formulation}
\input{Main_Results}
\input{proof_main}

\input{SME}

\input{CounterExample}

\input{Numerical_Examples}

\input{Conclusion}

\appendices
\renewcommand{\thesubsection}{\Alph{subsection}}
\input{appendix_1}

\input{appendix_2}

\input{appendix_6}

\section*{References}

\vspace{-20pt}
\bibliographystyle{IEEEtran}
\bibliography{references}

\end{document}

%% file: authors.tex
\author{Negin Musavi, 
Ziyao Guo, 
Geir E. Dullerud, 
and Yingying Li
\thanks{Negin Musavi, Ziyao Guo and Yingying Li are with University of Illinois at Urbana-Champaign.  Geir E. Dullerud is with University of Minnesota, Minneapolis. (nmusavi2@illinois.edu, ziyaog2@illinois.edu, dullerud@umn.edu, yl101@illinois.edu)}}

%% file: Abstract.tex
\begin{abstract}
This paper studies the identification of linearly parameterized nonlinear (LPN) systems. Although LPN systems share the same linear parameterization structure as linear systems, they are  more challenging to identify. In particular, previous work has shown, through a counterexample based on a piecewise-affine system, that non-active exploration is generally insufficient for LPN system identification. In this paper, we consider  LPN systems with real-analytic feature functions. We show that non-active exploration is sufficient for the identification of this class of systems  by establishing  non-asymptotic convergence rates of least-squares estimation and  set-membership estimation.  In addition, we provide counterexamples to show that non-active exploration may not be sufficient for system identification for non-real-analytic systems, even if those systems are infinitely differentiable.  We present numerical experiments to further support and validate our theoretical results.
\end{abstract}

\begin{IEEEkeywords}
System identification. Non-asymptotic analysis.  Least Squares Estimation. Linearly parametrized nonlinear systems

\end{IEEEkeywords}

%% file: Introduction.tex
\section{Introduction}\label{sec:intro}
The past decade has witnessed significant interest in finite sample analysis of system identification via a statistical learning perspective. For example, the sample complexity of linear system identification has been well studied \cite{simchowitz2020naive,simchowitz2018learning,li2023non,oymak2019non,faradonbeh2018finitetac,tsiamis2019finite,ziemann2023tutorial,faradonbeh2018finiteauto}, and there are many efforts trying to generalize these results to nonlinear system identification \cite{mania2022active,foster2020learning,khosravi2023representer,kowshik2021near,lee2024active,ziemann2022single,sattar2022non}. A natural and popular direction is the identification of linearly parameterized nonlinear (LPN) systems:
\begin{equation*}\label{equ: linearly parametrized}
	x_{t+1}=\Theta_* \phi(x_t, u_t)+w_t
\end{equation*}
where $\Theta_*$ is a matrix of unknown parameters and $\phi(x_t, u_t)$ is a known vector of  nonlinear feature functions. LPN systems can be viewed as a direct generalization of linear systems and enjoy wide applications, e.g. robotics \cite{siciliano2010robotics,alaimo2013mathematical}, power systems \cite{simpson2016voltage}, transportation \cite{kong2015kinematic}, etc.

For the finite sample analysis of LPN systems, there are  two main lines of research. The first line of research focuses on bilinear systems \cite{sattar2025finite,chatzikiriakos2026end,sattar2022finite,sattar2025learning}. Interestingly, bilinear systems enjoy similar performance of linear systems, for example, bilinear systems can also be learned by least square estimation (LSE) under i.i.d. random inputs (non-active exploration) and achieve a similar  convergence rate of $O(1/\sqrt T)$ \cite{sattar2022finite,sattar2025learning,sattar2025finite}. The second line of research focuses on identification with active exploration  because it has  been shown that, unlike bilinear systems,  general LPN systems may not be efficiently learned by non-active explorations. In particular, \cite{mania2022active} provides a counter-example of a piecewise-affine system and shows that i.i.d. random inputs are insufficient to identify the system under bounded process noises. Motivated by this, various active exploration methods have been proposed for LPN systems and their finite sample performance have been analyzed \cite{mania2022active,kowshik2021near,khosravi2023representer,chatzikiriakos2026end,lee2024active}. 

However, there is a notable gap between bilinear systems, which are infinitely differentiable, and piecewise-affine systems, which are non-differentiable. This leads to an intriguing question:
\textit{Q: to what extent can non-active exploration still work for LPN systems?}

Beyond the theoretical interest in bridging this gap in the finite-sample analysis of system identification, non-active exploration also offers  practical value for online learning and identification with single-trajectory data,  where the control inputs should simultaneously accomplish  control objectives and facilitate exploration of the unknown system. This tension between control performance and information acquisition is commonly referred to as the exploration–exploitation tradeoff in learning literature \cite{sutton2018reinforcement} and is closely related to the notion of dual control in  control literature \cite{bar1974dual}.
Non-active exploration provides a clean solution to address this challenge: by extending the i.i.d. random inputs to randomly perturbed control policies $u_t=\pi(x_t)+\eta_t$, one can choose the policy $\pi(\cdot)$ based on the control objectives/tasks, such as model predictive control \cite{rawlings2017model} or linear quadratic regulator under model estimation \cite{mania2019certainty}, and utilize the random noise $\eta_t$ to provide sufficient exploration for system identification. In this way, it could balance exploration and exploitation together instead of allocating a long period of time on pure exploration as in active exploration methods.

\vspace{4pt}

\noindent\textbf{Contributions.} Our contributions are four-fold.
\begin{itemize}
    \item Our first contribution is showing that non-active exploration is sufficient for LPN systems under two key conditions: real-analytic feature functions and semi-continuous noises. We establish non-asymptotic convergence rates $O(1/\sqrt T)$ for LSE in this setting, which is the same as the rates of linear and bilinear system identification.  We also provide practical examples for these two conditions.
    \item In addition to LSE, we  consider another popular system identification method in control, set-membership estimation \cite{fogel1982value,lu2023robust,li2024icml,lorenzen2019robust}, and analyze its non-asymptotic convergence rate under the stochastic setting. 
    \item We further provide  counter-examples to show the importance of the real-analytic condition:  for non-real-analytic systems, even if they  are finitely differentiable or infinitely differentiable, non-active exploration may still not be sufficient for system identification.
    \item Finally, we provide numerical results to compare our theoretical convergence rates of LSE and SME with their corresponding empirical convergence rates. We further compare the numerical performance of LSE, SME, and the active exploration in \cite{mania2022active}, and a fast SME algorithm inspired by  \cite{lorenzen2019robust,lu2019robust}. 
\end{itemize}

\vspace{4pt}

\noindent\textbf{Related Work.} 
As a subclass of LPN systems, bilinear system have been extensively studied recently \cite{sattar2025finite,chatzikiriakos2026end,sattar2022finite,sattar2025learning}, which is largely motivated by Koopman transform of control-affine systems \cite{williams2015data}. Both active and non-active exploration have been studied with provable non-asymptotic convergence rates, and there are also extensions to partially observable cases \cite{sattar2025finite} and end-to-end control design \cite{chatzikiriakos2026end}, etc. 

Inspired by neural network parameterization, nonlinear systems of the form $x_{t+1}=\phi(A_* x_t)+w_t$ is also studied in the literature, where $\phi(\cdot)$ is a known nonlinear link function and $A_*$ is unknown. The least square cost is no longer quadratic or even convex in this case and various optimization methods have been proposed to learn this type of systems \cite{kowshik2021near,sattar2022finite,foster2020learning}.

Another related  line of research focuses on nonlinear regression with dependent data \cite{ziemann2022learning,ziemann2023tutorial,ziemann2024sharp},\footnote{$y_t=f_*(x_t)+w_t$ is considered, where $x_t$ and $y_t$ correlate with the historical data.} which can be applied to nonlinear system identification. The nonlinear regression in \cite{ziemann2022learning,ziemann2023tutorial,ziemann2024sharp} is based on non-parametric LSE and its variants, and their convergence rates under different scenarios have been analyzed. It is interesting to note that this line of work usually assumes certain persistent excitation assumptions,\footnote{For example, \cite{ziemann2022learning} assumes hyper-contractivity, and \cite{ziemann2024sharp} assumes the empirical covariance of the $\{x_t\}_{t\geq 0}$ process is invertible with high probability (Corollary 3.2).} whereas our paper demonstrates that persistent excitation holds by establishing the BMSB condition for linearly parameterized and real-analytic nonlinear control systems.

Uncertainty set estimation  is crucial for robust control under model uncertainties \cite{lu2023robust,lorenzen2019robust,li2021online,akccay2004size,bai1998convergence}. SME is a widely adopted uncertainty set estimation method in robust adaptive control \cite{lorenzen2019robust,lu2023robust,bertsekas1971control,bai1995membership}. Recently, there is an emerging interest in bridging SME with statistical learning \cite{li2024icml,lu2019robust,xu2025sample} and applying SME to modern applications\cite{yu2023online,tang2024uncertainty,xu2026stochastic}. Though SME  does not require stochastic properties to be implemented, recent statistical-learning inspired analysis on SME usually consider stochastic settings \cite{akccay2004size,li2024icml,lu2023robust}.

 A \textbf{preliminary version} of this work has appeared as \cite{musavi2024identification}.
The current manuscript  extends the conference paper by
(i) providing more discussions and theoretical analysis for the system identification under randomly perturbed closed-loop policies; 
(ii) introducing counterexamples to discuss  the assumptions/conditions for our non-asymptotic analysis;
 (iii) providing additional numerical experiments to compare LSE and SME with other methods in the literature; (iv) providing a new version of the proof for Theorem \ref{th:BMSB-o-l}.

\vspace{4pt}

\noindent\textbf{Outline.} Section~\ref{sec:prob} presents the problem formulation and discusses the assumptions needed for our analysis. Section~\ref{sec:results} analyzes the non-asymptotic convergence rates of LSE.  Section \ref{sec: main text proofs} provides  proofs for Section \ref{sec:results}. Section \ref{sec: SME} reviews SME and analyzes its   convergence rates. Section~\ref{sec:counter example} provides counter-examples to demonstrate the importance of our key assumptions in Section~\ref{sec:prob}. Section \ref{sec:num-exp} provides the numerical results. Appendix provides additional proofs.

\vspace{4pt}

\vspace{4pt}

\noindent\textbf{Notations.}  For matrix $Z$, $Z^{\intercal}$ represents its transpose, $\|Z\|_{p}$ represents its $p$-norm, where $p=1, 2, \infty$, and $\|Z\|_F$ represents the Frobenius norm. Let $I_n$ denote the $n \times n$ identity matrix.
For a real symmetric matrix $Z$, $Z \succ 0$ and $Z \succeq 0$ indicate that $Z$ is positive definite and positive semi-definite, respectively.  For a measurable set $\mathcal{E} \subset \mathbb{R}^{n}$, $\texttt{Leb}^{n}(\mathcal{E})$ represents its Lebesgue measure in $\mathbb{R}^{n}$. 
For a set  of matrices in $\mathcal{T} \subseteq \mathbb{R}^{n \times m}$, the diameter is defined as $\hbox{diam}(\mathcal{T}) = \sup_{A_1, A_2\in \mathcal{T}} \|A_1-A_2\|_{F}$.
For random variables \(Z_1,\ldots,Z_n\), we use
$\mathcal{F}(Z_1,\ldots,Z_n)$
to denote the \(\sigma\)-algebra generated by \(Z_1,\ldots,Z_n\). A function $\gamma(\cdot)$ is a
$\mathcal{K}$ function if it is continuous, strictly increasing, and satisfies
$\gamma(0) = 0$. A function $\beta(\cdot, \cdot)$ is a $\mathcal{KL}$ function if, for each fixed $t \geq 0$,
the function $\beta(\cdot, t)$ is a $\mathcal{K}$ function, and for each fixed $s \geq 0$, the function $\beta(s,\cdot)$ is decreasing and satisfies $\beta(s,t) \rightarrow 0$ as $t \rightarrow \infty$.
For a function $g$ defined on a set $\mathcal{D}$, we write $g \equiv 0$ if
$g(\zeta) = 0$ for all $\zeta \in \mathcal{D}$, and $g \not\equiv 0$ otherwise. 
The  functions $\phi_1, \dots, \phi_{n_\phi}$ are called
{linearly independent} if $\sum_{i=1}^{n_\phi} c_i \phi_i \not\equiv 0$
for every non-zero vector  $c=(c_1, \dots, c_{n_{\phi}})$. 
The notation $\lceil \cdot \rceil$ stands for the ceiling function.  $\sigma_{\min}(Z)$ denote the  minimum singular value of matrix $Z$. The notation $\emptyset$ stands for an empty set. The complement of a set $\mathcal Z$ is denoted as $\Z^{\mathrm c}$. We use $\texttt{trunc-G}(0, \sigma_{w}, \mathcal D)$ to denote the
truncated Gaussian distribution with zero mean and $\sigma_{w}^{2}$ variance truncated to $\mathcal D$. $\texttt{uniform}(\mathcal D)$ denotes the uniform distribution on $\mathcal D$.

%% file: Problem_Formulation.tex
\section{Problem Formulation and Preliminaries}\label{sec:prob}

This paper studies the identification of linearly parameterized nonlinear (LPN) systems as follows:
\begin{equation}\label{eq:sys}
    x_{t+1} = \Theta_{*}\phi(x_{t},u_{t}) + w_{t},
\end{equation}

where $x_{t}\in\mathbb{R}^{n_{x}}$ is the state, $u_{t}\in\mathbb{R}^{n_{u}}$ is the control input, an equilibrium point is reached at $(0,0)$, i.e., $0=\Theta_*\phi(0, 0)$, $w_{t}\in\mathbb{R}^{n_{x}}$ is i.i.d. random  process noise,  $\Theta _*\in \R^{n_x \times n_{\phi}} $ is an unknown parameter matrix, and  $\phi: \R^{n_{x}+n_{u}}\rightarrow \R^{n_{\phi}}$ is a known nonlinear feature mapping whose components $\phi(\cdot)=(\phi_{1}(\cdot), \ldots, \phi_{n_{\phi}}(\cdot))^{\intercal}$ are linearly independent.\footnote{If feature functions are not independent, they can be reduced to an independent set since they are known a priori.} For simplicity, we consider $x_0=0$ throughout this paper.\footnote{{This  could be generalized to any non-zero $x_0$  that satisfies the initial condition of the local input-to-state stability as introduced in Definition \ref{def:liss}.}}

In this paper, we aim  to estimate the unknown parameters $\Theta_*$ from a single trajectory of samples: $\{x_t, u_t, x_{t+1}\}_{t=0}^{T-1}$. In particular, we consider the least-squares estimator (LSE) below:
\begin{align}\label{eq:ols-eq}
    \hat{\Theta}_{T} 
    = \underset{\hat{\Theta}}{\arg\min} 
    \sum_{t=0}^{T-1} \big\|x_{t+1} - \hat{\Theta}\phi(x_{t}, u_{t}) \big\|_{2}^{2},
\end{align}
where $ \hat{\Theta}_{T}$ represents the estimated parameter matrix after a trajectory of $T$ samples. 

This paper will focus on non-active exploration. In particular, we will first consider i.i.d. random control inputs, i.e., 
\begin{equation}\label{equ: iid input}
    u_t=\eta_t
\end{equation}for some i.i.d. random variable $\eta_t$. Next, in Section \ref{subsec: randomly perturbed policy LSE}, we will extend the results to randomly perturbed control policies:
\begin{equation}\label{equ: randomly perturbed control}
    u_t= \pi(x_t)+\eta_t
\end{equation}
where $\pi(\cdot)$ can be any real analytic policy  and is usually  task/performance-driven, and  $\eta_t$ is an i.i.d. random variable that aims to explore the system and generate informative samples. Such extension has been shown to preserve the sample complexity of LSE under i.i.d. inputs for linear system identification \cite{li2023non}. 

Due to its simplicity, non-active exploration is widely adopted in linear system identification and enjoys strong theoretical guarantees \cite{simchowitz2018learning,li2023non,sattar2022finite}. For example,  LSE with non-active  exploration can achieve the optimal convergence rate for linear systems \cite{simchowitz2018learning}. Later, similar sample complexity bounds have been developed for some sub-classes of nonlinear systems,  such as bilinear systems \cite{sattar2022finite}  and linear systems with nonlinear policies \cite{li2023non}. The primary goal of this paper is to study/establish the conditions of  nonlinear systems \eqref{eq:sys} that enable  system identification with provable finite-sample guarantees under non-active explorations. 

\subsection*{Conditions and Assumptions}



In the following, we  introduce two conditions that together enable efficient system identification under non-active explorations:  one on the feature function and one on the noise distribution. Further, we  introduce an  assumption on stability for the purpose of analysis, which is needed in most nonlinear system identification literature \cite{sattar2025finite,mania2022active,sattar2022finite}.

\vspace{4pt}

\noindent\textbf{Condition on feature functions.}  First, we introduce the following regularity condition on the feature functions. 

\vspace{2pt}

\begin{definition}[Real Analyticity \cite{krantz2002primer}]
A function $f(\cdot)$ is said to be real-analytic if, for every $x$, the function $f$ may be represented by a convergent power series in a neighborhood of $x$.  

\end{definition} 

\vspace{2pt}

It is known that a real-analytic function is infinitely differentiable and the convergent power series is unique and coincides with the infinite Taylor expansion, but infinite differentiability does not imply real analyticity \cite{krantz2002primer}. 

This paper mainly focuses on real-analytic feature functions.
\begin{assumption}[Real-Analytic Features]\label{ass:analytic}
Every feature function $\phi^i(x,u)$ for $1\leq i \leq n_{\phi}$ is  real-analytic on $\mathbb{R}^{n_x+n_u}$.\footnote{This assumption can be relaxed to local analyticity on the reachability set.}
\end{assumption}

Real analytic functions include many common function types, such as polynomial functions, trigonometric functions, exponential functions, etc. Further, real analyticity is preserved under addition and multiplication (Proposition 1.1.7 in \cite{krantz2002primer}). Therefore, real analytic feature functions naturally arise in  many applications\cite{simpson2016voltage,noack2003hierarchy,siciliano2010robotics,alaimo2013mathematical}, some of which are briefly discussed below for illustration purposes.

\begin{example}[Pendulum]\label{ex:pendulum}
Consider a discrete-time pendulum system below:
\begin{align*}
\alpha_{t+1} &= \alpha_{t} + \Delta_T \omega_t+w_{t,1}, \\
\omega_{t+1} &= \omega_{t}-\frac{g\Delta_T}{l}\sin(\alpha_{t})
+ \frac{\Delta_T}{M l^2}u_t
+ \Delta_T w_{t,2},
\end{align*}
where $\alpha_t$ represents the angle, $\omega_t$ represents the angular velocity, $u_t$ is the torque input, $g$ is the gravity constant, $\Delta_T$ is the time discretization,   the mass $M$ and rod length $l$ are unknown. This system could be viewed as an LPN system \eqref{eq:sys}, and the feature functions are real-analytic because they are either linear functions or a summation of linear terms and  a $\sin(\alpha_t)$ term, which are all real analytic.

\end{example}

\begin{example}[Drone]\label{ex:quadrotor}
Consider the drone dynamics in \cite{alaimo2013mathematical} as reviewed below:
\begin{align*}
    \dfrac{d}{dt} \begin{pmatrix}
        p\\
        v\\
        q\\
        \omega
    \end{pmatrix} = \begin{pmatrix}
        v\\
        - g e_{z} + \frac{1}{M} Q(q) f_{u}\\
        \frac{1}{2} \Omega(\omega) q\\
        I^{-1} (\tau_{u} - \omega \times I \omega) 
    \end{pmatrix} + w,
\end{align*}
where $p \in \mathbb{R}^{3}$ and $v \in \mathbb{R}^{3}$ represent the  position and velocity  in the inertial frame,  $\omega \in \mathbb{R}^{3}$ denotes the angular velocity in the body-fixed frame,  $q \in \mathbb{R}^{4}$ denotes the quaternion vector \cite{alaimo2013mathematical}, $f_u$ and $\tau_u$ are control inputs, the mass $M$ is unknown,  the diagonal inertia  matrix $I$ is unknown,  $Q(q)$ is a known quadratic function, and $\Omega(\omega)$ is a known linear function. After time discretization, this system can be written as an LPN system \eqref{eq:sys}, and the feature functions are third-order polynomial functions and are thus real-analytic. 
\end{example}

\begin{example}[Neural Network]\label{ex:neural network}
 The Gaussian Error Linear Unit (GELU) is a widely used activation function in modern neural network architectures, including BERT, GPT, and Vision Transformers \cite{hendrycks2016gaussian,brown2020language,devlin2019bert,dosovitskiy2021vit}. GELU function is defined by
 \begin{equation*}
\textup{GELU}(x)
=
\frac{x}{\sqrt{2\pi}}
\int_{-\infty}^{x}
e^{-t^{2}/2}\,dt.
\end{equation*}
 GELU function is real-analytic because the exponential function $e^{-t^{2}/2}$ is real-analytic, and the real analyticity is preserved under integration (Proposition 1.1.18 in \cite{krantz2002primer}) and multiplication with another real analytic function $x$. Therefore, when GELU is  used to approximate the nonlinear dynamics while only the weights of the final linear layer are adapted, the resulting model is an LPN system and the feature functions  generated by  GELU activation are real analytic.
\end{example}

\vspace{4pt}

\noindent\textbf{Condition on noise distributions.} Next, we introduce the assumption on the randomness of the system \eqref{eq:sys}, which relies on the following definition.

\begin{definition}[Semi-Continuous Distribution \cite{pollard2002user}]\label{def:semi-continuous}
A probability distribution $\mathbb{P}$ on $\mathbb R^n$ is called {semi-continuous}   if no set with Lebesgue measure zero has full probability measure, i.e., if  set $\mathcal E$ satisfies $\texttt{Leb}^{n}(\mathcal E) = 0 $, then $\Pb(\mathcal E)<1$. 
\end{definition}

The semi-continuous distribution is more general than the continuous distribution but rules out purely discrete distributions \cite{cohn2013measure,pollard2002user}. In particular, any continuous distribution satisfies the semi-continuity condition in Definition \ref{def:semi-continuous}, and any mixture distribution with one component being a continuous distribution  also satisfies the requirement of semi-continuity.

Based on the semi-continuity definition, we introduce the following assumptions on $w_t$ and $\eta_t$. 
\begin{assumption}[Noise Distributions]\label{ass:bounded-iid-noise}
The process noise $\{w_t\}_{t\ge 0}$ and the control perturbations $\{\eta_t\}_{t\ge 0}$ are mutually independent and both satisfy the following properties
\begin{enumerate}
\item[(i)] Both sequences are i.i.d.  sub-Gaussian with  zero means, positive definite covariance matrices, i.e., $\operatorname{cov}\left(w_t\right) \succeq \sigma_w^2 I_{n_x} \succ 0$, $\operatorname{cov}\left(\eta_t\right) \succeq \sigma_\eta^2 I_{n_u} \succ 0$, and sub-Gaussian  proxies $\kappa_w$, $\kappa_\eta$, respectively.
    \item[(ii)] The noises are bounded: $\|w_t\|_\infty \le w_{\max}$ and
    $\|\eta_t\|_\infty \le \eta_{\max}$ almost surely.
    \item[(iii)]  The distributions of $w_t$ and $\eta_t$ are semi-continuous.
\end{enumerate}
\end{assumption}

\vspace{3pt}

In Assumption \ref{ass:bounded-iid-noise}, the  properties in (i) are standard in the literature of  finite-sample analysis of system identification  \cite{simchowitz2018learning,simchowitz2020naive,sattar2022finite}. The  property (ii) assumes  boundedness on  process noises and control inputs. Though boundedness is stronger than the sub-Gaussian assumption in the linear system identification literature, it is a common assumption for nonlinear system identification \cite{mania2022active,shi2021meta,li2023online}.  Further, in many physical applications, noises are usually bounded, e.g. the wind disturbances in drone systems are bounded, the renewable energy injections in power systems are also bounded, etc. The property (iii) on semi-continuity  may seem restrictive, since it rules out the discrete distributions. However, the disturbances in many realistic systems can satisfy the semi-continuity because realistic noises are usually generated from a mixture distribution where at least one component is continuous. Finally, here we impose similar conditions on $w_t$ and $\eta_t$ with possibly different parameters, which is a common practice in the literature of finite sample analysis of system ID \cite{li2023non,simchowitz2018learning,sattar2022finite}.

\vspace{4pt}

\noindent\textbf{Assumption on stability.} Finally, to ensure well-posed system behavior, we assume   locally input-to-state stable systems, whose definition is reviewed below. 

\vspace{2pt}

\begin{definition}[Local Input-to-State Stability]\label{def:liss}
Consider a general nonlinear system $x_{t+1} = f(x_t, d_t)$ with  $x_t \in \mathbb{R}^{n_x}$,  $d_t \in \mathbb{R}^{n_d}$, and
$f(0,0) = 0$. The  system is called  \emph{locally input-to-state stable (LISS)} 
if there exist parameters $\rho_x,\rho_d>0$,  and functions $\beta(\cdot, \cdot) \in \mathcal{KL}$ and $
\gamma(\cdot) \in \mathcal{K}$ such that, for every initial state
$\|x_0\|_2 \le \rho_x$ and every input sequence 
$\sup_{t \ge 0}\|d_t\|_\infty \le \rho_d$, we have
\[
  \|x_t\|_2 \;\le\; \beta\big(\|x_0\|_2,\, t\big)
  + \gamma\Big(\sup_{s \ge 0}\|d_s\|_\infty\Big), \qquad \forall\, t \ge 0 .
\]

\end{definition}

\vspace{6pt}

For simplicity, we first assume that our system \eqref{eq:sys} is open-loop stable in Assumption \ref{ass:liss} below. In Section \ref{subsec: randomly perturbed policy LSE}, we will relax this assumption to closed-loop stability with a known stabilizing controller $\pi$ as in \eqref{equ: randomly perturbed control}. The analysis and theoretical results are very similar.
\begin{assumption}[Open-Loop Stability]\label{ass:liss}
System \eqref{eq:sys} is LISS under control inputs $u_t =\eta_t$  with $d_t=(\eta_t, w_t)$ and  parameter  $\rho_d\geq \max(\eta_{\max}, w_{\max})$.

\end{assumption}

We remark that stability is usually assumed in the finite sample analysis of system identification of nonlinear systems \cite{foster2020learning,sattar2022non,li2023online,lin2024online,sattar2025finite}. In this paper, we adopt the notion of local input-to-state stability (ISS), which is a standard stability notion for nonlinear systems and is less restrictive than global ISS. Together with the bounded noises imposed by Assumption \ref{ass:bounded-iid-noise}, Assumption \ref{ass:liss} guarantees bounded state and action trajectories, which are crucial for our theoretical analysis. It is worth mentioning that these bounded states and actions are commonly assumed in general learning-based nonlinear system identification and control literature \cite{sattar2022non,foster2020learning,li2023online}.

Lastly, we comment on the tradeoff/relation between the assumptions on system stability and disturbance. If disturbances are unbounded, then certain global stability properties are needed, as  in the literature \cite{foster2020learning,sattar2022non,li2023online,lin2024online}. By contrast, for bounded disturbances, local stability can be sufficient as long as the stability region covers the reachability set. Since real-world systems typically encounter bounded disturbances/inputs and generally only satisfy local stability~\cite{slotine1991applied}, we address this trade-off through our current combination of assumptions. It is left for future  to study other combination of assumptions. 

%% file: Main_Results.tex
\section{Theoretical Results}\label{sec:results}
This section provides the convergence rates of least square estimation \eqref{eq:ols-eq} for linearly parameterized nonlinear systems \eqref{eq:sys}. In the following, we  first develop the block-martingale small-ball (BMSB) condition for our LPN systems, which is a crucial condition for the finite sample analysis of  system identification for both linear  systems  and LPN systems. Based on the BMSB condition, we  formally establish  the non-asymptotic convergence rate of LSE under i.i.d. random inputs \eqref{equ: iid input} in Section \ref{subsec: random input LSE}. Lastly, Section \ref{subsec: randomly perturbed policy LSE} generalizes the  results to randomly perturbed control policies. 

\subsection{Block-Martingale Small-Ball Condition}
The convergence rate analysis throughout this paper relies on a crucial condition introduced in \cite{simchowitz2018learning}: block-martingale small-ball (BMSB) condition, which is  reviewed below. 

\vspace{4pt}

\begin{definition}[Block-Martingale Small-Ball Condition~\cite{simchowitz2018learning}]\label{def:bmsb}
Let $\{\mathcal{F}_{t}\}_{t \geq 0}$ denote a filtration and let $\{y_{t}\}_{t \geq 0}$ be an $\{\mathcal{F}_{t}\}_{t \geq 0}$-adapted random process taking values in $\mathbb{R}^{n_{y}}$. We say $\{y_{t}\}_{t \geq 0}$ satisfies the $(k, \Gamma_{sb}, p)$-block martingale small-ball (BMSB) condition for a positive integer $k$, a  matrix $\Gamma_{sb} \succ 0$, and a scalar $p \in (0, 1)$ if the process $\{y_{t}\}_{t \geq 0}$ satisfies
\begin{align*}
    \frac{1}{k} \sum_{i=1}^{k} \mathbb{P}\big( |v^{\intercal} y_{t+i}| \geq \sqrt{v^{\intercal} \Gamma_{sb} v}\ \big|\ \mathcal{F}_{t}\big) \geq p
\end{align*}
almost surely for any $t \geq 0$   and for any  $v \in \mathbb{R}^{n_{y}}$ such that $\|v\|_{2} = 1$.
\end{definition}

\vspace{4pt}

The BMSB condition is a key condition in the finite-sample analysis of system ID  for both linear systems \cite{simchowitz2018learning} and nonlinear systems \cite{sattar2022finite}. Roughly speaking, BMSB requires that, over any time block of length $k$,  the trajectory $\{y_t\}_{t\geq 0}$ cannot remain concentrated in a $(n_y-1)$-dimension subspace orthogonal to a unit direction vector $v$ for any direction $v$ from a unit sphere. This directional anti-concentration condition provides sufficient exploration across all directions as the total horizon $T$ grows, enabling the success of system identification. Further, the BMSB condition has a close relation with  the classic persistent excitation condition in the  control literature \cite{ljung1998system}. In particular,  BMSB has been shown to imply persistent excitation  with high probability (see Lemma 4.1 in \cite{li2024icml}), so it can be viewed as a stochastic/probabilistic version of persistent excitation.

One of our major technical contributions  is formally establishing the BMSB condition for LPN systems, which is introduced  in the theorem below.

\begin{theorem}[BMSB]\label{th:BMSB-o-l}
Under Assumptions~\ref{ass:analytic},~\ref{ass:bounded-iid-noise},~\ref{ass:liss} and i.i.d. random inputs \eqref{equ: iid input}, there exist positive constants $s_{\phi}>0$ and $0<p_{\phi} <1$ such that the feature vectors $\{\phi(x_{t}, u_{t})\}_{t \geq 0}$ satisfy the $(1,\, s_{\phi}^{2}I_{n_{\phi}},\, p_{\phi})$-BMSB condition, where the corresponding filtration is $\mathcal{F}_{t} = \mathcal{F}(w_{0}, \dots, w_{t-1}, x_{0}, \dots, x_{t}, \eta_{0}, \dots, \eta_{t})$.

\end{theorem}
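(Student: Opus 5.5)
The plan is to reduce the BMSB condition with $k=1$ to a uniform anti-concentration statement for a single step. Set $y_t=\phi(x_t,u_t)$. Since $x_{t+1}=\Theta_*\phi(x_t,\eta_t)+w_t$ and $u_{t+1}=\eta_{t+1}$, conditional on $\mathcal{F}_t$ we have $y_{t+1}=\phi(c_t+w_t,\eta_{t+1})$. Here $c_t:=\Theta_*\phi(x_t,\eta_t)$ is $\mathcal{F}_t$-measurable. The pair $(w_t,\eta_{t+1})$ is independent of $\mathcal{F}_t$ and has the product law $\mathbb{P}_w\otimes\mathbb{P}_\eta$. It therefore suffices to find $s_\phi>0$ and $p_\phi\in(0,1)$ such that
\[
h_s(c,v):=\mathbb{P}\big(|v^{\intercal}\phi(c+w,\eta)|\ge s_\phi\big)\ge p_\phi
\]
for all unit $v$ and all $c$ in a compact set $\mathcal{K}$ that contains every $c_t$ almost surely.

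The first step is to construct $\mathcal{K}$. By Assumption~\ref{ass:liss} with $x_0=0$ and the bounds $\|w_t\|_\infty\le w_{\max}$, $\|\eta_t\|_\infty\le\eta_{\max}\le\rho_d$ from Assumption~\ref{ass:bounded-iid-noise}, we get $\|x_t\|_2\le\gamma(\rho_d)$ for all $t$. Since $\phi$ is continuous, $c_t$ lies in the compact set $\mathcal{K}=\Theta_*\phi(\bar B)$, where $\bar B$ is the closed bounded box of admissible $(x,\eta)$.

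The second step is a pointwise non-degeneracy bound. Fix a unit vector $v$. The function $g_v=v^{\intercal}\phi$ is real-analytic by Assumption~\ref{ass:analytic}, and $g_v\not\equiv 0$ because the features are linearly independent. I will invoke the standard fact that the zero set of a real-analytic function on $\mathbb{R}^n$ that is not identically zero has Lebesgue measure zero (proved by induction on dimension, or via Fubini and the isolated zeros of one-variable analytic functions). The same holds for any translate, so $Z_{c,v}=\{(w,\eta): g_v(c+w,\eta)=0\}$ is Lebesgue-null. Next I need that the product law puts mass strictly less than $1$ on null sets, and uniformly so. By the Lebesgue decomposition, $\mathbb{P}_w=\alpha_w\mu_w^{ac}+(1-\alpha_w)\mu_w^{s}$, and semi-continuity forces $\alpha_w>0$. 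Otherwise $\mathbb{P}_w$ would be purely singular and concentrated on a null set. The same decomposition holds for $\eta$ with $\alpha_\eta>0$. The product of the two absolutely continuous parts is absolutely continuous on $\mathbb{R}^{n_x+n_u}$, so $\mathbb{P}(Z_{c,v})\le 1-\alpha_w\alpha_\eta$. Hence $\mathbb{P}(g_v(c+w,\eta)\neq0)\ge q:=\alpha_w\alpha_\eta$ for every $(c,v)$.

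The third step upgrades this to a uniform positive threshold, which I expect to be the main technical point. For each fixed $(c,v)$, continuity of measure gives $\mathbb{P}(|g_v(c+w,\eta)|>s)\uparrow\mathbb{P}(g_v\neq 0)\ge q$ as $s\downarrow 0$. So some $s(c,v)>0$ achieves probability strictly greater than $q/2$. The map $(c,v)\mapsto\mathbb{P}(|v^{\intercal}\phi(c+w,\eta)|>s)$ is lower semicontinuous. The reason is that the indicator of the open set $\{|\cdot|>s\}$ composed with the continuous map $(c,v)\mapsto v^{\intercal}\phi(c+w,\eta)$ is lower semicontinuous, and Fatou's lemma transfers this to the expectation. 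Thus the bound $>q/2$ persists on an open neighborhood of $(c,v)$. Covering the compact set $\mathcal{K}\times\mathbb{S}^{n_\phi-1}$ by finitely many such neighborhoods and taking $s_\phi=\min_j s(c_j,v_j)>0$ and $p_\phi=q/2$ yields $h_s(c,v)\ge p_\phi$ everywhere. This gives the $(1,s_\phi^2 I_{n_\phi},p_\phi)$-BMSB condition almost surely. The delicate points are exactly this compactness and semicontinuity argument and the measure-zero property of analytic zero sets. Both are where analyticity and semi-continuity are used, and the counterexamples in Section~\ref{sec:counter example} presumably break precisely these steps.
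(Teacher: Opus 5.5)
Your proof is correct. It shares the paper's skeleton: reduce to the one-step conditional law of $\phi(\Theta_*\phi(z_t)+w_t,\eta_{t+1})$ given $\mathcal{F}_t$, confine the $\mathcal{F}_t$-measurable part to a compact set via LISS and bounded noise, use that a nontrivial real-analytic function has a Lebesgue-null zero set, and use the Lebesgue decomposition to show the product noise law cannot put full mass on that null set.

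You genuinely depart from the paper in how pointwise non-degeneracy is upgraded to uniform constants. The paper applies the Paley--Zygmund inequality conditionally on $\mathcal{F}_t$. It shows that the second moment $\mathbb{E}[(v^\intercal\phi(\Theta_*\phi(z)+w,\eta))^2]$ is positive for each $(v,z)$ and continuous by dominated convergence, so it has a positive minimum over the compact set $\mathbb{S}\times\mathcal{Z}$. It then bounds the fourth moment by $\phi_{\max}^4$. This gives $s_\phi=r\sqrt{\text{Term 1}}$ and $p_\phi=(1-r^2)^2\,\text{Term 2}\ge(1-r^2)^2\,\text{Term 1}^2/\phi_{\max}^4$.

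You bypass moments and work directly with tail probabilities, using lower semicontinuity via Fatou and a finite subcover. This is fed by your uniform bound $\mathbb{P}(Z_{c,v})\le 1-\alpha_w\alpha_\eta$, which is stronger than the paper's pointwise $\mathbb{P}(\mathcal{N}_{v,z})<1$.

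Your route makes the probability level explicit and independent of the system. Any $p_\phi\in(0,\alpha_w\alpha_\eta)$ works, so $p_\phi$ can be arbitrarily close to $1$ for absolutely continuous noise. All system dependence is pushed into a purely existential $s_\phi$ from the subcover. The paper's route instead ties both constants to the minimal conditional second moment and to $\phi_{\max}$. That is the natural handle for the explicit formulas for $(s_\phi,p_\phi)$ the authors leave as future work, at the price of a Paley--Zygmund $p_\phi$ that is generally well below $1$.
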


\vspace{2pt}

The proof of Theorem \ref{th:BMSB-o-l} is provided in Section \ref{subsec: proof of Thm 1}. 

Next, we provide discussions on Theorem \ref{th:BMSB-o-l}. Firstly, it is worth mentioning that BMSB is a crucial enabling condition for the recent finite-sample analysis of system identification. Therefore, Theorem \ref{th:BMSB-o-l} lays foundation for the convergence rate analysis for a broad range of system identification methods. Though this section  mainly studies the least square estimation,  we will show that another system identification method, set-membership estimation, can also be analyzed for its convergence rate under the stochastic setting, benefiting from the BMSB condition.

Secondly, Theorem \ref{th:BMSB-o-l} assumes open loop stability and establishes BMSB under i.i.d. random inputs \eqref{equ: iid input}. In the next subsection, we will relax it to closed-loop stability and establish BMSB under randomly perturbed closed-loop policies \eqref{equ: randomly perturbed control}. The analysis and results are similar, but the applicability is greatly generalized.

Thirdly, notice that our theorem uses block length $k=1$, so essentially, we have proved that our system satisfies the marginal small ball condition due to block length being 1. This is a common effect in the system identification literature, for example, linear systems and bilinear systems all satisfy BMSB with $k=1$. 

Lastly, Theorem \ref{th:BMSB-o-l} establishes the existence of proper BMSB parameters $s_{\phi}, p_{\phi}$ that are independent of  horizon $T$. This is sufficient for establishing the $O(\frac{1}{\sqrt T})$ convergence rate in Theorem \ref{thm: LSE open loop}.  The explicit formulas of $s_{\phi}, p_{\phi}$ and their dependence on dimensionality are left as future work. It is particularly challenging to establish  generic formulas for all LPN systems, so an interesting future direction is to study reasonable sub-classes of nonlinear systems and discuss the dependence of constants $(s_{\phi}, p_{\phi})$ on the system characteristics.

\subsection{Convergence Rates of Least Square Estimation}\label{subsec: random input LSE}
In this subsection, we introduce and discuss the non-asymptotic convergence rate of least square estimation \eqref{eq:ols-eq} when learning the unknown parameters in LPN systems \eqref{eq:sys}.

\begin{theorem}[Convergence Rate of LSE]\label{thm: LSE open loop}
Consider Assumptions~\ref{ass:analytic},~\ref{ass:bounded-iid-noise},~\ref{ass:liss} and i.i.d. random inputs \eqref{equ: iid input}. {There exists $\phi_{\max}<+\infty$ such that $\|\phi(x_t,u_t)\|_2 \leq \phi_{\max}$ for all $t$}. Further,
for any $\delta \in (0, 1/3)$, when the horizon $T$ is large enough, i.e.,
\begin{equation*}
\begin{aligned}
    T \;\ge\; \frac{10}{p_{\phi}^2}\bigg(
    &\log\big(1/\delta\big)
    + 2n_{\phi}\log\bigg(\frac{10\phi_{\max}}{p_{\phi}s_{\phi}}\bigg)\bigg),
\end{aligned}
\end{equation*}
with probability at least $1-3\delta$, the least square estimator \eqref{eq:ols-eq} enjoys the following non-asymptotic estimation error bound:
\begin{equation*}
\begin{aligned}
    &\big \|\hat{\Theta}_{T} - \Theta_{*} \big\|_{2}
    \leq\frac{90 \kappa_w}{p_{\phi}} \sqrt{ \frac{n_{x} +\log\Big(\frac{1}{\delta}\Big) + n_{\phi}\log\Big(\frac{10 \phi_{\max}^2}{p_{\phi}s_{\phi}^{2}}\Big)}{T s_{\phi}^{2}}},
\end{aligned}
\end{equation*}
where $s_{\phi}$ and $p_{\phi}$ are defined  in Theorem~\ref{th:BMSB-o-l}.

\end{theorem}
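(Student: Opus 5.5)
The proof will invoke the generic finite-sample bound for least squares with martingale noise due to Simchowitz et al. (Theorem 2.4 in \cite{simchowitz2018learning}), applied with covariate process $z_t:=\phi(x_t,u_t)$ and response $x_{t+1}=\Theta_* z_t+w_t$. That result needs three inputs: a BMSB condition on $\{z_t\}$, an almost-sure upper bound $\bar\Gamma$ on the empirical Gram matrix, and conditional sub-Gaussianity of $w_t$.

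\emph{Step 1: the bound $\phi_{\max}$.}
By Assumption~\ref{ass:bounded-iid-noise}(ii), $\|d_t\|_\infty\le\max(\eta_{\max},w_{\max})\le\rho_d$. Since $x_0=0$, Assumption~\ref{ass:liss} gives
\[
\|x_t\|_2\le \beta(0,t)+\gamma(\rho_d)=\gamma(\rho_d)
\]
for all $t$. Hence $(x_t,u_t)$ stays in the compact set $\mathcal{C}=\{\|x\|_2\le\gamma(\rho_d),\ \|u\|_\infty\le\eta_{\max}\}$. Each $\phi_i$ is real-analytic, hence continuous, so $\phi_{\max}:=\sup_{\mathcal C}\|\phi\|_2<\infty$. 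This yields the existence claim and the deterministic bound
\[
\sum_{t=0}^{T-1} z_tz_t^\intercal\preceq T\phi_{\max}^2 I_{n_\phi}=:T\bar\Gamma .
\]

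\emph{Step 2: checking the remaining hypotheses.}
Theorem~\ref{th:BMSB-o-l} gives the $(1,s_\phi^2 I,p_\phi)$-BMSB condition with filtration $\mathcal F_t$. Note that $z_t$ is $\mathcal F_t$-measurable. Also, $w_t$ is independent of $\mathcal F_t$, zero mean, and $\kappa_w$-sub-Gaussian, so it is conditionally sub-Gaussian given $\mathcal F_t$. All hypotheses of the Simchowitz bound therefore hold with $k=1$, $\Gamma_{sb}=s_\phi^2 I$, and $\bar\Gamma=\phi_{\max}^2 I$.

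\emph{Step 3: bookkeeping.}
The remaining work is to substitute these quantities into the Simchowitz theorem. The burn-in requirement
\[
T\ge \frac{10k}{p^2}\Big(\log\frac1\delta+2n_\phi\log(10/p)+\log\det(\bar\Gamma\Gamma_{sb}^{-1})\Big)
\]
becomes the stated condition once we use $\log\det(\bar\Gamma\Gamma_{sb}^{-1})=2n_\phi\log(\phi_{\max}/s_\phi)$ and merge the logarithms. This gives $2n_\phi\log\big(10\phi_{\max}/(p_\phi s_\phi)\big)$. The error bound
\[
\frac{90\kappa_w}{p}\sqrt{\frac{n_x+\log(1/\delta)+\log\det(\bar\Gamma\Gamma_{sb}^{-1})+n_\phi\log(10/p)}{T\lambda_{\min}(\Gamma_{sb})}}
\]
likewise becomes the stated one, since $n_\phi\log(\phi_{\max}^2/s_\phi^2)+n_\phi\log(10/p_\phi)=n_\phi\log\big(10\phi_{\max}^2/(p_\phi s_\phi^2)\big)$ and $\lambda_{\min}(\Gamma_{sb})=s_\phi^2$. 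The failure probability $3\delta$ matches that theorem's conclusion.

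The only nontrivial ingredient is the BMSB property, which Theorem~\ref{th:BMSB-o-l} already supplies. Given that, the main point needing care is that the Simchowitz bound applies to multi-output regression ($n_x$ outputs) and to general adapted covariates, not only to linear-system states. I would double-check that its proof uses only adaptedness, BMSB, the Gram upper bound, and conditional sub-Gaussian noise. It does, and in the almost-sure form used here the Gram bound holds with probability one, so it adds no extra failure probability.
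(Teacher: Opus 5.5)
Your proposal is correct and follows essentially the same route as the paper. You bound $\phi(x_t,u_t)$ using LISS with $x_0=0$ and continuity on a compact set, then apply the Simchowitz et al.\ general LSE bound with $k=1$, $\Gamma_{sb}=s_\phi^2 I_{n_\phi}$, the almost-sure Gram bound $\bar\Gamma=\phi_{\max}^2 I_{n_\phi}$, and BMSB from Theorem~\ref{th:BMSB-o-l}; your bookkeeping $\log\det(\bar\Gamma\Gamma_{sb}^{-1})=2n_\phi\log(\phi_{\max}/s_\phi)$ reproduces exactly the stated burn-in condition and error bound.
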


The proof is provided in Section \ref{subsec: LSE convergence}. 

We provide the following discussions below.
Firstly, Theorem \ref{thm: LSE open loop} shows that least square estimator $\hat \Theta_T$ converges to the true parameter matrix $\Theta_*$  at a rate of $\frac{1}{\sqrt T}$ when learning LPN system parameters \eqref{eq:sys} under i.i.d. random inputs \ref{equ: iid input}. This convergence rate is  consistent with the convergence rates of LSE for linear systems \cite{simchowitz2018learning,simchowitz2020naive} and bilinear systems \cite{sattar2022finite} under  i.i.d. random inputs. Therefore, we have shown that, for a broader range of linearly parameterized nonlinear systems, LSE with non-active exploration still enjoys the same convergence rate in terms of $T$ when considering real-analytic feature functions and semi-continuous noise distributions.

Secondly, the convergence rate relies on a large enough $T$, and the lower bound for $T$ in Theorem \ref{thm: LSE open loop} is a constant that does not increase with $T$. Therefore, as $T$ increases, the condition on $T$ can be naturally satisfied.

Thirdly,  our convergence rate relies on  uniformly bounded feature vectors for all $t$, i.e., $\|\phi(x_t,u_t)\|_2 \leq \phi_{\max}$. This is enabled by the bounded noise assumed in Assumption \ref{ass:bounded-iid-noise} and the local input-to-state stability assumed in Assumption \ref{ass:liss}. The boundedness of feature vectors is commonly assumed in the nonlinear system identification literature \cite{mania2022active,tupe2026federated,wang2025logarithmic}.

Fourthly, regarding the dimension dependence in the convergence rate,  the explicit dependence is  $\sqrt{n_x+n_\phi}$, where $n_x$ and $n_\phi$ refer to the dimensions of the state and the feature vector, respectively. But it is worth mentioning that the parameters $s_{\phi}, p_{\phi}$ may implicitly depend on the dimensions as well. For some special systems, such as bilinear systems, it has been shown that these constants are independent of the dimensions~\cite{sattar2022finite}. It is left as future work to explore other nonlinear systems’ implicit dimension dependence.

Finally, Section \ref{sec:num-exp} will discuss numerical methods to calculate these parameters and  present numerical comparisons of our theoretical bound in Theorem \ref{thm: LSE open loop} and the empirical estimation errors for different systems and noise distributions.

\subsection{Randomly Perturbed Control Policies}\label{subsec: randomly perturbed policy LSE}

So far, we only study the case where the LPN system \eqref{eq:sys} is open-loop stable, as  stated in Assumption \ref{ass:liss}. However, in practice, most nonlinear systems are not stable and have to be stabilized by certain well-designed controllers.  Therefore, this subsection generalizes the results in the previous subsections to nonlinear systems with a known stabilizing controller.

First, we state the following assumption on a  control policy that guarantees closed-loop stability. 
\begin{assumption}[Stabilizing Control]\label{ass: stabilizing controller}
    There is a real-analytic control policy  $\pi(\cdot)$  such that the closed-loop system with plant dynamics \eqref{eq:sys} and control policy \eqref{equ: randomly perturbed control} is LISS with $d_t=(\eta_t, w_t)$ and  parameter  $\rho_d\geq \max(\eta_{\max}, w_{\max})$.
\end{assumption}

\vspace{2pt}

Compared with Assumption \ref{ass:liss}, Assumption \ref{ass: stabilizing controller} provides flexibility to choose any controller $\pi$ that stabilizes the closed-loop system. If the policy $\pi\equiv 0$, Assumption \ref{ass: stabilizing controller} becomes identical to Assumption \ref{ass:liss}. 
Determining a stabilizing policy $\pi$ for uncertain  systems has been studied in robust control \cite{dullerud2013course,lu2019robust}. Relaxation from open-loop stability to closed-loop stability with a known stabilizing policy is common in system identification literature \cite{li2023non,simchowitz2018learning}.

Based on closed-loop stability, we establish the BMSB condition and the convergence rate of LSE. 
\begin{corollary}[Randomly Perturbed Control Policy]\label{cor: bmsb closed loop} Under Assumptions~\ref{ass:analytic},~\ref{ass:bounded-iid-noise}, considering a randomly perturbed control policy \eqref{equ: randomly perturbed control} that satisfy Assumption \ref{ass: stabilizing controller},  the feature vectors $\{\phi(x_{t}, u_{t})\}_{t \geq 0}$ satisfy   $(1,\, \tilde s_{\phi}^{2}I_{n_{\phi}},\, \tilde p_{\phi})$-BMSB condition with $\tilde s_{\phi}>0$ and $0<\tilde p_{\phi} <1$. Further, there exists $\tilde \phi_{\max}<+\infty$ such that $\|\phi(x_t,u_t)\|_2 \leq \tilde \phi_{\max}$ for all $t$. Consequently, LSE  enjoys the convergence rate  $O(\frac{1}{\sqrt T})$.
\end{corollary}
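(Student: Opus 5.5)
The plan is to reduce the closed-loop case to the open-loop argument behind Theorem~\ref{th:BMSB-o-l} by a change of feature map, and then invoke Theorem~\ref{thm: LSE open loop} verbatim.

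\textbf{Reformulation.} Under \eqref{equ: randomly perturbed control}, set $\tilde\phi(x,\eta) := \phi(x, \pi(x)+\eta)$. Then the closed loop reads $x_{t+1}=\Theta_*\tilde\phi(x_t,\eta_t)+w_t$, driven by the i.i.d. input $\eta_t$. Because $\pi$ and $\phi$ are real-analytic, the composition $\tilde\phi$ is real-analytic on $\mathbb R^{n_x+n_u}$. This verifies Assumption~\ref{ass:analytic} for $\tilde\phi$.

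Linear independence also transfers. The map $(x,\eta)\mapsto(x,\pi(x)+\eta)$ is a bijection of $\mathbb R^{n_x+n_u}$ with inverse $(x,u)\mapsto(x,u-\pi(x))$. Hence $c^\intercal\tilde\phi\equiv0$ if and only if $c^\intercal\phi\equiv0$. Assumption~\ref{ass: stabilizing controller} is exactly Assumption~\ref{ass:liss} for this reformulated system. Assumption~\ref{ass:bounded-iid-noise} concerns only $w_t,\eta_t$ and is unchanged.

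\textbf{BMSB.} Apply Theorem~\ref{th:BMSB-o-l} to the system with feature map $\tilde\phi$. This gives a $(1,\tilde s_\phi^2 I,\tilde p_\phi)$-BMSB condition for $\tilde\phi(x_t,\eta_t)=\phi(x_t,u_t)$ with respect to the same filtration; note $u_t$ is $\mathcal F_t$-measurable.

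One subtlety is that Theorem~\ref{th:BMSB-o-l} may have used the equilibrium condition $\Theta_*\phi(0,0)=0$. The reformulated system has equilibrium condition $\Theta_*\tilde\phi(0,0)=\Theta_*\phi(0,\pi(0))=0$, which is implied by the LISS definition requiring $f(0,0)=0$ in Assumption~\ref{ass: stabilizing controller}. The main obstacle is checking that the proof of Theorem~\ref{th:BMSB-o-l} uses only analyticity, linear independence, semi-continuity and boundedness, and not the product structure $u_t=\eta_t$ specifically. If it relies on a Lebesgue-null zero-set argument for $v^\intercal\phi(x,\cdot)$ over $\eta$ for fixed $x$, I would argue as follows. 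For each fixed $x$, the function $\eta\mapsto v^\intercal\phi(x,\pi(x)+\eta)$ is a translate of $u\mapsto v^\intercal\phi(x,u)$. Its zero set is therefore null, or the function vanishes identically, exactly as before. Translation preserves Lebesgue measure, so the semi-continuity argument carries over unchanged.

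\textbf{Boundedness and rate.} By LISS with $x_0=0$ and $\|d_t\|_\infty\le\rho_d$, we get $\|x_t\|_2\le\gamma(\rho_d)$. Then $\|u_t\|\le\sup_{\|x\|\le\gamma(\rho_d)}\|\pi(x)\|+\sqrt{n_u}\,\eta_{\max}$, which is finite by continuity of $\pi$ on a compact set. Continuity of $\phi$ on the resulting compact set gives $\tilde\phi_{\max}$. The proof of Theorem~\ref{thm: LSE open loop} depends only on the BMSB condition, the bound $\phi_{\max}$, and sub-Gaussian martingale noise $w_t$ independent of $\mathcal F_t$. Substituting $(\tilde s_\phi,\tilde p_\phi,\tilde\phi_{\max})$ therefore yields the $O(1/\sqrt T)$ bound.
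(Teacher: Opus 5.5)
Your proposal is correct and follows the paper's proof. You rewrite the closed loop with the composed feature map $\tilde\phi(x,\eta)=\phi(x,\pi(x)+\eta)$ driven by the i.i.d. input $\eta_t$, verify Assumptions \ref{ass:analytic}, \ref{ass:bounded-iid-noise}, \ref{ass:liss} for it, and invoke Theorems \ref{th:BMSB-o-l} and \ref{thm: LSE open loop}. Your check that linear independence survives the shear $(x,\eta)\mapsto(x,\pi(x)+\eta)$ fills a step the paper leaves implicit; it is exactly what Lemma \ref{lem: real analytic hvz} needs for $\tilde\phi$. Your fixed-$x$ translation hedge is unnecessary, because the paper's zero-set argument is joint in $(w,\eta)$ and applies verbatim to $\tilde\phi$.
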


\vspace{3pt}

The proof is  similar to the proofs of Theorem \ref{th:BMSB-o-l} and Theorem \ref{thm: LSE open loop} and deferred to Appendix~\ref{appendix: LSE closed loop cor}.

%% file: proof_main.tex
\section{Main Proofs}\label{sec: main text proofs}
This section provides the proof of our main technical results: BMSB condition and LSE's convergence rate (Theorems \ref{th:BMSB-o-l},  \ref{thm: LSE open loop}).

\subsection{Proof of Theorem \ref{th:BMSB-o-l}}\label{subsec: proof of Thm 1}
The proof utilizes the real-analyticity of feature functions and the semicontinuity of the noise distributions to establish the BMSB condition. We will present the proof in five steps.

\vspace{2pt}

\noindent\textbf{Step 1: Preparation.} This step introduces useful shorthand notations and supporting lemmas. 

First, we introduce a convenient notation $z_t=(x_t, u_t)$ to represent the state-action pair for the rest of this paper and let $n_z\coloneqq n_x+n_u$.  $z_t$ is bounded by a compact set $\mathcal Z$ as shown in the  supporting lemma below.

\begin{lemma}\label{lem: compact Z}
    Under Assumptions \ref{ass:bounded-iid-noise}, \ref{ass:liss}, and i.i.d. random inputs \eqref{equ: iid input}, there exists a compact set $\mathcal Z$ such that $z_t\in \mathcal Z$ for all $t\geq 0$.
\end{lemma}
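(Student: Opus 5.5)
The plan is to apply the LISS bound of Assumption~\ref{ass:liss} directly along the closed-loop trajectory, using $x_0=0$ and the almost-sure noise bounds of Assumption~\ref{ass:bounded-iid-noise}(ii). We view the system \eqref{eq:sys} under $u_t=\eta_t$ as $x_{t+1}=f(x_t,d_t)\coloneqq \Theta_*\phi(x_t,\eta_t)+w_t$ with $d_t=(\eta_t,w_t)$. Since $\Theta_*\phi(0,0)=0$, we have $f(0,0)=0$, so Definition~\ref{def:liss} applies.

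\textbf{Step (a): state bound.} Almost surely $\sup_{s\ge 0}\|d_s\|_\infty\le \max(\eta_{\max},w_{\max})\le \rho_d$. Also $\|x_0\|_2=0\le\rho_x$. Definition~\ref{def:liss} therefore gives, for all $t\ge 0$,
\[
\|x_t\|_2\le \beta(0,t)+\gamma\Big(\sup_{s}\|d_s\|_\infty\Big)\le \gamma\big(\max(\eta_{\max},w_{\max})\big)\eqqcolon \bar x .
\]
Here $\beta(0,t)=0$ because $\beta(\cdot,t)\in\mathcal K$, and the last inequality uses that $\gamma$ is increasing.

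\textbf{Step (b): input bound and the set $\mathcal Z$.} The input satisfies $\|u_t\|_\infty=\|\eta_t\|_\infty\le\eta_{\max}$ almost surely, hence $\|u_t\|_2\le\sqrt{n_u}\,\eta_{\max}$. Set
\[
\mathcal Z\coloneqq\{(x,u):\|x\|_2\le \bar x,\ \|u\|_\infty\le \eta_{\max}\}.
\]
This is a product of closed bounded sets in $\R^{n_z}$, so it is compact by Heine--Borel, and $z_t\in\mathcal Z$ for all $t$ almost surely.

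\textbf{Main subtlety.} The only delicate point is that LISS is stated for deterministic input sequences, whereas here $d_t$ is random. We handle this by applying the definition pathwise on the probability-one event where all $\|d_s\|_\infty\le\rho_d$. If a statement for every realization is needed rather than an almost-sure one, we instead take $\eta_t,w_t$ supported in the bounded boxes, so the claim holds surely.
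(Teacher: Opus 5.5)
Your proposal is correct and follows essentially the same argument as the paper: bound $u_t=\eta_t$ by the $\eta_{\max}$ box, then bound $x_t$ via the LISS estimate using $x_0=0$, $\beta(0,t)=0$ and the monotonicity of $\gamma$, and take $\mathcal Z$ as the compact product of the two sets. Your extra checks, namely $f(0,0)=0$, $\|x_0\|_2\le\rho_x$, and applying LISS pathwise on the probability-one event where the noise bounds hold, are sound refinements that the paper leaves implicit.
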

\begin{proof}
Let's first establish the bound on $u_t$. 
    Since $u_t=\eta_t$ and $\eta_t$ is bounded according to Assumption \ref{ass:bounded-iid-noise}, the control inputs are all bounded by a compact set $\mathcal U\coloneqq \{u: \|u\|_\infty\leq \eta_{\max}\}$. 

Next, we establish the bound on $x_t$. By Assumption \ref{ass:liss}, 
\[
  \|x_t\|_2 \le \beta\big(\|x_0\|_2,\, t\big)
  + \gamma\Big(\sup_{s \ge 0}\|d_s\|_\infty\Big), \qquad \forall\, t \ge 0,
\]

\vspace{-5pt}
\noindent{where} $d_t=(\eta_t, w_t)$. By Assumption \ref{ass:bounded-iid-noise}, we have $\|d_t\|_\infty \leq \max(\|\eta_t\|_\infty, \|w_t\|_\infty)\leq \max(\eta_{\max}, w_{\max})$ for all $t$. Further, recall that we consider $x_0=0$ for simplicity and $\beta(0, t)=0$ by the definition of $\mathcal{KL}$ functions. Therefore, we have
\begin{align*}
    \|x_t\|_2 \le \gamma\Big(\max(\eta_{\max}, w_{\max})\Big), \quad \forall \, t\geq 0.
\end{align*}

\vspace{-4pt}
\noindent{Define} $\mathcal X\coloneqq \{x: \|x\|_2 \leq \gamma\Big(\max(\eta_{\max}, w_{\max})\Big) \}$. Set $\mathcal X$ is compact and  $x_t \in \mathcal X$ for all $t$.

Let $\mathcal Z \coloneqq\mathcal X \times \mathcal U $. 
Notice that $\mathcal Z$ is compact because  $\mathcal X$ and $\mathcal U$ are compact, and $z_t =(x_t, u_t) \in \mathcal X \times \mathcal U= \mathcal Z$ for all $t$. \end{proof}

Next, we can prove the boundedness of the feature vector.
\begin{lemma}\label{lem: bdd phi(z) on Z}
  There exists $\phi_{\max} <+\infty$  such that $ \|\phi(z)\|_2\leq \phi_{\max}$ for any $z\in \Z$.
\end{lemma}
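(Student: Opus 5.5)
The plan is to combine the compactness of $\Z$ from Lemma~\ref{lem: compact Z} with the continuity of the feature map, and then apply the extreme value theorem. By Assumption~\ref{ass:analytic}, each component $\phi_i$, $1\le i\le n_\phi$, is real-analytic on $\R^{n_z}$. As noted after the definition of real analyticity, this makes each $\phi_i$ infinitely differentiable and, in particular, continuous on $\R^{n_z}$. Hence the vector map $\phi:\R^{n_z}\to\R^{n_\phi}$ is continuous.

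Composing with the Euclidean norm, which is continuous, the scalar function $z\mapsto \|\phi(z)\|_2$ is continuous on $\R^{n_z}$. It is therefore continuous on the subset $\Z$.

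Lemma~\ref{lem: compact Z} gives that $\Z=\mathcal X\times\mathcal U$ is compact, and it is nonempty since it contains $z_0=(0,u_0)$. The Weierstrass extreme value theorem then says $\|\phi(\cdot)\|_2$ attains a finite maximum on $\Z$. Define
\[
\phi_{\max}\coloneqq \max_{z\in\Z}\|\phi(z)\|_2<+\infty .
\]
By construction, $\|\phi(z)\|_2\le\phi_{\max}$ for every $z\in\Z$. Since $z_t\in\Z$ for all $t$ by Lemma~\ref{lem: compact Z}, this also yields the bound $\|\phi(x_t,u_t)\|_2\le\phi_{\max}$ stated in Theorem~\ref{thm: LSE open loop}.

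There is no real obstacle here. The only point needing care is that continuity has to hold on all of $\Z$, not just near the origin. This is ensured because Assumption~\ref{ass:analytic} requires analyticity on all of $\R^{n_x+n_u}$. Under the footnoted relaxation to local analyticity on the reachable set, the same argument works provided the analyticity domain contains the compact set $\Z$.
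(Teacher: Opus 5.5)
Your proof is correct and is essentially the paper's argument: $\|\phi(\cdot)\|_2$ is continuous because the features are real-analytic, $\Z$ is compact, and a continuous function on a compact set is bounded. Your remarks on the nonemptiness of $\Z$ and on the local-analyticity relaxation are harmless additions that the paper's one-line proof omits.
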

\begin{proof}
    The proof is straightforward:  function $\|\phi(z)\|_2$ is continuous, $\Z$ is compact, and a continuous function  on a compact set is always bounded.
\end{proof}

In addition, we introduce a convenient notation for the set of  noises $(\eta_t, w_t)$:
$$\mathcal D=\{\eta: \|\eta\|_\infty \leq \eta_{\max}\}\times \{w: \|w\|_\infty \leq w_{\max}\}.$$
Notice that $\mathcal D$ is compact and $(\eta_t, w_t)\in \mathcal D$ for all $t$.

Lastly, we  let $\mathbb S\coloneqq \{v \in \R^{n_{\phi}}: \|v\|_2=1\}$ to denote the set of unit direction vectors $v$ in the BMSB condition. $\Sb$ is also compact by definition.

\vspace{2pt}

\noindent\textbf{Step 2: Anti-Concentration Inequality.} To establish the $(1,\, s_{\phi}^{2}I_{n_{\phi}},\, p_{\phi})$-BMSB condition stated in Theorem \ref{th:BMSB-o-l}, it is sufficient to show that there exist positive constants $s_{\phi}, p_{\phi}$  such that
\begin{align}\label{eq:goal-bmsb}
    {\mathbb{P}}\bigg(|v^{\intercal}\phi(z_{t+1})| \geq s_{\phi} \ \big|\ {\mathcal{F}}_{t}\bigg) \geq p_{\phi}. 
\end{align}
for any $v\in \mathbb S$, any $t\geq 0$, and any filtration $\mathcal F_t$.

Notice that \eqref{eq:goal-bmsb} is essentially an anti-concentration property because it requires $\big|v^{\intercal}\phi(z_{t+1})\big|$ to deviate from 0 for a positive probability. To establish it, we rely on the following  anti-concentration inequality in the statistics literature. 
\begin{proposition}\label{prop: paley zygmund ineq}(Paley-Zygmund Inequality \cite{petrov2007lower}) 
Let $x$ be a non-negative random variable with finite fourth moment. For any $r \in (0, 1)$, the following holds:
\begin{align*} 
    \mathbb{P}\bigg(x > r \sqrt{\mathbb{E}[x^{2}]} \bigg) \geq (1 - r^{2})^{2} \dfrac{\mathbb{E}[x^{2}]^{2}}{\mathbb{E}[x^{4}]}. 
    \end{align*} 
\end{proposition}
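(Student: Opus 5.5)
The plan is to derive the bound from a single application of the Cauchy--Schwarz inequality to the random variable $y \coloneqq x^{2}$, splitting $\mathbb{E}[y]$ according to whether $y$ is small or large. Throughout, I set $m \coloneqq \mathbb{E}[x^{2}]$.

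First I would dispose of the degenerate case. If $m = 0$, then $x = 0$ almost surely, so $\mathbb{E}[x^{4}] = 0$ and the right-hand side is not defined. The statement is therefore to be read with $m > 0$, which is also the only case used in Step 2 of the proof of Theorem~\ref{th:BMSB-o-l}. Finiteness of the fourth moment gives $\mathbb{E}[x^{4}] < \infty$, and Jensen's inequality gives $\mathbb{E}[x^{4}] \geq m^{2} > 0$, so the ratio $m^{2}/\mathbb{E}[x^{4}]$ is well defined and lies in $(0,1]$.

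Next I would introduce the event $A \coloneqq \{x > r\sqrt{m}\}$. Since $x \ge 0$, this is the same event as $\{y > r^{2} m\}$, and its complement is $\{y \le r^{2} m\}$. Splitting the expectation of $y$ over $A$ and its complement gives
\begin{align*}
m = \mathbb{E}[y] = \mathbb{E}\big[y\,\mathds{1}_{A^{\mathrm c}}\big] + \mathbb{E}\big[y\,\mathds{1}_{A}\big] \le r^{2} m + \mathbb{E}\big[y\,\mathds{1}_{A}\big].
\end{align*}
The inequality holds because $y \le r^{2} m$ on $A^{\mathrm c}$.

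For the remaining term, Cauchy--Schwarz gives $\mathbb{E}[y\,\mathds{1}_{A}] \le \sqrt{\mathbb{E}[y^{2}]}\,\sqrt{\mathbb{P}(A)} = \sqrt{\mathbb{E}[x^{4}]\,\mathbb{P}(A)}$. Substituting this into the previous display yields
\begin{align*}
(1-r^{2})\, m \le \sqrt{\mathbb{E}[x^{4}]\,\mathbb{P}(A)}.
\end{align*}
Because $r \in (0,1)$, the left-hand side is nonnegative, so both sides may be squared. Dividing by $\mathbb{E}[x^{4}] > 0$ then gives $\mathbb{P}(A) \ge (1-r^{2})^{2}\, m^{2}/\mathbb{E}[x^{4}]$, which is exactly the claim.

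There is no real obstacle in this argument. The only points needing care are:
\begin{itemize}
\item the strict versus non-strict inequality in defining $A$: with the strict version, the complement is $\{y \le r^{2}m\}$, so the bound on $A^{\mathrm c}$ is valid;
\item the nonnegativity of $x$, which is what allows $\{x > r\sqrt{m}\}$ to be rewritten as $\{x^{2} > r^{2}m\}$;
\item the degenerate case $m = 0$, handled above.
\end{itemize}
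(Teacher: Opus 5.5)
Your proof is correct and complete. It is the standard Cauchy--Schwarz proof of the Paley--Zygmund inequality, applied to $y=x^{2}$ with threshold $r^{2}$. Your handling of the strict inequality is right, as is your use of $x\ge 0$ to identify $\{x>r\sqrt{m}\}$ with $\{x^{2}>r^{2}m\}$, and the bound is indeed only meaningful when $\mathbb{E}[x^{2}]>0$.

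The paper gives no proof of its own; it simply cites the result, so there is no alternative route to compare against. One point worth adding: the paper actually uses the inequality conditionally on $\mathcal{F}_t$, in \eqref{equ: bmsb paley zygmund}. Your argument carries over verbatim with conditional expectations and the conditional Cauchy--Schwarz inequality, holding almost surely. The needed positivity of the conditional second moment is what Step 3 of the proof of Theorem~\ref{th:BMSB-o-l} supplies, so it is Step 3 rather than Step 2 that justifies restricting to $m>0$.
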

    \vspace{2pt}

By leveraging Proposition \ref{prop: paley zygmund ineq}, we have
\begin{equation}
 \label{equ: bmsb paley zygmund}
\begin{aligned}
        &\mathbb{P}\Bigg( \big|v^{\intercal}\phi(z_{t+1})\big| > r \sqrt{\mathbb{E}\bigg[\big(v^{\intercal}\phi(z_{t+1})\big)^{2}\ \big|\ \mathcal{F}_{t} \bigg]}\ \ \Bigg|\ \mathcal{F}_{t} \Bigg)\\
         \geq\  &  (1 -r^{2})^{2}\dfrac{\mathbb{E}\bigg[\big(v^{\intercal}\phi(z_{t+1})\big)^{2}\ \big|\ \mathcal{F}_{t} \bigg]^{2}}{\mathbb{E}\bigg[\big(v^{\intercal}\phi(z_{t+1})\big)^{4}\ \big|\ \mathcal{F}_{t} \bigg]}.         
\end{aligned} 
\end{equation}
for any $r \in (0,1)$. The finite fourth momemnt condition in Proposition \ref{prop: paley zygmund ineq} is naturally satisfied because $\big|v^{\intercal}\phi(z_{t+1})\big| \leq \|\phi(z_{t+1})\|_2 \leq \max_{z\in \mathcal Z} \|\phi(z)\|_2<+\infty$ by Lemma \ref{lem: bdd phi(z) on Z}.

To establish the existence of positive constants $s_\phi, p_\phi$ for all $t$, $\mathcal F_t$, and $v\in \mathbb S$, the main challenge is to  prove the positiveness of the following two terms:
\begin{itemize} 
    \item $\text{Term 1}\coloneqq \underset{t\geq 0}{\inf}\ \underset{\mathcal{F}_{t}}{\inf}\ \underset{\substack{v \in \mathbb S}}{\inf}\ \mathbb{E}\big[\big(v^{\intercal}\phi(z_{t+1})\big)^{2}\ \big|\ \mathcal{F}_{t} \big] > 0$,
    \item  $\text{Term 2}\coloneqq \underset{t\geq 0}{\inf}\ \underset{\mathcal{F}_{t}}{\inf}\ \underset{\substack{v \in \mathbb S}}{\inf} \ \dfrac{\mathbb{E}\bigg[\big(v^{\intercal}\phi(z_{t+1})\big)^{2}\ \big|\ \mathcal{F}_{t} \bigg]^{2}}{\mathbb{E}\bigg[\big(v^{\intercal}\phi(z_{t+1})\big)^{4}\ \big|\ \mathcal{F}_{t} \bigg]}>0$,
\end{itemize}
which is addressed in Step 3 and Step 4, respectively.

\vspace{2pt}

\noindent\textbf{Step 3: Showing $\text{Term 1}>0$.} 
Recall that $z_{t+1}=(x_{t+1}, u_{t+1})=(\Theta_* \phi(x_t, \eta_t)+w_t, \eta_{t+1})$. Conditioning on $\mathcal{F}_{t} = \mathcal{F}(w_{0}, \dots, w_{t-1}, x_{0}, \dots, x_{t}, \eta_{0}, \dots, \eta_{t})$, we have $z_t=(x_t, \eta_t)$ being determined and $w_t, \eta_{t+1}$ being  random and independent from $\mathcal F_t$. Since $z_t\in \mathcal Z$ for all $t, \mathcal F_t$, we can lower bound $\text{Term 1}$ as follows:
\begin{align}
    \text{Term 1}& = \underset{t\geq 0}{\inf}\ \underset{\mathcal{F}_{t}}{\inf}\ \underset{\substack{v \in \mathbb S}}{\inf}\ \mathbb{E}\big[\big(v^{\intercal}\phi(\Theta_* \phi(z_t)+w_t, \eta_{t+1})\big)^{2}\ \big|\ \mathcal{F}_{t} \big]\notag\\
    & \geq \underset{z\in \Z}{\inf}\ \underset{\substack{v \in \mathbb S}}{\inf}\ \mathbb{E}\big[\big(v^{\intercal}\phi(\Theta_* \phi(z)+w, \eta)\big)^{2}\big],\label{equ: Term 1 to z in Z}
\end{align}
where the expectation in \eqref{equ: Term 1 to z in Z} is taken with respect to the independent random variables $w$ and $\eta$ following the distributions in Assumption \ref{ass:bounded-iid-noise}. 

Next, we aim to show $\mathbb{E}\big[\big(v^{\intercal}\phi(\Theta_* \phi(z)+w, \eta)\big)^{2}\big]>0$ for every $z\in \Z$ and $v\in \Sb$. 
To achieve this, we introduce the following shorthand notation for convenience:
\begin{equation}\label{equ: define hvz}
    h_{v,z}(w,\eta)= v^{\intercal}\phi(\Theta_* \phi(z)+w, \eta)
\end{equation}
Notice that $\Eb( h_{v,z}(w,\eta))^2 \geq 0$, so we only need to discuss the case when it is zero. Let's denote its zero set as $\Nvz$, i.e.,
\begin{equation}\label{equ:def of Nvz}
   \Nvz = \big\{(w, \eta) \in \D:\   h_{v,z}(w,\eta)= 0\big\}
\end{equation}

In the following, we will leverage the following properties of real-analytic functions in the literature \cite{krantz2002primer,cohn2013measure}.
\begin{proposition}[Properties of Real-Analytic Functions~\cite{krantz2002primer}]\label{prop: real analytic properties} Real-analytic functions enjoy the following properties:
\begin{itemize}
    \item If $f_1, f_2$ are real-analytic functions, $a_1 f_1+a_2 f_2$ is also real analytic for any $a_1, a_2\in \R$.
  
    \item A mapping $f: \R^n \to \R^m$ is called real-analytic if  component functions $f_1,\dots, f_m$ are real-analytic. Consider two real-analytic mappings: $f: \R^n \to \R^m$ and $g: \R^m \to \R^k$,   the composition $g\circ f: \R^n\to \R^k$ is also real-analytic. 
    \item Let $f$ be a real-analytic function on an open domain $\mathcal D \subseteq \R^n$. If $f\not\equiv 0$ on $\mathcal D$, then the zero set $\mathcal N_f=\{x \in \mathcal D: f(x)=0\}$ satisfies $\texttt{Leb}^{n}(\mathcal N_f)=0$.
\end{itemize}

\end{proposition}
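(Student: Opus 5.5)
The statement collects three standard facts about real-analytic functions. I would prove each one directly. The only real work is in the third item, and there I would assume that the open domain $\mathcal D$ is connected. This is needed: without it, $f$ could vanish on an entire component, which has positive measure. In the paper's application the domain is a product of boxes, so it is connected, possibly after passing to its interior.

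For the first item, I fix a point $x$. Both $f_1$ and $f_2$ have power series converging on some ball around $x$, so I take the smaller of the two balls. The series $a_1\sum c^{(1)}_\alpha (y-x)^\alpha + a_2\sum c^{(2)}_\alpha (y-x)^\alpha$ converges there termwise and represents $a_1f_1+a_2f_2$. For the second item, I would use complexification, which avoids rearranging nested multivariate series by hand. A convergent real power series around $x$ also converges on a small complex polydisc, so every real-analytic map extends to a holomorphic map on a complex neighborhood of each real point. Composition of holomorphic maps is holomorphic, by the chain rule for the Cauchy--Riemann equations. A holomorphic function equals its Taylor series on a polydisc, and restricting that series to real arguments gives a real power series for $g\circ f$ near each $x$. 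A majorant-series (Cauchy estimate) argument would serve as an alternative.

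For the third item, I would run a stratification argument built on the implicit function theorem.
\begin{enumerate}
\item Let $\mathcal A=\{x\in\mathcal D:\ \partial^\alpha f(x)=0 \text{ for all multi-indices } \alpha\}$. This set is closed in $\mathcal D$, since it is an intersection of zero sets of continuous functions. It is also open: at a point of $\mathcal A$ the Taylor series of $f$ is identically zero, so $f\equiv 0$ on a neighborhood. Because $\mathcal D$ is connected and $f\not\equiv 0$, I conclude $\mathcal A=\emptyset$.
\item Every $x\in\mathcal N_f$ therefore has a minimal order $k\ge 1$ such that some $\partial^\alpha f(x)\neq 0$ with $|\alpha|=k$. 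Write $\alpha=\beta+e_j$ with $|\beta|=k-1$, and set $g=\partial^\beta f$. Then $g(x)=0$ (by minimality, or because $g=f$ when $k=1$) and $\partial_j g(x)\neq 0$.
\item This gives the cover $\mathcal N_f\subseteq \bigcup_{\beta,j} M_{\beta,j}$, where $M_{\beta,j}=\{x: \partial^\beta f(x)=0,\ \partial_j\partial^\beta f(x)\neq 0\}$. The union is countable.
\item By the implicit function theorem, each $M_{\beta,j}$ is locally the graph of a $C^1$ function of $n-1$ variables, so it is an embedded $C^1$ hypersurface. Being second countable, it is covered by countably many such graph patches.
\item By Fubini, each graph patch has $\texttt{Leb}^{n}$ measure zero. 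Hence $\texttt{Leb}^{n}(\mathcal N_f)=0$ by countable subadditivity.
\end{enumerate}

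I expect the third item to be the main obstacle. The two delicate points are the open-and-closed argument, which is where connectedness is used and where the statement would fail without it, and the bookkeeping that turns the local implicit-function charts into a countable cover. If the stratification became messy, my fallback would be induction on $n$ with Fubini. In dimension one the zeros of $f$ are isolated, hence countable. In higher dimensions, for almost every $x'$ the slice $x_n\mapsto f(x',x_n)$ is not identically zero. This holds because otherwise $f$ would vanish on a set of positive measure, and I would apply the induction hypothesis to a nonvanishing coefficient of the Taylor expansion in $x_n$.
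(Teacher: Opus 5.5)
Your proposal is correct. It necessarily takes a different route from the paper, because the paper gives no proof of this proposition at all; it simply imports the three facts from Krantz--Parks.

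Your arguments for the first two items are standard and sound. In the complexification argument you should add two details. First, by continuity, the local holomorphic extension $F$ of $f$ maps a small polydisc around $x$ into the polydisc around $f(x)$ on which $g$ extends to $G$. Second, $G\circ F$ agrees with $g\circ f$ at real points, so its Taylor coefficients at $x$ are real.

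For the third item your stratification argument is complete. The open--closed argument excludes points where $f$ vanishes to infinite order, so every zero has finite order $k\ge 1$. The sets $M_{\beta,j}$ are regular level sets covered by countably many implicit-function graphs, and each graph is null by Fubini.

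Compared with the bare citation, your proof buys two things. It makes explicit the connectedness hypothesis that the statement hides in the word ``domain''. In the paper this is harmless: the third item is applied to $g_v(z)=v^{\intercal}\phi(z)$ on all of $\R^{n_z}$, which is connected, and the box $\D$ enters only through the inclusion $\Nvz+(\Theta_*\phi(z),0)\subseteq\mathcal N_v$. So your remark about passing to the interior of a product of boxes slightly misdescribes the application, though it does not affect your proof.

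Your proof also isolates exactly what analyticity is used for. Your steps 2--5 work verbatim for any $C^\infty$ function with no point at which all derivatives vanish. Analyticity and connectedness serve only to rule out such flat points, and this is precisely what the counterexample function in \eqref{equ: non real analytic function} violates on $(-\infty,0]$.

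Your fallback induction is sound in its second half: localize to a box, note that an identically vanishing slice forces every coefficient $a_k(x')$ of the expansion in $x_n$ to vanish, and apply the induction hypothesis to one $a_k\not\equiv 0$. The first half of its justification, ``otherwise $f$ would vanish on a set of positive measure,'' is circular as stated. Since your main argument does not rely on the fallback, this is cosmetic.
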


\vspace{2pt}

Based on Proposition \ref{prop: real analytic properties},  the following lemma holds.

\vspace{2pt}

\begin{lemma}[Properties of $h_{v,z}$]\label{lem: real analytic hvz}
    Function $h_{v,z}(w,\eta)$ is real-analytic and $\texttt{Leb}^{n_{z}}(\Nvz) = 0$ for any $v\in \Sb, z\in \Z$. 
\end{lemma}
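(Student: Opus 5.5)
Real-analyticity of $h_{v,z}$ follows from Proposition~\ref{prop: real analytic properties}. For fixed $z\in\Z$, the map $(w,\eta)\mapsto(\Theta_*\phi(z)+w,\eta)$ is affine, hence real-analytic, from $\R^{n_x+n_u}$ to $\R^{n_z}$. Composing it with the real-analytic map $\phi:\R^{n_z}\to\R^{n_\phi}$ (Assumption~\ref{ass:analytic}) gives a real-analytic mapping. Taking the linear combination with the coefficients $v$ then yields the real-analytic scalar function $h_{v,z}$ on all of $\R^{n_x+n_u}$. Note that $(w,\eta)$ has dimension $n_x+n_u=n_z$, which is why the measure is $\texttt{Leb}^{n_z}$.

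For the zero-set claim I would invoke the third bullet of Proposition~\ref{prop: real analytic properties} on the open domain $\R^{n_z}$. Since $\Nvz$ is a subset of the zero set of $h_{v,z}$ on $\R^{n_z}$, it suffices to show $h_{v,z}\not\equiv 0$ there. I argue by contradiction. If $h_{v,z}(w,\eta)=0$ for all $(w,\eta)\in\R^{n_z}$, then for every $(x,u)\in\R^{n_z}$ I can take $w=x-\Theta_*\phi(z)$ and $\eta=u$, obtaining $v^\intercal\phi(x,u)=0$. Thus $\sum_i v_i\phi_i\equiv 0$ on $\R^{n_z}$, with $v\ne 0$ because $\|v\|_2=1$. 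This contradicts the linear independence of $\phi_1,\dots,\phi_{n_\phi}$ assumed in Section~\ref{sec:prob}. Hence $h_{v,z}\not\equiv0$, its zero set in $\R^{n_z}$ is Lebesgue-null, and so is $\Nvz$.

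The main subtlety is the domain. Linear independence is stated as $\not\equiv 0$ on the whole space, while $\Nvz$ is defined relative to the compact set $\D$, which may have empty interior and therefore is not open. I would avoid this by proving nonvanishing, and applying the zero-set proposition, on the open set $\R^{n_z}$, and only then restricting to $\D$. If one instead insisted on working locally (in the spirit of the footnote to Assumption~\ref{ass:analytic}), the fallback is the identity theorem: a real-analytic function on a connected open set that vanishes on a nonempty open subset vanishes everywhere. That would transfer the global linear independence to any open neighborhood of the shifted set.
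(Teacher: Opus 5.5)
Your proof is correct and is essentially the paper's argument, only reordered: the paper applies the zero-set property to $g_v(z)=v^{\intercal}\phi(z)$ on $\R^{n_z}$ and then uses $\Nvz+(\Theta_*\phi(z),0)\subseteq\{g_v=0\}$ together with translation invariance of Lebesgue measure, whereas you push the non-vanishing of $g_v$ through the same translation and apply the zero-set property to $h_{v,z}$ directly. Your handling of the domain issue, working on the open set $\R^{n_z}$ and only then restricting to $\D$, also matches the paper's footnote.
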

\vspace{2pt}

\begin{proof}
    We first show that  function $h_{v,z}(w,\eta)$ is real-analytic with respect to $(w,\eta)$ for any $v, z$. First, notice that $(w+\Theta_*\phi(z),\eta)$ is an affine mapping and thus a real-analytic mapping with respect to $(w,\eta)$ for any $v, z$. Since $\phi(\cdot)$ is real-analytic by Assumption \ref{ass:analytic}, the composition $\phi(\Theta_* \phi(z)+w, \eta)$ is real analytic by Proposition \ref{prop: real analytic properties}. Finally,  function  $h_{v,z}(w,\eta)=\sum_{i=1}^{n_{\phi}} v_i \phi_i(w+\Theta_*\phi(z),\eta)$ is a linear combination of real-analytic functions and is thus real analytic by Proposition \ref{prop: real analytic properties}.

    Next, we show  $\texttt{Leb}^{n_{z}}(\Nvz) = 0$. Define $g_v(z)\coloneqq v^{\intercal}\phi(z)$ for notational simplicity. Notice that $g_v(z)$ is real-analytic on $\R^{n_z}$ by Assumption \ref{ass:analytic} and Proposition \ref{prop: real analytic properties}. Further, recall that $\phi_1,\dots, \phi_{n_\phi}$ are linearly independent, so $g_v(z)=\sum_{i=1}^{n_{\phi}} v_i \phi_i(z)\not\equiv 0$ for  nonzero $v$, which is true for  $v\in \Sb$. Therefore, $g_v(z)$ is non-trivial for any $v\in \Sb$. Consequently, the zero set $\mathcal N_v=\{z \in \R^{n_z}: g_v(z)=0\}$ has $\texttt{Leb}^{n_{z}}(\mathcal N_v) = 0$. 
    For any $v,z$, we have $\Nvz +(\Theta_*\phi(z), 0)\subseteq \mathcal N_v$, because if $(w,\eta) \in \Nvz$, then $(w+\Theta_*\phi(z),\eta) \in \mathcal N_v$ by \eqref{equ:def of Nvz}.\footnote{We do not have $\Nvz +(\Theta_*\phi(z), 0)= \mathcal N_v$ because $\Nvz$ is defined on $\D$ but $\mathcal N_v$ is on $\R^{n_z}$. } Consequently, $0\leq \texttt{Leb}^{n_{z}}(\Nvz)= \texttt{Leb}^{n_{z}}(\Nvz +(\Theta_*\phi(z), 0)) \leq \texttt{Leb}^{n_{z}}(\mathcal N_v) =  0$. Thus, we have shown $\texttt{Leb}^{n_{z}}(\Nvz) = 0$.
\end{proof}

With Lemma \ref{lem: real analytic hvz}, we can show $\Pb(\Nvz)<1$.
\begin{lemma}\label{lem: P(Nvz)<1}
For any $v\in \Sb$ and $z\in \Z$, we have $\Pb(\Nvz)<1$. 
    
\end{lemma}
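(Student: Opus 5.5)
The plan is to combine Lemma~\ref{lem: real analytic hvz}, which gives $\texttt{Leb}^{n_z}(\Nvz)=0$, with the semi-continuity in Assumption~\ref{ass:bounded-iid-noise}(iii). Here $\Pb$ denotes the joint law of $(w,\eta)$, which is the product measure $\Pb_w\otimes\Pb_\eta$ by the independence of $w_t$ and $\eta_{t+1}$. The difficulty is that semi-continuity is assumed only for each marginal separately, not for the product. So the key step is to show that the product of two semi-continuous distributions is again semi-continuous on $\R^{n_x}\times\R^{n_u}$. Measurability of $\Nvz$ is routine: it is the zero set of the continuous function $h_{v,z}$ intersected with the compact set $\D$, hence closed.

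First I would recast semi-continuity through the Lebesgue decomposition. Write $\Pb_w=\Pb_w^{ac}+\Pb_w^{s}$, where $\Pb_w^{ac}\ll\texttt{Leb}^{n_x}$ and $\Pb_w^{s}$ is concentrated on a Lebesgue-null set $S_w$. If $\Pb_w^{ac}=0$, then $\Pb_w(S_w)=1$ with $\texttt{Leb}^{n_x}(S_w)=0$, which contradicts Definition~\ref{def:semi-continuous}. Hence $a_w\coloneqq\Pb_w^{ac}(\R^{n_x})>0$. The same argument gives $a_\eta\coloneqq\Pb_\eta^{ac}(\R^{n_u})>0$.

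Next I would compare the product measure with the product of the absolutely continuous parts. Since $\Pb_w\ge\Pb_w^{ac}$ and $\Pb_\eta\ge\Pb_\eta^{ac}$ as measures, we have $\Pb_w\otimes\Pb_\eta\ge\mu\coloneqq\Pb_w^{ac}\otimes\Pb_\eta^{ac}$. The measure $\mu$ has density $f_w(w)f_\eta(\eta)$, so it is absolutely continuous with respect to $\texttt{Leb}^{n_z}$, and its total mass is $a_wa_\eta>0$. Lemma~\ref{lem: real analytic hvz} gives $\texttt{Leb}^{n_z}(\Nvz)=0$, so $\mu(\Nvz)=0$ and therefore $\mu(\Nvz^{\mathrm c})=a_wa_\eta$. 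It follows that
\[
\Pb(\Nvz)=1-\Pb(\Nvz^{\mathrm c})\le 1-\mu(\Nvz^{\mathrm c})=1-a_wa_\eta<1 .
\]

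I expect the main obstacle to be exactly this passage from marginal semi-continuity to joint semi-continuity. The Lebesgue-decomposition reformulation handles it cleanly. As a by-product, the bound $1-a_wa_\eta$ does not depend on $v$ or $z$, which should help later when taking the infimum over $\Sb\times\Z$.
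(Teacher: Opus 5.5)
Your proof is correct and follows essentially the paper's route. The paper also reduces the claim to semi-continuity of the joint law of $(w,\eta)$, proved in Lemma~\ref{lem: semi continuous w eta} via the Lebesgue decomposition: a measure is semi-continuous iff its absolutely continuous part is nonzero, and $\Pb_{\eta,ac}\otimes\Pb_{w,ac}$ is a nonzero absolutely continuous component of the product. It then combines this with $\texttt{Leb}^{n_z}(\Nvz)=0$ from Lemma~\ref{lem: real analytic hvz}.

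The only difference is that you inline this argument and read off the explicit bound $\Pb(\Nvz)\le 1-a_wa_\eta$, uniform in $(v,z)$, instead of invoking Definition~\ref{def:semi-continuous} for the joint law. Note, though, that this uniformity alone does not give a uniform lower bound on $\Eb[h_{v,z}(w,\eta)^2]$, because $h_{v,z}$ can be small but nonzero off $\Nvz$. That is why the paper still needs continuity in $(v,z)$ and compactness in Step~3.
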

\begin{proof} If $\Pb(\Nvz)=0$, we have $\Pb(\Nvz)<1$ directly. Therefore, it is sufficient to only consider the case when $\Pb(\Nvz)>0$. 
    By Assumption \ref{ass:bounded-iid-noise}, $\eta$ and $w$ follow mutually independent and semi-continuous probability distributions. Therefore, the joint distribution of $(\eta, w)$ is also semi-continuous (see Appendix~\ref{appendix: semi continuity}). By Lemma \ref{lem: real analytic hvz}, we have $\texttt{Leb}^{n_{z}}(\Nvz) = 0$. By Definition \ref{def:semi-continuous}, we have $\Pb(\Nvz)<1$.
    \end{proof}

    \vspace{2pt}

Consequently, we have
\begin{align*}
&\mathbb{E}\big[\big(h_{v,z}(w,\eta)\big)^{2}\big] =   \mathbb{E}\big[\big(h_{v,z}(w,\eta)\big)^{2}\mid \Nvz\big] \Pb(\Nvz)\\
& \qquad \qquad \qquad\qquad + \mathbb{E}\big[\big(h_{v,z}(w,\eta)\big)^{2}\mid \Nvz^{\mathrm c}\big] \Pb(\Nvz^{\mathrm c})\\
=  &\ \mathbb{E}\big[\big(h_{v,z}(w,\eta)\big)^{2}\mid \Nvz^{\mathrm c}\big] (1-\Pb(\Nvz))>0,
\end{align*}
where the first equality is by a conditional expectation property, the second equality is by the definition \eqref{equ:def of Nvz}, and the last inequality is because $(h_{v,z}(w,\eta))^2> 0$ when $h_{v,z}(w,\eta)\not =0$ and $\Pb(\Nvz)<1$ by Lemma \ref{lem: P(Nvz)<1}. 

Therefore, we have shown $\mathbb{E}\big[\big(h_{v,z}(w,\eta)\big)^{2}\big]>0$ for every $v \in \Sb$ and $z\in \Z$. Recall that  $\Sb$ and  $\Z$ are both compact sets according to \textbf{Step 1}. Since $\mathbb{E}\big[\big(h_{v,z}(w,\eta)\big)^{2}\big]$ is a continuous  function with respect to $v,z$ by the boundedness and the Dominated Convergence Theorem \cite{pollard2002user}, the infimum over $v \in \Sb$ and $z\in \Z$ can be reached and thus
$$\underset{z\in \Z}{\inf}\ \underset{\substack{v \in \mathbb S}}{\inf}\ \mathbb{E}\big[\big(v^{\intercal}\phi(\Theta_* \phi(z)+w, \eta)\big)^{2}\big]>0.$$
Together with inequality \ref{equ: Term 1 to z in Z}, we have shown $\text{Term 1}>0$.

\vspace{2pt}

\noindent\textbf{Step 4: Showing $\text{Term 2}>0$.} Notice that
\begin{align*}
    \text{Term 2}&\geq \dfrac{\underset{t\geq 0}{\inf}\ \underset{\mathcal{F}_{t}}{\inf}\ \underset{\substack{v \in \mathbb S}}{\inf} \ \mathbb{E}\bigg[\big(v^{\intercal}\phi(z_{t+1})\big)^{2}\ \big|\ \mathcal{F}_{t} \bigg]^{2}}{\underset{t\geq 0}{\sup}\ \underset{\mathcal{F}_{t}}{\sup}\ \underset{\substack{v \in \mathbb S}}{\sup}  \ \mathbb{E}\bigg[\big(v^{\intercal}\phi(z_{t+1})\big)^{4}\ \big|\ \mathcal{F}_{t} \bigg]}\\
    & = \dfrac{(\text{Term 1})^2}{\underset{t\geq 0}{\sup}\ \underset{\mathcal{F}_{t}}{\sup}\ \underset{\substack{v \in \mathbb S}}{\sup}  \ \mathbb{E}\bigg[\big(v^{\intercal}\phi(z_{t+1})\big)^{4}\ \big|\ \mathcal{F}_{t} \bigg]}
\end{align*}
Since Step 3 has shown that $\text{Term 1}>0$, it is sufficient to show that the denominator is finite, which  is straightforward:
\begin{align*}
   & \ \underset{t\geq 0}{\sup}\ \underset{\mathcal{F}_{t}}{\sup}\ \underset{\substack{v \in \mathbb S}}{\sup}  \ \mathbb{E}\bigg[\big(v^{\intercal}\phi(z_{t+1})\big)^{4}\ \big|\ \mathcal{F}_{t} \bigg]\\
   \leq & \ \underset{z\in \Z}{\sup}\ \underset{\substack{v \in \mathbb S}}{\sup}  \ \big(v^{\intercal}\phi(z)\big)^{4} \leq \underset{z\in \Z}{\sup} \ \|\phi(z)\|_2^{4} <+\infty
\end{align*}
where the second inequality is by Cauchy-Schwarz inequality and the last inequality is by Lemma \ref{lem: bdd phi(z) on Z}.

\vspace{2pt}

\noindent\textbf{Step 5: Conclusion.} Finally, by \eqref{equ: bmsb paley zygmund} and Steps 3-4, we have
\begin{align*}
    &\Pb(\big|v^{\intercal}\phi(z_{t+1})\big| 
    \geq r\sqrt{\text{Term 1}} \mid \F_t) \\
    \geq \ & \mathbb{P}\Bigg( \big|v^{\intercal}\phi(z_{t+1})\big| > r \sqrt{\mathbb{E}\bigg[\big(v^{\intercal}\phi(z_{t+1})\big)^{2}\ \big|\ \mathcal{F}_{t} \bigg]}\ \ \Bigg|\ \mathcal{F}_{t} \Bigg)\\
     \geq \ &  (1 -r^{2})^{2}\dfrac{\mathbb{E}\bigg[\big(v^{\intercal}\phi(z_{t+1})\big)^{2}\ \big|\ \mathcal{F}_{t} \bigg]^{2}}{\mathbb{E}\bigg[\big(v^{\intercal}\phi(z_{t+1})\big)^{4}\ \big|\ \mathcal{F}_{t} \bigg]}\geq (1-r^2)^2 \cdot \text{Term 2}
\end{align*}
for all $t, \F_t, v \in \Sb$, where $0<r<1$ is a constant. Therefore, we have established the BMSB condition with positive constants $s_{\phi}=r\sqrt{\text{Term 1}}>0$ and $p_{\phi}=(1-r^2)^2 \cdot \text{Term 2}>0 $ that do not depend on horizon $T$.

\subsection{Proof of Theorem~\ref{thm: LSE open loop}}\label{subsec: LSE convergence}
First, the boundedness of feature functions is straightforward: for any $t$, $(x_t,u_t)\in \Z$ by Lemma \ref{lem: compact Z}, so $\|\phi(x_t, u_t)\|_2 \leq \phi_{\max}$  by Lemma \ref{lem: bdd phi(z) on Z}.

Next, the estimation error bound is based on the BMSB condition established in Theorem \ref{th:BMSB-o-l} and the general error bound proposed in \cite{simchowitz2018learning} for  linear regression  with correlated data. 

\begin{proposition}[LSE for General Linear Regression \cite{simchowitz2018learning}]~\label{prop: general LSE}
Consider a general linear regression $y_{t} = \Theta_{*} x_{t} + w_{t}$, where  $y_t\in \R^{n_y}$, $x_t \in \R^{n_x}$ is adapted to a natural filtration $\F_t$, and the following conditions hold:
\begin{enumerate}
    \item[(I)]$w_{t} | \mathcal{F}_{t}$ follows  a sub-Gaussian distribution with zero mean and   proxy $\kappa_w$;
    \item[(II)]  $\{x_{t}\}_{t \geq 1}$ satisfies the $(k, \Gamma_{sb}, p)$-BMSB condition;
\item[(III)]$\mathbb{P}\big(\sum_{t=1}^{T} x_{t}x_{t}^{\intercal} \not \preceq T\bar{\Gamma} \big) \leq \delta$ for a finite $\bar \Gamma$.
\end{enumerate}
For any $0<\delta <1/3$, 
when 
$$
    T \geq \frac{10k}{p^{2}} \left( \log(1/{\delta}) +  \log \det ( \bar{\Gamma} \Gamma_{sb}^{-1}) + 2n_{x}\log\left(\frac{10}{p}\right)\right),
$$
with probability at least $1-3\delta$, the LSE estimator $\hat \Theta_T=\arg\min\sum_{t=1}^T \|y_t -\hat \Theta x_t\|_2^2$ satisfies

\vspace{-5pt}
\begin{align*}
&\big \|\hat{\Theta}_{T} - \Theta_{*} \big\|_{2}\\
 \leq \ &\frac{90 \kappa_{w}}{p} \sqrt{ \frac{n_{y} \!+\!\log(\frac{1}{\delta}) +  \log \det ( \bar{\Gamma} \Gamma_{sb}^{-1}) \!+ \! n_{x}\log\bigg(\frac{10}{p}\bigg)}{T\sigma_{\min}(\Gamma_{sb})}}.
\end{align*}
 \end{proposition}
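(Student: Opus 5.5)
The plan is to follow the original argument of \cite{simchowitz2018learning}, which splits the error into a ``denominator'' part and a ``numerator'' part. With $X_T \coloneqq \sum_{t=1}^T x_t x_t^{\intercal}$, the normal equations give
\begin{align*}
\hat\Theta_T-\Theta_* = \Big(\sum_{t=1}^T w_t x_t^{\intercal}\Big) X_T^{-1},
\end{align*}
so that
\begin{align*}
\|\hat\Theta_T-\Theta_*\|_2 \le \Big\|\Big(\sum_{t=1}^T w_t x_t^{\intercal}\Big)X_T^{-1/2}\Big\|_2 \cdot \lambda_{\min}(X_T)^{-1/2}.
\end{align*}
The proof then has three steps, each on an event of probability at least $1-\delta$, followed by a union bound, which is where the failure probability $3\delta$ comes from:
\begin{itemize}
\item a high-probability lower bound $X_T \succeq c\, T p^2 \Gamma_{sb}$;
\item the upper bound $X_T \preceq T\bar\Gamma$, which is assumption (III);
\item a self-normalized martingale bound on the numerator.
\end{itemize}

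\textbf{Lower bound on $X_T$ (pointwise, then uniform).} First fix a unit direction $v$. Split $[1,T]$ into $\lfloor T/k\rfloor$ blocks of length $k$. By the $(k,\Gamma_{sb},p)$-BMSB condition (II), within each block the indicator count $\sum_{i=1}^{k}\mathds 1\{|v^{\intercal}x_{t+i}|\ge \sqrt{v^{\intercal}\Gamma_{sb}v}\}$ has conditional mean at least $kp$ given $\mathcal F_t$. Applying Azuma--Hoeffding (or a Chernoff bound for martingale differences) across blocks gives
\begin{align*}
v^{\intercal}X_T v \gtrsim p^2 T\, v^{\intercal}\Gamma_{sb}v
\end{align*}
with probability at least $1-e^{-cTp^2/k}$.

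Next, make this uniform over the sphere with an $\epsilon$-net in the geometry induced by $\Gamma_{sb}$. On the event $X_T\preceq T\bar\Gamma$, the quadratic form $v\mapsto v^{\intercal}X_Tv$ is Lipschitz with a constant controlled by $\bar\Gamma$. So a net of resolution about $p/10$, measured relative to $\bar\Gamma^{1/2}\Gamma_{sb}^{-1/2}$, suffices to transfer the pointwise bound to every direction. The log-cardinality of this net is at most
\begin{align*}
\log\det(\bar\Gamma\Gamma_{sb}^{-1}) + 2n_x\log(10/p).
\end{align*}
Requiring the union-bounded failure probability to be at most $\delta$ yields exactly the stated burn-in condition on $T$. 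I expect this uniformization step to be the main obstacle: the net must be built in a geometry where both the lower bound $\Gamma_{sb}$ and the upper bound $\bar\Gamma$ are respected, and the constants must be tracked carefully so that the net contributes only $\log\det$ and $n_x\log(10/p)$.

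\textbf{Numerator via a self-normalized bound.} Using (I), apply the self-normalized martingale inequality of Abbasi-Yadkori et al. to $\sum_t x_t (u^{\intercal}w_t)$ for each fixed unit $u\in\R^{n_y}$. Take the regularizer $V = c\,Tp^2\Gamma_{sb}$, matching the lower bound from the previous step. On the intersection of the events, $X_T+V\preceq 2X_T$ and
\begin{align*}
\log\det\big((X_T+V)V^{-1}\big)\lesssim \log\det(\bar\Gamma\Gamma_{sb}^{-1})+n_x\log(10/p).
\end{align*}
A $1/2$-net over $u$, of size $5^{n_y}$, contributes the $n_y$ term, and the sub-Gaussian proxy $\kappa_w$ contributes the leading factor.

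\textbf{Combining.} Put the pieces together on the intersection of the three events. The denominator factor satisfies $\lambda_{\min}(X_T)^{-1/2}\lesssim (p^2T\sigma_{\min}(\Gamma_{sb}))^{-1/2}$. Multiplying by the numerator bound of order $\kappa_w\sqrt{n_y+\log(1/\delta)+\log\det(\bar\Gamma\Gamma_{sb}^{-1})+n_x\log(10/p)}$ gives the claimed rate, with the explicit constant $90/p$ obtained from the net radii chosen above.
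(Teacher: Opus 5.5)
The paper gives no proof of its own here: it imports the result from \cite{simchowitz2018learning}. Your sketch faithfully reconstructs that original argument: block small-ball counts plus Azuma--Hoeffding for a fixed direction, a covering of $\{v: v^{\intercal}\Gamma_{sb}v=1\}$ in the $\bar\Gamma$-geometry on the event of (III), a self-normalized bound with regularizer matched to the Gram lower bound plus a net over output directions, and a union bound giving $3\delta$. So it is correct and takes the same route. One point to make explicit: your covering-number and $\log\det$ estimates tacitly use $\Gamma_{sb}\preceq\bar\Gamma$, which is a stated hypothesis in the original theorem but is dropped in the paper's restatement. It does hold in the paper's application, since $\Gamma_{sb}=s_{\phi}^{2}I_{n_\phi}$, $\bar\Gamma=\phi_{\max}^{2}I_{n_\phi}$, and the BMSB condition forces $s_{\phi}\le\phi_{\max}$.
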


\vspace{5pt}

We complete the proof of Theorem \ref{thm: LSE open loop} by verifying the conditions assumed in Proposition \ref{prop: general LSE}. Condition I is straightforward because $w_t\mid\F_t=w_t$ and $w_t$'s distribution satisfies Condition I by Assumption \ref{ass:bounded-iid-noise}. Condition II has been verified by Theorem \ref{th:BMSB-o-l}. Condition III can be verified below. Notice that 
$$\sigma_{\max}(\phi(z_{t})\phi^{\intercal}(z_{t}))\leq \text{trace}(\phi(z_{t})\phi^{\intercal}(z_{t}))=\|\phi(z_{t})\|_2^2 \leq \phi_{\max}^2.$$
 Therefore, $\Pb\left( \sum_{t=1}^{T} \phi(z_{t})\phi^{\intercal}(z_{t}) \npreceq T\phi_{\max}^2 I_{n_\phi}\right)=0<\delta$ for any $\delta>0$. Therefore, by applying Proposition \ref{prop: general LSE}, we have completed the proof.

%% file: SME.tex
\section{Convergence Rates of Set Membership Estimation}\label{sec: SME}
Set membership estimation (SME) is an uncertainty quantification method for system identification in control literature \cite{bertsekas1971control,fogel1982value,lu2023robust,li2024icml}. Unlike LSE, SME is a set-estimator and directly estimates the uncertainty set.  Based on the BMSB condition established in Section \ref{sec:results}, we analyze the convergence rate of SME for LPN systems in this section. 

First, we review the SME algorithm below. Given data $\{x_t,u_t, x_{t+1}\}_{t=0}^{T-1}$, SME estimates an uncertainty set of the unknown parameter matrix $\Theta_*$ by
\begin{align}\label{eq:SM-eq}
    \ThetaSet = \bigcap \limits_{t=0}^{T-1} \Big\{\hat{\Theta}: x_{t+1} - \hat{\Theta} \phi(x_{t}, u_{t}) \in \mathcal{W} \Big\}, 
\end{align}
where $\mathcal{W}=\{w: \|w\|_\infty \leq w_{\max}\}$ is the bounded set such that $w_{t} \in \mathcal{W}$ for all $t \geq 0$ according to Assumption \ref{ass:bounded-iid-noise}. Notice that the true parameter matrix $\Theta_*$ is guaranteed to be included by the uncertainty set $ \ThetaSet $ as long as the disturbance set $\mathcal W$ is valid.

It is worth noting that the implementation of  SME \eqref{eq:SM-eq} only requires $w_t$ to be bounded  and does not need any stochastic properties of $w_t$. Nevertheless, the convergence and convergence rate analysis of SME usually assumes i.i.d. disturbances for analytical simplicity and  comparison with other statistical estimation methods \cite{akccay2004size,bai1998convergence,li2024icml}.\footnote{There are also convergence analysis under deterministic disturbances, e.g., \cite{livstone1996asymptotic}.} This paper also investigates SME's convergence rate under i.i.d. disturbances.

Further, the convergence analysis of SME typically requires an additional assumption: the boundary-visiting condition \cite{li2024icml,lu2019robust,akccay2004size,xu2025sample}.  
\begin{assumption}[Boundary Visiting Condition]\label{ass:w-tight}
For any $\ell > 0$, there exists $q_{w}(\ell) > 0$, such that for any $1 \leq j \leq n_x$ and $t \geq 0$, we have 
\begin{align*}
    &\mathbb{P}(w_{t}^{j} + w_{\max} \leq \ell) \geq q_{w}(\ell) > 0,\\
    &\mathbb{P}(w_{\max} - w_{t}^{j} \leq \ell) \geq q_{w}(\ell) > 0.
\end{align*} 
\end{assumption}

\vspace{5pt}

Assumption \ref{ass:w-tight} requires that the boundary  of $\mathcal W$ to be tight in the sense that $w_t$ must visit any small neighborhood of $w_{\max}$ and $-w_{\max}$ with a positive probability. This assumption is necessary for the convergence of SME \cite{akccay2004size,li2024icml}. There are variants of SME proposed in the literature to learn a tight disturbance bound and identify the system uncertainty set together \cite{lu2019robust,fogel1982value}. But this paper only studies the convergence rate of the original  SME in \eqref{eq:SM-eq} for LPN systems and leaves the SME variants for future work.

With Assumption \ref{ass:w-tight}, we can establish the following convergence rate of SME for LPN systems. 

\begin{theorem}[Convergence Rate of SME]\label{th:sme-con-o-l}
Consider Assumptions \ref{ass:analytic}, \ref{ass:bounded-iid-noise}, \ref{ass:liss}, \ref{ass:w-tight}, and i.i.d. random inputs \eqref{equ: iid input}. For any $m \geq 1$ and  $\varrho \in (0, 1)$, when $T > m$, we have
\begin{equation}\label{eq:sme-diam}
\begin{aligned}
&\mathbb{P}\!\big(\textup{diam}(\ThetaSet) > \varrho\big) 
\leq \frac{T}{m}\, \tilde{O}\!\big(n_{\phi}^{2.5}\big)\,a_{2}^{n_{\phi}}\,e^{-a_{3}m} \;\\
&\ \ \ \ \ \ \ \ \ \ +\tilde{O}\!\big(n_{x}^{2.5}n_{\phi}^{2.5}\big)\,a_{4}^{n_{x}n_{\phi}}
\Big(1-q_{w} \!\big(\tfrac{a_{1}\varrho}{4\sqrt{n_{x}}}\big)\Big)^{\lceil T/ m \rceil},
\end{aligned}
\end{equation}
where $\textup{diam}(\ThetaSet)= \sup_{\Theta_1, \Theta_2 \in \ThetaSet}\|\Theta_1-\Theta_2\|_F$,  $a_{1} = {s_{\phi}p_{\phi}}/{4}$, $a_{2} = {64\phi_{\max}^{2}}/({s_{\phi}^{2}p_{\phi}^{2}})$, $a_{3} = {p_{\phi}^{2}}/{8}$, 
$a_{4} = \max({16\phi_{\max}\sqrt{n_{x}}}/({s_{\phi}p_{\phi}}),1)$, and $s_{\phi}, p_{\phi}$ are defined in Theorem \ref{th:BMSB-o-l},  $\phi_{\max}$ is defined in Theorem \ref{thm: LSE open loop}.
\end{theorem}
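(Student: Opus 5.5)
The plan follows the SME analysis of \cite{li2024icml}, with the BMSB condition supplied by Theorem~\ref{th:BMSB-o-l} and the bound $\phi_{\max}$ supplied by Lemma~\ref{lem: bdd phi(z) on Z}.

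\textbf{Reduction to a directional inequality.} Write any $\hat\Theta\in\ThetaSet$ as $\hat\Theta=\Theta_*+\Delta$. The constraint $x_{t+1}-\hat\Theta\phi(z_t)\in\mathcal W$ becomes $w_t-\Delta\phi(z_t)\in\mathcal W$, i.e.
\[
|w_t^j-\Delta_j^{\intercal}\phi(z_t)|\le w_{\max}
\]
for every row $j$ and every $t$. Since $\Theta_*\in\ThetaSet$, it suffices to show that, with high probability, every $\Delta$ with $\|\Delta\|_F\ge \varrho/2$ violates some constraint. For such $\Delta$, some row satisfies $\|\Delta_j\|_2\ge \varrho/(2\sqrt{n_x})$. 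Write $\Delta_j=\|\Delta_j\|_2\,v$ with $v\in\Sb$. A violation then occurs at time $t$ if both
\begin{itemize}
\item $|v^{\intercal}\phi(z_t)|\ge s_\phi$, and
\item $w_t^j$ lies within $\ell=\|\Delta_j\|_2 s_\phi/2$ of the boundary $\pm w_{\max}$ on the side matching the sign of $\Delta_j^{\intercal}\phi(z_t)$.
\end{itemize}
Indeed, then $|w_t^j-\Delta_j^{\intercal}\phi(z_t)|>w_{\max}$.

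\textbf{Step 1: a deterministic covering set of good times.} Split the horizon into $\lceil T/m\rceil$ blocks of length $m$. By Theorem~\ref{th:BMSB-o-l}, for a fixed $v$ the indicators $\mathbb 1\{|v^{\intercal}\phi(z_t)|\ge s_\phi\}$ have conditional mean at least $p_\phi$. Azuma–Hoeffding on each block then shows that at least $p_\phi m/2$ of the $m$ indices are good, except with probability $e^{-p_\phi^2 m/8}$; this gives $a_3$. To make this uniform over $v\in\Sb$, I would take an $\epsilon$-net with $\epsilon\asymp s_\phi p_\phi/\phi_{\max}$ and use $|(v-v')^{\intercal}\phi|\le \epsilon\phi_{\max}$. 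Good times for the net point at level $s_\phi$ then remain good for all nearby $v$ at level $s_\phi/2$. The net cardinality $(c\phi_{\max}/(s_\phi p_\phi))^{n_\phi}$ produces the factor $a_2^{n_\phi}$, and the polynomial prefactor absorbs the volumetric constants. A union bound over the $T/m$ blocks gives the first term of \eqref{eq:sme-diam}.

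\textbf{Step 2: boundary visits at good times.} On the event of Step 1, each block contains a good time for every direction. Here I must avoid conditioning on future noise, since $z_t$ depends only on $w_0,\dots,w_{t-1}$ and $\eta$'s. Hence $w_t$ is independent of $\mathcal F_t$, and the event ``$w_t^j$ is near $w_{\max}$'' (or near $-w_{\max}$) has conditional probability at least $q_w(\ell)$ by Assumption~\ref{ass:w-tight}.

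The delicate point, and the main obstacle, is that $\Delta$ ranges over a continuum, so the good times and the required boundary side both depend on $\Delta$. I would handle this as follows:
\begin{itemize}
\item Discretize $\Delta$ on the sphere-like shell, using the bounded region where it matters, with a net over $\mathbb R^{n_x\times n_\phi}$ of resolution $\asymp a_1\varrho/\sqrt{n_x}$. This yields the factor $a_4^{n_xn_\phi}$ and $\tilde O(n_x^{2.5}n_\phi^{2.5})$.
\item For a fixed net point, consider within each block the first good time, which is a stopping time, together with its sign.
\item The probability that no block produces a boundary hit of margin $a_1\varrho/(4\sqrt{n_x})$, with $a_1=s_\phi p_\phi/4$, is at most $(1-q_w(a_1\varrho/(4\sqrt{n_x})))^{\lceil T/m\rceil}$. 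This follows by the tower property across blocks, since each stopping time's noise is fresh.
\item Convexity of $\ThetaSet$ lets me reduce to net points: if some far $\Delta$ were feasible, the segment from $0$ would produce a feasible point at norm about $\varrho/2$. Any net point within the margin would then be nearly feasible, contradicting the hit with the slack reserved in $a_1$.
\end{itemize}

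Combining the two failure events by a union bound yields \eqref{eq:sme-diam}.
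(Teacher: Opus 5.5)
Your proposal is correct and takes the same route as the paper. The paper also treats the LPN system as a linear regression on $\phi(z_t)$, takes the BMSB condition from Theorem~\ref{th:BMSB-o-l} and the bound $\phi_{\max}$ from Lemma~\ref{lem: bdd phi(z) on Z}, and then applies the general SME result of \cite{li2024icml} (Proposition~\ref{prop: SME general}) as a black box; you instead unpack that proposition's proof (block-wise Azuma with a direction net, stopping-time boundary visits, and reduction by convexity to a matrix net on the shell). Two small corrections: the matrix-net resolution must be of order $a_1\varrho/(\sqrt{n_x}\,\phi_{\max})$, not $a_1\varrho/\sqrt{n_x}$, so that the perturbation $r\phi_{\max}$ stays below the violation margin (this is where $\phi_{\max}$ enters $a_4$); and your block argument only controls complete blocks of usable indices, so obtaining exactly the $\lceil T/m\rceil$ exponent needs a little extra bookkeeping.
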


 For Theorem \ref{th:sme-con-o-l}, the  coefficients hidden in the $\tilde{O}(\cdot)$ notation and the proof are provided in Appendix~\ref{appendix: SME}. 

SME converges if the uncertainty set converges to a singleton $\{\Theta_*\}$.
Theorem \ref{th:sme-con-o-l} analyzes the convergence rate of SME by bounding the diameter of the uncertainty set $\ThetaSet$.   $\textup{diam}(\ThetaSet) $ can be viewed as an upper bound on the estimation error  $\|\hat \Theta - \Theta_*\|_2$ considered in LSE's convergence rate analysis. This is  because $\Theta_* \in \ThetaSet$ and any $\hat \Theta_{\text{SME}}\in \ThetaSet$ can be viewed as a  point estimator of SME, so the estimation error of SME can be upper bounded by the diameter: $\|\hat \Theta_{\text{SME}} - \Theta_*\|_2 \leq \|\hat \Theta_{\text{SME}} - \Theta_*\|_F \leq \textup{diam}(\ThetaSet) $.

Further, Theorem \ref{th:sme-con-o-l} establishes an upper bound on the ``failure'' probability of SME, i.e., the probability that the uncertainty set's diameter exceeds $\varrho$. To ensure the failure probability is less than $1$, one can select $m=O(\log T)$ and choose a sufficiently large $T$ such that $T \geq m=O(\log T)$. If $w_{t}$ is more likely to visit the boundaries of the set $\mathcal{W}$ (meaning a larger $q_w(\ell)$), SME is less likely to estimate an uncertainty set with a diameter greater than $\varrho$. 
To provide more intuition, we consider an example of $q_{w}(\ell) = c_{w} \ell$ for some $c_{w} > 0$. Note that several common distributions, including the uniform distribution and the truncated Gaussian distribution, satisfy this property on $q_{w}(\ell)$ (see Appendix~\ref{appendix: SME} for explicit formulas of $c_w$). With $q_{w}(\ell) = c_{w} \ell$, we can provide a convergence rate of SME in terms of $T$ below.

\begin{corollary}[SME's convergence rate when $q_{w}(\ell) = c_{w} \ell$]\label{cor:sme-rate}Under the 
 Assumptions \ref{ass:analytic}, \ref{ass:bounded-iid-noise}, \ref{ass:liss}, \ref{ass:w-tight}, and i.i.d. random inputs \eqref{equ: iid input}, consider $w_t$'s distribution satisfying $q_{w}(\ell) = c_{w} \ell$ for all $0<\ell <\ell_{\max}$ for some $c_w>0, \ell_{\max}>0$. For  any $\delta > 0$, let
\begin{equation*}
m \;\ge\; O\!\left( \frac{\log(T/\delta) + n_{\phi} \log\!\big(\tfrac{8 \phi_{\max}}{s_{\phi}p_{\phi}}\big)}{p_{\phi}^{2}} \right).
\end{equation*}
then when $T>m$, with probability at least $1-2\delta$,
\begin{equation*}
\mathrm{diam}(\ThetaSet) \le
\tilde O\left( \frac{%
    n_{\phi}^2 n_x^{1.5}\log(1/\delta)}{T } \right).
\end{equation*}
\end{corollary}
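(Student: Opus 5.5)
We plan to derive Corollary~\ref{cor:sme-rate} directly from Theorem~\ref{th:sme-con-o-l}. The idea is to choose $m$ and $\varrho$ so that each of the two terms on the right-hand side of \eqref{eq:sme-diam} is at most $\delta$. The first term does not involve $\varrho$, so it fixes $m$. The second term then fixes the smallest admissible $\varrho$, which gives the diameter bound.

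\textbf{Step 1 (choice of $m$).} We need
\[
\frac{T}{m}\,\tilde O(n_\phi^{2.5})\,a_2^{n_\phi}e^{-a_3 m}\le\delta .
\]
Taking logarithms, it suffices that
\[
a_3 m \ge \log(T/\delta)+n_\phi\log a_2+\log \tilde O(n_\phi^{2.5}),
\]
where we may drop the harmless factor $1/m\le 1$. Since $a_3=p_\phi^2/8$ and $\log a_2 = 2\log\big(8\phi_{\max}/(s_\phi p_\phi)\big)$, this is exactly the stated lower bound on $m$. The polylogarithmic factor in $n_\phi$ is absorbed into the $O(\cdot)$.

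\textbf{Step 2 (choice of $\varrho$).} For the second term we use $q_w(\ell)=c_w\ell$ with $\ell=a_1\varrho/(4\sqrt{n_x})$. This requires $\ell<\ell_{\max}$, which holds once $\varrho$ is small, i.e.\ for $T$ large enough. We also use $(1-x)^k\le e^{-xk}$ and $\lceil T/m\rceil\ge T/m$, so it suffices to have
\[
\tilde O(n_x^{2.5}n_\phi^{2.5})\,a_4^{n_xn_\phi}\exp\!\Big(-\frac{c_w a_1\varrho}{4\sqrt{n_x}}\cdot\frac{T}{m}\Big)\le\delta .
\]
Solving for $\varrho$ gives
\[
\varrho=\frac{4\sqrt{n_x}\,m}{c_w a_1 T}\Big(\log(1/\delta)+n_xn_\phi\log a_4+\log\tilde O(n_x^{2.5}n_\phi^{2.5})\Big).
\]
A union bound over the two events then yields $\mathrm{diam}(\ThetaSet)\le\varrho$ with probability at least $1-2\delta$.

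\textbf{Step 3 (collecting orders).} We substitute the minimal $m$ from Step 1, which is $\tilde O\big(n_\phi+\log(1/\delta)\big)$ once $\log T$ is hidden in $\tilde O$. The bracket in Step 2 is $\tilde O\big(n_xn_\phi+\log(1/\delta)\big)$, and the prefactor carries $\sqrt{n_x}$. Multiplying these, and bounding products of the form $(n_\phi+\log\frac1\delta)(n_xn_\phi+\log\frac1\delta)$ by $\tilde O\big(n_\phi^2n_x\log(1/\delta)\big)$ for $\delta$ bounded away from one, gives the claimed rate $\tilde O(n_\phi^2n_x^{1.5}\log(1/\delta)/T)$.

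\textbf{Main obstacle.} The main difficulty is the bookkeeping of the hidden constants. Two things must be checked: that the $\tilde O$ convention legitimately hides $\log T$ and the constants $s_\phi,p_\phi,c_w,\phi_{\max}$, and that the requirement $\varrho<1$ together with $\ell<\ell_{\max}$ holds for $T$ large. We would handle this by stating an explicit threshold on $T/\log T$ beyond which both conditions hold; this threshold is implicit in the requirement $T>m$ up to constants.
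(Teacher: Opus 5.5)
Your Steps 1--2 follow the paper's proof exactly. The paper also drops the factor $1/m$ to choose $m$ from the first term of \eqref{eq:sme-diam}. It chooses the same $\varrho=\frac{4\sqrt{n_x}\,m}{c_wa_1T}\big(\log(1/\delta)+n_xn_\phi\log a_4+\log(544\,n_x^{2.5}n_\phi^{2.5}\log(a_4n_xn_\phi))\big)$, using $1-y\le e^{-y}$ and $\lceil T/m\rceil\ge T/m$. No separate union bound is needed, since Theorem~\ref{th:sme-con-o-l} already bounds the failure probability by the sum of the two terms. The paper then only says ``substitute $m$ and $\varrho$'', so your Step~3 and closing remark go further than it does. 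However, two claims you make there are false as written.

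\textbf{The product bound.} With $L=\log(1/\delta)$, the product $(n_\phi+L)(n_xn_\phi+L)$ is not $\tilde O(n_\phi^2n_xL)$ just because $\delta$ is bounded away from one. The cross term $L^2$ exceeds $n_\phi^2n_xL$ once $L>n_\phi^2n_x$. This argument cannot avoid that term, because it forces both $m\gtrsim\log(T/\delta)$ and $\frac{T}{m}\varrho\gtrsim\log(1/\delta)$.

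If only $\log T$ is hidden, the bookkeeping gives $\tilde O(n_\phi^2n_x^{1.5}\log^2(1/\delta)/T)$. Equivalently, your stated rate holds only when $\log(1/\delta)\lesssim n_\phi^2n_x$. To get the displayed form, let $\tilde O$ absorb the whole $\log(T/\delta)$ factor in $m$, so that $m=\tilde O(n_\phi)$. Then $\log(1/\delta)+n_xn_\phi\log a_4=\tilde O(n_xn_\phi\log(1/\delta))$, which yields $n_\phi^2n_x^{1.5}\log(1/\delta)$. This is implicitly the convention behind the paper's statement; adopt it explicitly instead of asserting the product bound.

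\textbf{The large-$T$ threshold.} $T>m$ does not imply $\varrho<1$ or $a_1\varrho/(4\sqrt{n_x})<\ell_{\max}$. Let $B$ denote the bracket in your $\varrho$. Both conditions hold if and only if $T/m>\frac{4\sqrt{n_x}B}{c_wa_1}\max\{1,\frac{a_1}{4\sqrt{n_x}\ell_{\max}}\}$. This threshold depends on $\delta$ and on the dimensions, so it is a genuinely additional large-$T$ requirement. Stating it explicitly, as you suggest, is the right repair (the paper leaves it implicit), but it is not contained in $T>m$.
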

\vspace{3pt}

The proof is provided in Appendix~\ref{appendix: SME}. 

Corollary \ref{cor:sme-rate} shows that, for the case of $q_{w}(\ell) = c_{w} \ell$ and under additional Assumption \ref{ass:w-tight}, the convergence rate of SME for LPN systems is $\tilde O(1/T)$, which is better than the $O(1/\sqrt T)$ convergence rate of LSE in terms of the sampling horizon $T$. This is consistent with the convergence rate comparison for linear system identification in \cite{li2024icml}. 

Finally, similar to LSE, we can generalize the convergence rates of SME to closed-loop systems with randomly perturbed policies \eqref{equ: randomly perturbed control}.

\begin{corollary}[SME with Randomly Perturbed Policies]\label{cor:sme-con-c-l}
 Under Assumptions~\ref{ass:analytic},~\ref{ass:bounded-iid-noise}, \ref{ass:w-tight}, consider  a randomly perturbed control policy \eqref{equ: randomly perturbed control} that satisfies Assumption \ref{ass: stabilizing controller}.  
 The convergence rate of SME in Theorem~\ref{th:sme-con-o-l} still holds under the parameters $\tilde s_{\phi}, \tilde p_{\phi}, \tilde \phi_{\max}$ introduced in Corollary \ref{cor: bmsb closed loop}. 
 
 Further, if $w_t$'s distribution satisfies $q_{w}(\ell) = c_{w} \ell$ for all $0<\ell <\ell_{\max}$ for some $c_w>0$, the convergence rate of SME is $\tilde O(1/T)$.
\end{corollary}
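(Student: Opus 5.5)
The plan is to reduce Corollary~\ref{cor:sme-con-c-l} to the open-loop analysis, since the proof of Theorem~\ref{th:sme-con-o-l} uses only three ingredients. First, $\|\phi(x_t,u_t)\|_2\le\phi_{\max}$ for all $t$. Second, the $(1,s_\phi^2 I_{n_\phi},p_\phi)$-BMSB condition for $\{\phi(z_t)\}$ with respect to a filtration $\F_t$ to which $w_t$ is independent. Third, the boundary-visiting Assumption~\ref{ass:w-tight} together with $w_t$ being i.i.d. and independent of $\F_t$.

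Under the closed-loop policy \eqref{equ: randomly perturbed control}, Corollary~\ref{cor: bmsb closed loop} supplies the first two ingredients with constants $\tilde s_\phi,\tilde p_\phi,\tilde\phi_{\max}$. The filtration is still $\F_t=\F(w_0,\dots,w_{t-1},x_0,\dots,x_t,\eta_0,\dots,\eta_t)$, because $u_t=\pi(x_t)+\eta_t$ is $\F_t$-measurable. The third ingredient needs no change: $w_t$ is independent of $\F_t$ by Assumption~\ref{ass:bounded-iid-noise}, and Assumption~\ref{ass:w-tight} concerns $w_t$ alone. I will then reread the argument of Theorem~\ref{th:sme-con-o-l} (Appendix~\ref{appendix: SME}). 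It runs as follows: discretize $\Sb$ and the set of candidate $\Delta=\hat\Theta-\Theta_*$ with a covering argument; use BMSB in blocks of length $m$ to show that each direction is excited with $|v^\intercal\phi(z_t)|\ge s_\phi$ frequently, which gives the $e^{-a_3 m}$ term; and then require a boundary visit of $w_t$ at an excited time, which gives the $(1-q_w)^{\lceil T/m\rceil}$ term. I will check that every step invokes only the three ingredients and never the specific form $u_t=\eta_t$. Replacing $(s_\phi,p_\phi,\phi_{\max})$ with their tilde versions then yields \eqref{eq:sme-diam} verbatim.

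For the second claim, I will repeat the proof of Corollary~\ref{cor:sme-rate}. Take $q_w(\ell)=c_w\ell$ and $m=O(\log(T/\delta)/\tilde p_\phi^2)$ so that the first term is at most $\delta$. Then set the second term equal to $\delta$: using $(1-c_w a_1\varrho/(4\sqrt{n_x}))^{T/m}\le \exp(-c_w a_1\varrho T/(4\sqrt{n_x}m))$, this gives $\varrho=\tilde O(1/T)$. This requires $T$ large enough that $\varrho$ lies within the range $\ell<\ell_{\max}$ after scaling, and $\varrho<1$.

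The main obstacle is confirming that the appendix proof of Theorem~\ref{th:sme-con-o-l} does not secretly use the open-loop structure. The points to check are whether it conditions on $\eta_{t+1}$ being independent of $x_{t+1}$, and whether it uses that $z_t$ lies in the specific set $\mathcal Z=\mathcal X\times\mathcal U$. If the open-loop form does appear, I would replace $\mathcal Z$ with the closed-loop compact set from the proof of Corollary~\ref{cor: bmsb closed loop}, namely $\{(x,\pi(x)+\eta)\}$, which is compact by continuity of $\pi$, and rely solely on the abstract BMSB property rather than on the explicit noise structure.
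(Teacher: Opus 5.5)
Your proposal is correct and follows essentially the paper's route: use Corollary~\ref{cor: bmsb closed loop} to obtain the BMSB condition and boundedness of $\phi(x_t,u_t)$ in closed loop, apply the same general SME bound that yields Theorem~\ref{th:sme-con-o-l}, and rerun the proof of Corollary~\ref{cor:sme-rate}. The obstacle you anticipate does not arise. Theorem~\ref{th:sme-con-o-l} is proved by directly invoking Proposition~\ref{prop: SME general}, which is stated for an arbitrary adapted regressor process and never uses the form $u_t=\eta_t$. Your extra checks, that $\varrho<1$ and that the argument of $q_w$ stays below $\ell_{\max}$ for large $T$, are details the paper leaves implicit.
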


\vspace{4pt}
The proof of Corollary \ref{cor:sme-con-c-l} is provided in Appendix~\ref{appendix: SME}.

%% file: CounterExample.tex
\section{More Discussions on Assumptions}\label{sec:counter example}

This section provides more discussions on the assumptions adopted in this paper. In particular, we will focus on the real-analytic condition in Assumption \ref{ass:analytic} and the semi-continuity condition in Assumption \ref{ass:bounded-iid-noise} because they are not standard assumptions in the recent literature of finite-sample analysis of system identification. We will provide counter-examples to show the importance of these two conditions.

\subsection{More Discussions on Real-Analytic Condition}
This subsection provides more discussions on Assumption \ref{ass:analytic} and constructs a counter-example that satisfies all the assumptions in Theorem \ref{thm: LSE open loop} except for Assumption \ref{ass:analytic}. We will show that i.i.d. random exploration is  insufficient to identify this counter-example system, at least under bounded  noises. 

First, we review  the existing  results and discuss the remaining gap. Recall that 
\cite{mania2022active} has provided a counter-example by a piecewise affine system to show that non-active exploration is insufficient to learn/identify LPN systems, while bilinear systems can be identified by non-active exploration \cite{sattar2022finite}. Notice that  the piecewise-affine functions are \textit{not differentiable}, yet the bilinear systems are \textit{infinitely differentiable} and even \textit{real-analytic}. This paper bridges the gap by showing that real-analytic LPN systems can be identified by non-active exploration. However, there is still a gap between non-differentiable functions and real-analytic functions, which naturally leads to the following question: is a higher-order of differentiability sufficient to guarantee successful identification by non-active exploration?

To address this question, we construct a counter-example system that is infinitely differentiable but not real-analytic. This is based on a standard example of an infinitely differentiable function that is not real analytic, which is reviewed below:
\begin{equation}\label{equ: non real analytic function}
    f(x)=
\begin{cases}
e^{-1/x^{2}}, & x>0,\\
0, & x\le0,
\end{cases}
\end{equation}
where $f(x)$ is infinitely differentiable everywhere but not real-analytic at \(x=0\). Based on function \eqref{equ: non real analytic function}, we construct the following system:
\begin{equation}\label{equ: counter example 1}
    x_{t+1}= a_* f(x_t - 4)+ b_*u_t + w_t
\end{equation}
where $a_*=b_*=1$ are unknown parameters, $w_t $ follows a uniform distribution on $[-1,1]$ and $u_t=\eta_t$ follows a uniform distribution on $[-2,2]$. It is straightforward to verify that counter-example above satisfies Assumptions \ref{ass:bounded-iid-noise} and \ref{ass:liss}, but not Assumption \ref{ass:analytic}.

Next, we show that $a_*$ cannot be identified under i.i.d. $u_t=\eta_t$ following uniform distribution on $[-2,2]$. To learn $a_*$, $x_t$ should be larger than 4, otherwise $f(x_t-4)$ is a constant 0. However, we can show $x_t\leq 3$ for all $t\geq 0$ by induction below. Starting from $x_0=0$, if $x_t \leq 3$ at certain $t\geq 0$, we have $x_{t+1} =a_* f(x_t-4)+b_*u_t+w_t= u_t +w_t\leq 3$. Therefore, $x_t \leq 3$ for all $t$, so $a_*$ can not be identified. Figure \ref{fig:non_analytic_lse_sme_combo} visualizes our theoretical insight and shows that the estimation error of $b_*$ can converge to 0, but the estimation error of $a_*$ remains a constant (the initial guess of $a_*$) for both LSE and SME. 

\begin{remark}
Our counter example \eqref{equ: counter example 1}  is constructed for the bounded-noise case because this paper focuses on bounded noises for the convergence rate analysis, which is consistent with a lot of existing work on general nonlinear system learning \cite{mania2022active,li2023online,lin2024online} and is practical because many real world systems only have bounded noises. 

However, since unbounded Gaussian noises are commonly considered in linear and bilinear system identification, it is also worth discussing the counter-example under unbounded  noises. In this case,  the counter-example in \eqref{equ: counter example 1} no longer applies directly because $w_t$ and $u_t$ have a positive probability of driving the state to be $x_t> 4$. 

Nevertheless, by constructing a multi-dimensional system, we can construct a counterexample for which informative states are visited with exponentially small probability in the state dimension, resulting in an exponentially large sample complexity for non-active exploration. In particular, consider $x_{t+1}=a_* f(\|x_t \|_2- n_x) \mathbf{1}_{n_x}+b_* u_t + w_t$, where $a_*=b_*=1$ are unknown parameters, $\mathbf{1}_{n_x}$ is a known all-one vector, $x_t \in \R^{n_x}$, $u_t$ and $w_t$ are independent and follow standard Gaussian distributions $N(0,I_{n_x})$. The feature function $f(\|x_t \|_2- n_x)$ is infinitely differentiable because: away from the origin this follows from the smoothness of the Euclidean norm and the chain rule, while in a neighborhood of the origin the feature map is identically zero. Moreover, constant-zero region implies that, to identify $a_*$, the state should satisfy $\|x_t\|_2>n_x$.  This can only be achieved when $\|u_t +w_t\|_2 \geq n_x - \sqrt{n_x}$ because  the nonlinear component $\| a_* f(\|x_t \|_2- n_x)\mathbf{1}_{n_x}\|_2 \leq \sqrt{n_x}$. By the concentration inequality of Gaussian distributions, $\Pb(\|u_t +w_t\|_2 \geq n_x - \sqrt{n_x}) \leq e^{-c n_x}$ for some universal constant $c>0.$ Therefore, under i.i.d. random inputs, the state reaches the informative region only with exponentially small probability, in other words, the sample complexity is exponentially large in terms of $n_x$, indicating the insufficiency of non-active exploration in this case. 

The corresponding sample complexity for real-analytic functions under Gaussian noises is left for future studies. 

\end{remark}

\begin{figure}[ht]
    \centering

    \begin{subfigure}[t]{0.46\linewidth}
        \centering
        \includegraphics[width=\linewidth]{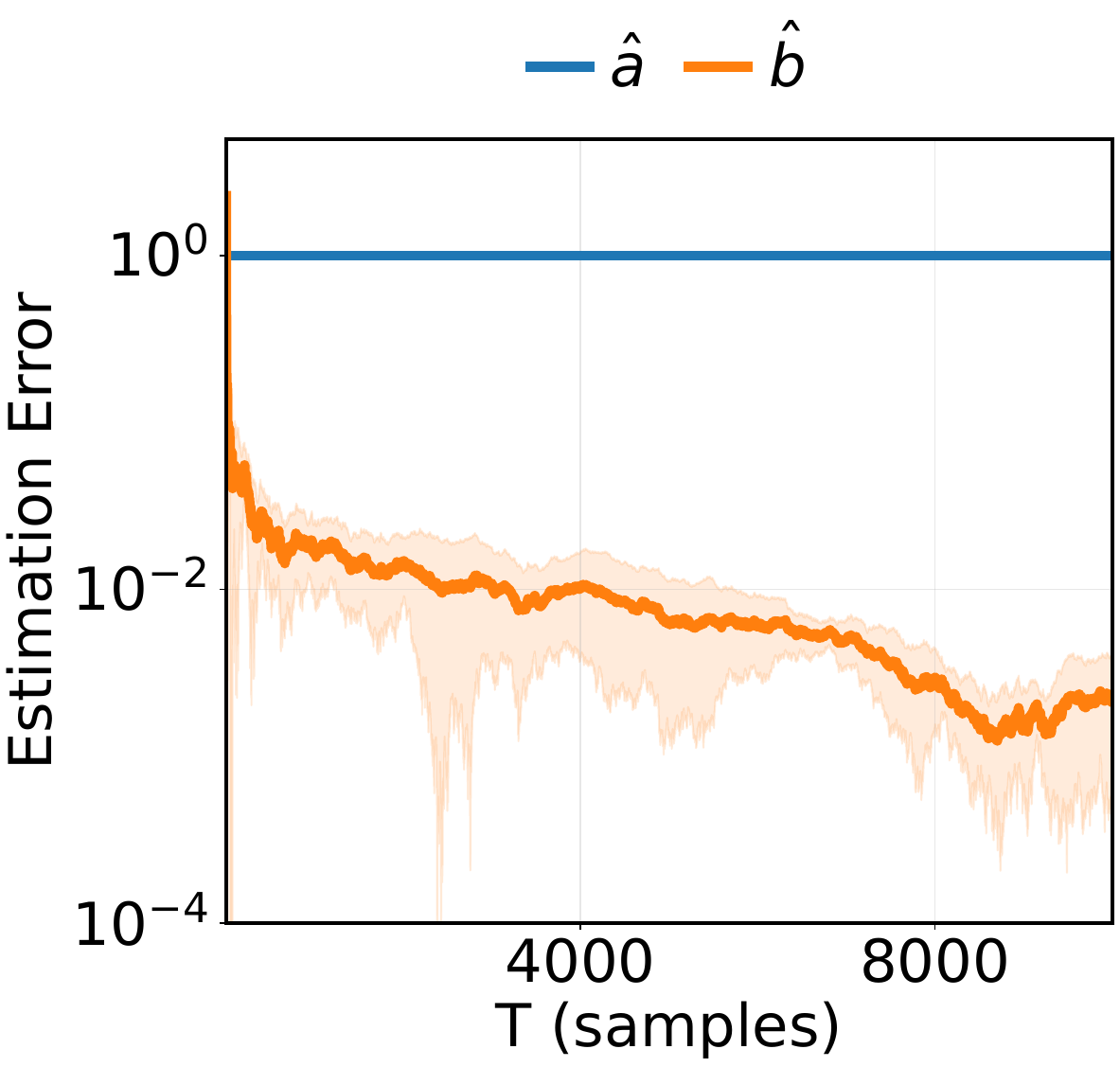}
        \caption{LSE}
        \label{fig:non_analytic_lse}
    \end{subfigure}
    \hfill
    \begin{subfigure}[t]{0.50\linewidth}
        \centering
        \includegraphics[width=\linewidth]{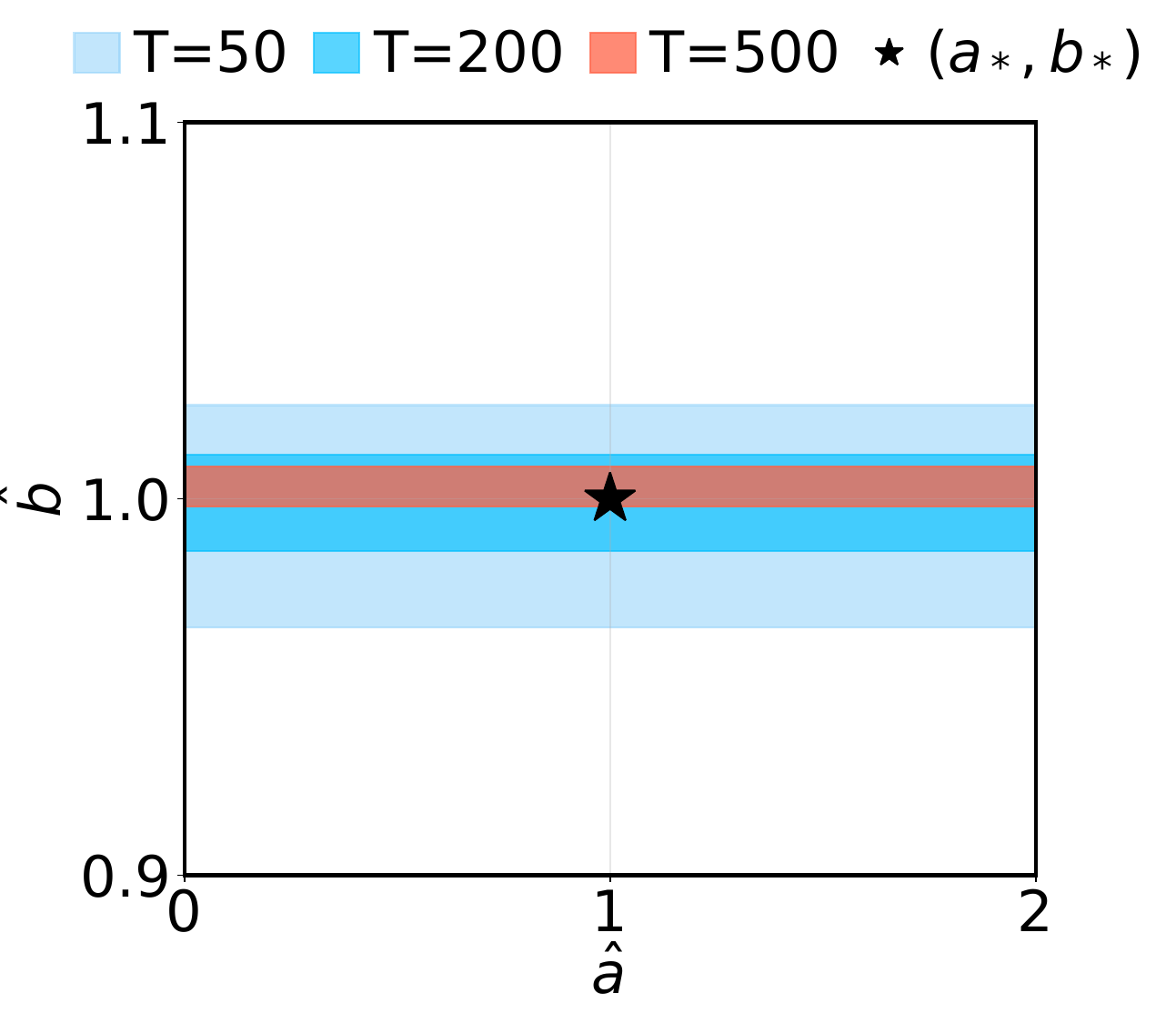}
        \caption{SME}
        \label{fig:non_analytic_sme}
    \end{subfigure}

    \caption{Illustration of counter-example \eqref{equ: counter example 1} for a nonlinear system with non-real-analytic yet infinitely differentiable feature functions.
    (a) Estimation errors of LSE.
    (b) Uncertainty regions constructed by SME.
    Consider an initial estimate of $\hat a_0=\hat b_0=0$ and an initial uncertainty size of 1 for visualization purposes.
    In both cases, the estimation error of $a_*$ remains constant and does not improve with $T$.
    }
    \label{fig:non_analytic_lse_sme_combo}
\end{figure}

\subsection{More Discussions on Semi-Continuous Condition}

This subsection provides more discussions on the semi-continuity condition in Assumption \ref{ass:bounded-iid-noise}. In particular, we construct counter examples that satisfy all the assumptions in Theorem \ref{thm: LSE open loop} except for the semi-continuity condition and shows that system parameters cannot be identified by non-active exploration in these counter examples.

 Consider the following system
\begin{align*}
    \begin{pmatrix}
        \alpha_{t+1}\\
        \beta_{t+1}
\end{pmatrix}=\begin{pmatrix}
        a_* & 0 & 0\\
         b_* &  c_* & d_*
    \end{pmatrix} \begin{pmatrix}
        \sin(\alpha_t)\\
        \beta_t \\
        \sin(u_t)
    \end{pmatrix}+  \begin{pmatrix}
        \epsilon_{t}\\
        \nu_{t}
\end{pmatrix}
\end{align*} 
with state \(x_t=(\alpha_t,\beta_t)^\top\), feature functions $\phi(x_t, u_t)= (\sin(\alpha_t), \beta_t, \sin(u_t))^\top$, disturbance $w_t= (\epsilon_t, \nu_t)^\top$, $a_*=b_*=c_*=1/2, d_*=1$, and  $x_0=0$. It is straightforward to verify Assumptions \ref{ass:analytic} and \ref{ass:liss} for this system.

First, let's focus on the semi-continuity condition  on the disturbances. Consider the following disturbances and inputs. Let $\epsilon_t$ follow a uniform distribution on $\{-\pi, \pi\}$. Let $\nu_t$ and $u_t$ follow uniform distributions on $[-1,1]$. Therefore, all the conditions in Assumption \ref{ass:bounded-iid-noise} are satisfied except for the semi-continuity condition on $w_t$. Notice that $a_*$ cannot be identified in this case because $\alpha_t \in \{-\pi, \pi\}$ for all $t\geq 1$ so $a_*\sin(\alpha_t)=0$  for all $t$, no matter what the value of $a_*$ is. Similarly, $b_*$ cannot be identified because $b_*\sin(\alpha_t)=0$  for all $t$, no matter what the value of $b_*$ is. Therefore, this counter example has shown that the lack of semi-continuity on disturbances may prevent the identification of system parameters. 

Next, we focus on the semi-continuity condition on control inputs. Consider $w_t$ to follow a uniform distribution on a unit ball, and $u_t$ to follow a uniform distribution on $\{-\pi, \pi\}$. It is straightforward to see that all conditions in Assumption \ref{ass:bounded-iid-noise} are satisfied except for the semi-continuity condition on $u_t$. In this case, $d_*$ cannot be identified because $d_* \sin(u_t)=0$ for all $t\geq 0$ no matter what the value of $d_*$ is. Therefore, this counter example has shown that the lack of semi-continuity on control inputs may prevent the identification of system parameters. 

Figure~\ref{fig:assmp_lse_sme_combo} visualizes the two counter-examples above, where LSE's estimation errors for certain parameters do not decrease to zero, and  SME's uncertainty sets fail to shrink along the unidentifiable parameter directions.

\begin{figure}[ht]
    \centering

    \begin{subfigure}[t]{0.48\linewidth}
        \centering
        \includegraphics[width=\linewidth]{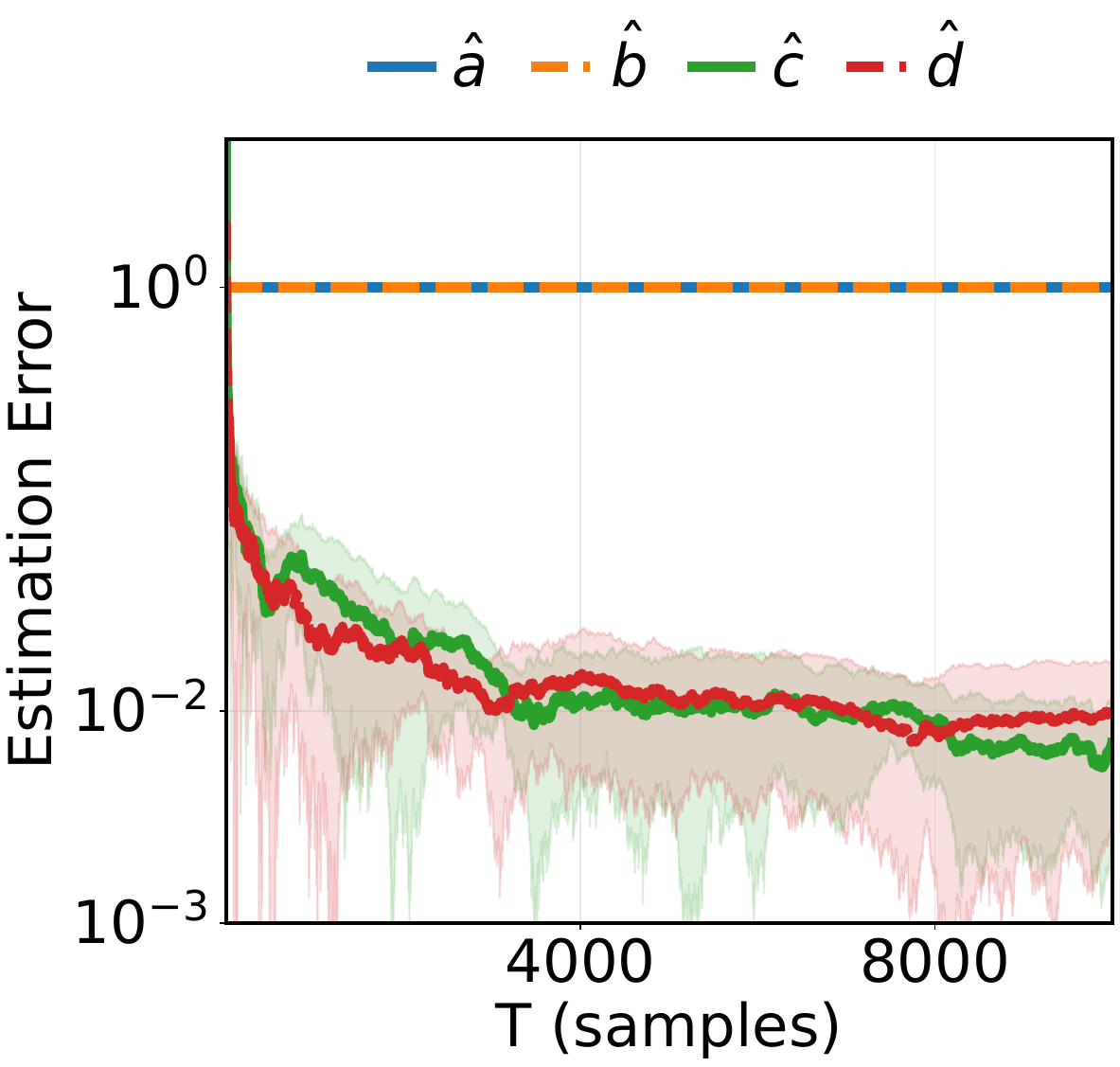}
        \caption{LSE}
        \label{fig:assmp_disturbance_lse}
    \end{subfigure}
    \hfill
    \begin{subfigure}[t]{0.50\linewidth}
        \centering
        \includegraphics[width=\linewidth]{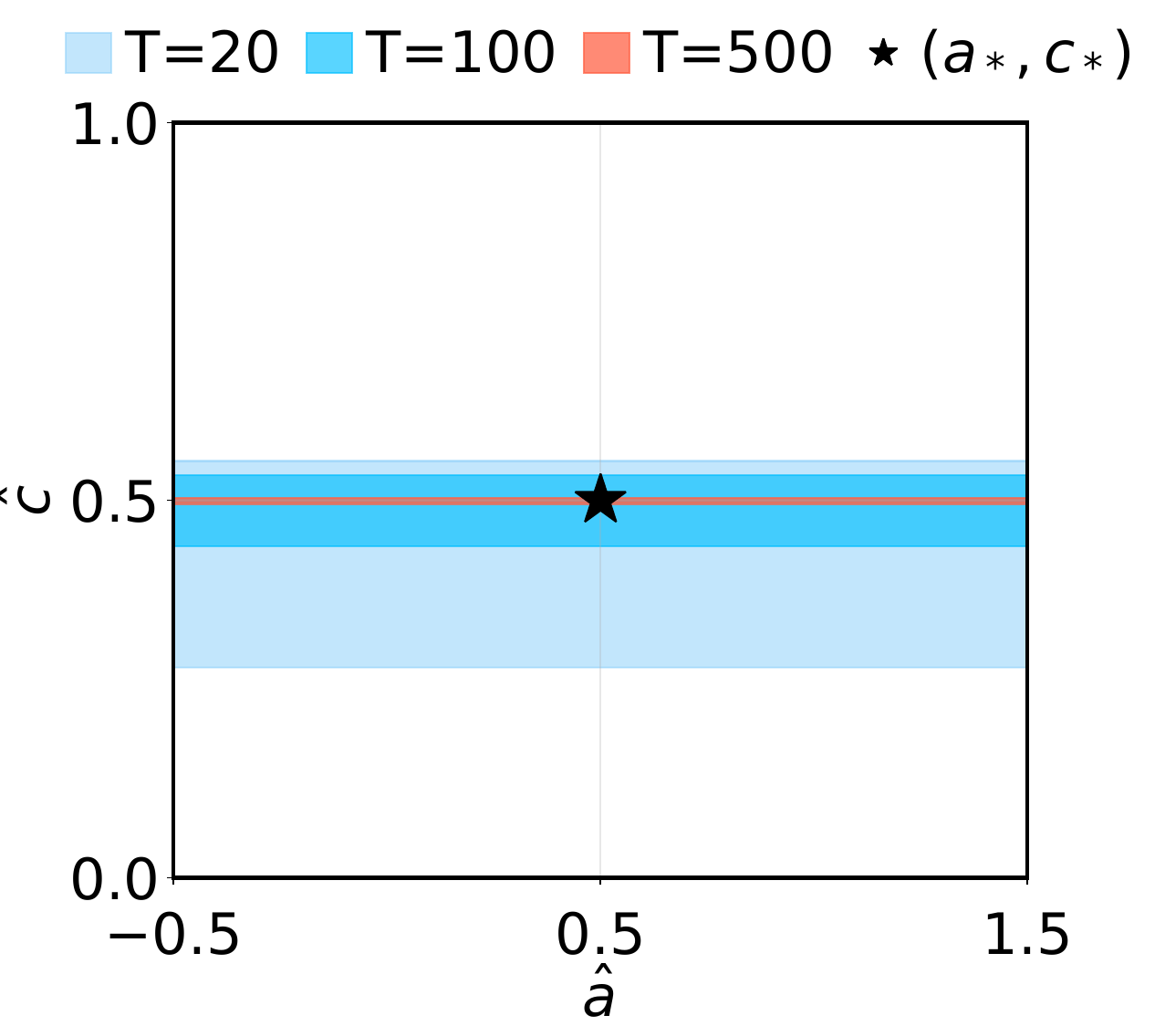}
        \caption{SME}
        \label{fig:assmp_disturbance_sme}
    \end{subfigure}

    \vspace{0.3cm}

    \begin{subfigure}[t]{0.48\linewidth}
        \centering
        \includegraphics[width=\linewidth]{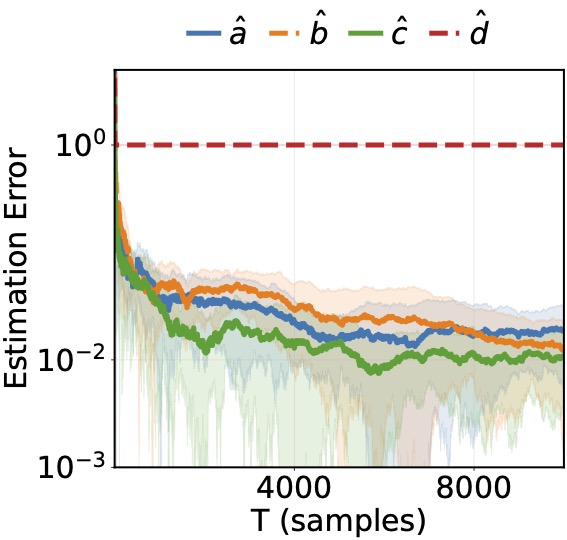}
        \caption{LSE}
        \label{fig:assmp_input_lse}
    \end{subfigure}
    \hfill
    \begin{subfigure}[t]{0.50\linewidth}
        \centering
        \includegraphics[width=\linewidth]{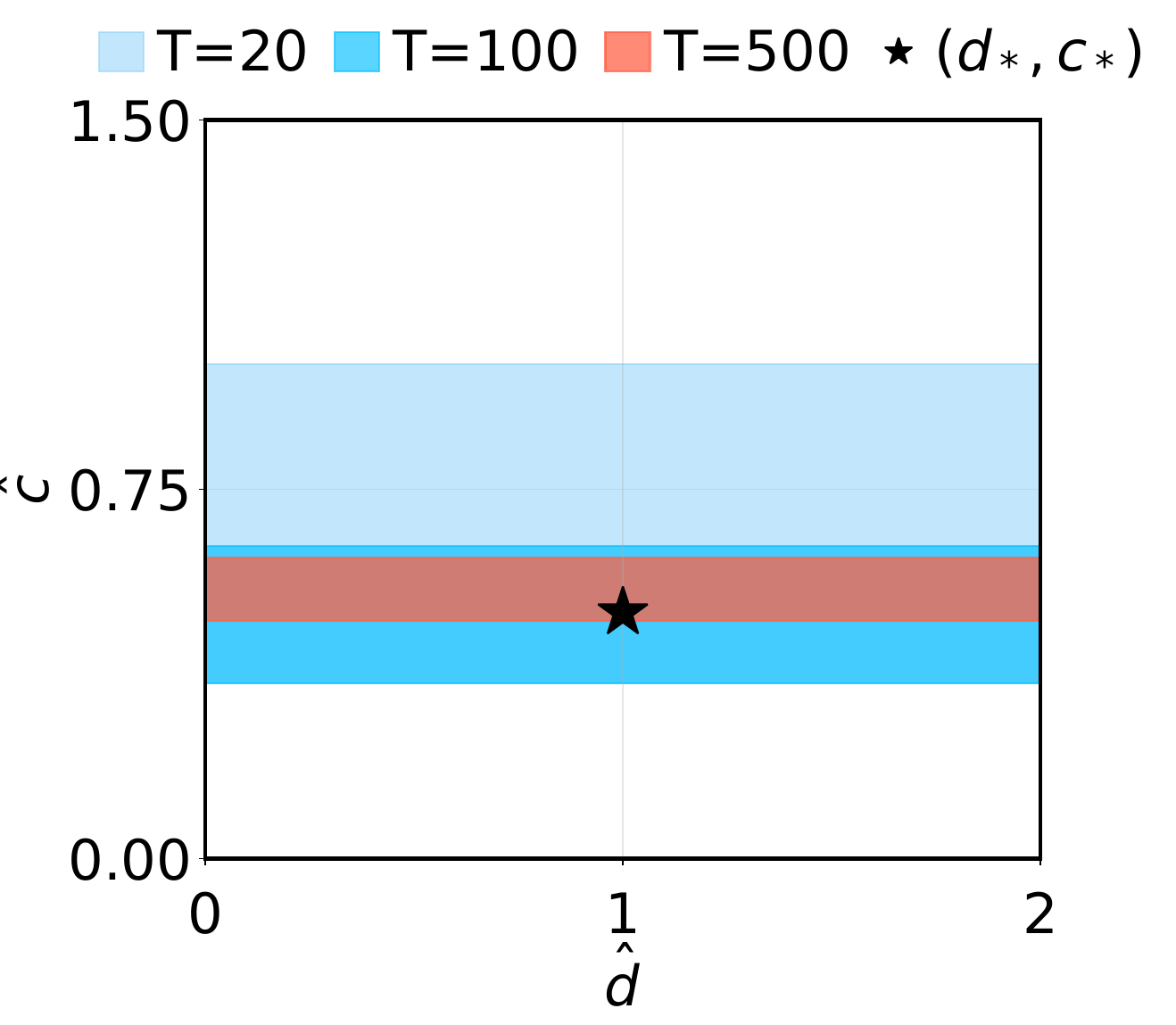}
        \caption{SME}
        \label{fig:assmp_input_sme}
    \end{subfigure}

    \caption{Failure of parameter identification when the semi-continuity condition is violated.
    Figures (a) and (b) correspond to the counterexample with non-semi-continuous $w_t$, and
    Figures (c) and (d) correspond to the counterexample with non-semi-continuous $\eta_t$.
    }
    \label{fig:assmp_lse_sme_combo}
\end{figure}

%% file: Numerical_Examples.tex
\section{Numerical Results}\label{sec:num-exp}

This section provides numerical experiments to demonstrate our theoretical results. The code  is publicly available on GitHub.\footnote{\url{https://github.com/ZiyaoG/real-analytic-nonlinear-sys-id-main.git}
}

\vspace{-10pt}
\subsection{Experiment  Settings}
\textbf{Pendulum} 
Following Example~\ref{ex:pendulum} from Section~\ref{sec:prob}, we let $
\theta_{1} = \frac{1}{l},  \theta_{2} = \frac{1}{Ml^{2}},
$
where the true values are \( M = 0.1\ \text{(kg)} \) and \( l = 0.5\ \text{(m)} \). The control input is defined as $u_t = -k_d\, \omega_t + \eta_t,$ where \( k_d \) is a feedback gain and \( \eta_t \) is i.i.d. noise. In our experiments, we use \( k_d = 2 \) for Figures~\ref{fig:lse_pend_uni} and~\ref{fig:lse_pend_trunc}, and \( k_d = 0.1 \) for Figures~\ref{fig:sme_pend_uni},~\ref{fig:sme_pend_trunc},~\ref{fig:pend_sme}, \ref{fig:error_active} and Table~\ref{tab:active_nonactive_comparison}. The discretization time step used in all simulations is \( \Delta_T = 0.01\ \text{(s)} \).

\textbf{Drone} 
Consider  Example~\ref{ex:quadrotor} in Section~\ref{sec:prob}, where the state dimension is 13 and the control input dimension is 4. The control input is given by \( u_t = \pi(x_t) + \eta_t \), where \( \pi(x_t) \) regulates altitude and the three Euler angles, and is adapted from~\cite{alaimo2013mathematical} with the following controller gains $
    kp_z = 0.75,\ kd_z = 1.25,\
    kp_{att}=0.03, \ kd_{att}=0.00875,
$

where the $att$ subscript indicates that the same attitude gains are applied to roll, pitch and yaw. 
The unknown parameter vector \( \theta_* \in \mathbb{R}^7 \) encodes physical quantities such as the mass and components of the inertia matrix. We adopt a diagonal inertial matrix for simplicity. The ground truth values for these parameters are $
    M = 0.468\ \text{(kg)},
    I_{xx} = I_{yy} = 4.856\times 10^{-3}, I_{zz} = 8.801\times 10^{-3}\ \text{(kg$\cdot$m$^2$)}.
$
These yield the following seven unknown parameters:
$
    \theta_1 = \frac{1}{M},\
    \theta_2 = \frac{1}{I_{xx}},\
    \theta_3 = \frac{I_{yy} - I_{zz}}{I_{xx}},\
    \theta_4 = \frac{1}{I_{yy}},\
    \theta_5 = \frac{I_{zz} - I_{xx}}{I_{yy}},\
    \theta_6 = \frac{1}{I_{zz}},\
    \theta_7 = \frac{I_{xx} - I_{zz}}{I_{zz}}.
$
We use a discretization time step of \( \Delta_T = 0.01 \) seconds. 

\subsection{Convergence Rates of LSE}
Figures~\ref{fig:lse_pend_uni} and~\ref{fig:lse_pend_trunc} present a comparison between  LSE's theoretical bound from Theorem~\ref{thm: LSE open loop} with its empirical estimation error of the unknown parameters $\theta_{*}$ versus trajectory length $T$ for the pendulum example, with uniform and truncated-Gaussian noises and disturbances. Similarly, Figures~\ref{fig:lse_quad_uni} and~\ref{fig:lse_quad_trunc} show this comparison for the drone example. Here, uniform noises and disturbances are i.i.d. generated from $\texttt{uniform}([-1, 1])$, and truncated-Gaussian noises and disturbances are i.i.d. generated from $\texttt{trunc-G}(0, 0.1, [-1, 1])$. 

In each figure, both the theoretical bound and the empirical error are normalized by the $l_2$ norm of the nominal parameter $\theta_{*}$. The empirical errors are averaged across 20 trials and standard deviation is reflected by the shaded area.  To numerically visualize the theoretical bound, we estimate the BMSB parameters by the numerical procedures  described in
Appendix~\ref{append: numerical}. The log-log plots for both
scenarios show that the empirical error decays at a rate of
$O(1/\sqrt{T})$, which is consistent with the theoretical rate in
Theorem~\ref{thm: LSE open loop}. The theoretical bound is larger than the empirical errors by a constant factor due to the conservativeness of  theoretical analysis. 

\begin{figure}[ht]
    \centering
    \subcaptionbox{Uniform\hspace*{-0.75cm} \label{fig:lse_pend_uni}}[0.24\textwidth]{      
        \includegraphics[width=\linewidth]{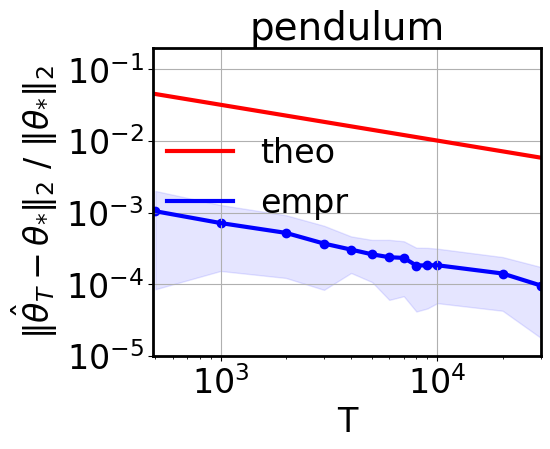}
    }
    \subcaptionbox{Truncated-Gaussian\hspace*{-0.4cm}\label{fig:lse_pend_trunc}}[0.24\textwidth]{   
        \includegraphics[width=\linewidth]{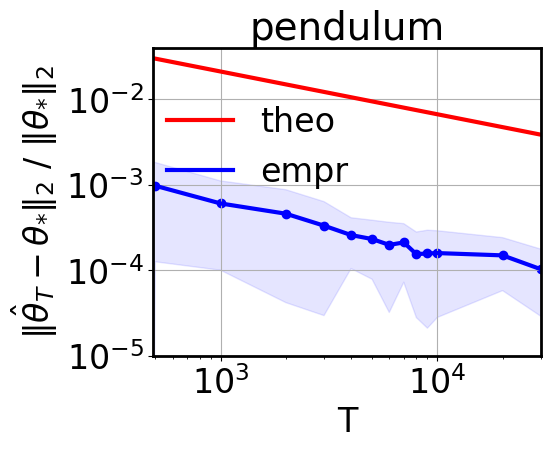}
    }
    \subcaptionbox{Uniform\hspace*{-0.9cm}\label{fig:lse_quad_uni}}[0.24\textwidth]{     
        \includegraphics[width=\linewidth]{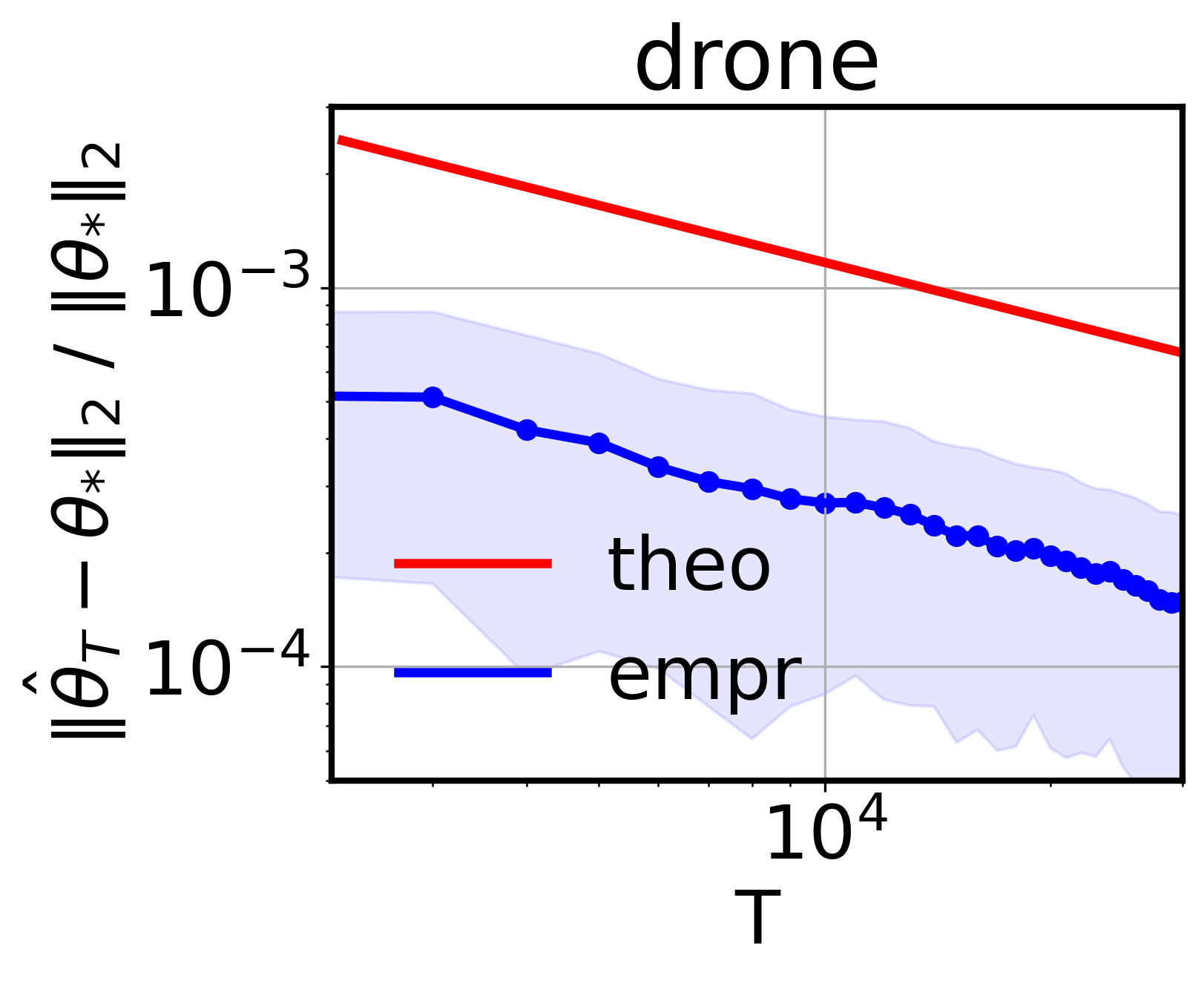}
    }
    \subcaptionbox{Truncated-Gaussian\hspace*{-0.75cm}\label{fig:lse_quad_trunc}}[0.24\textwidth]{
        \includegraphics[width=\linewidth]{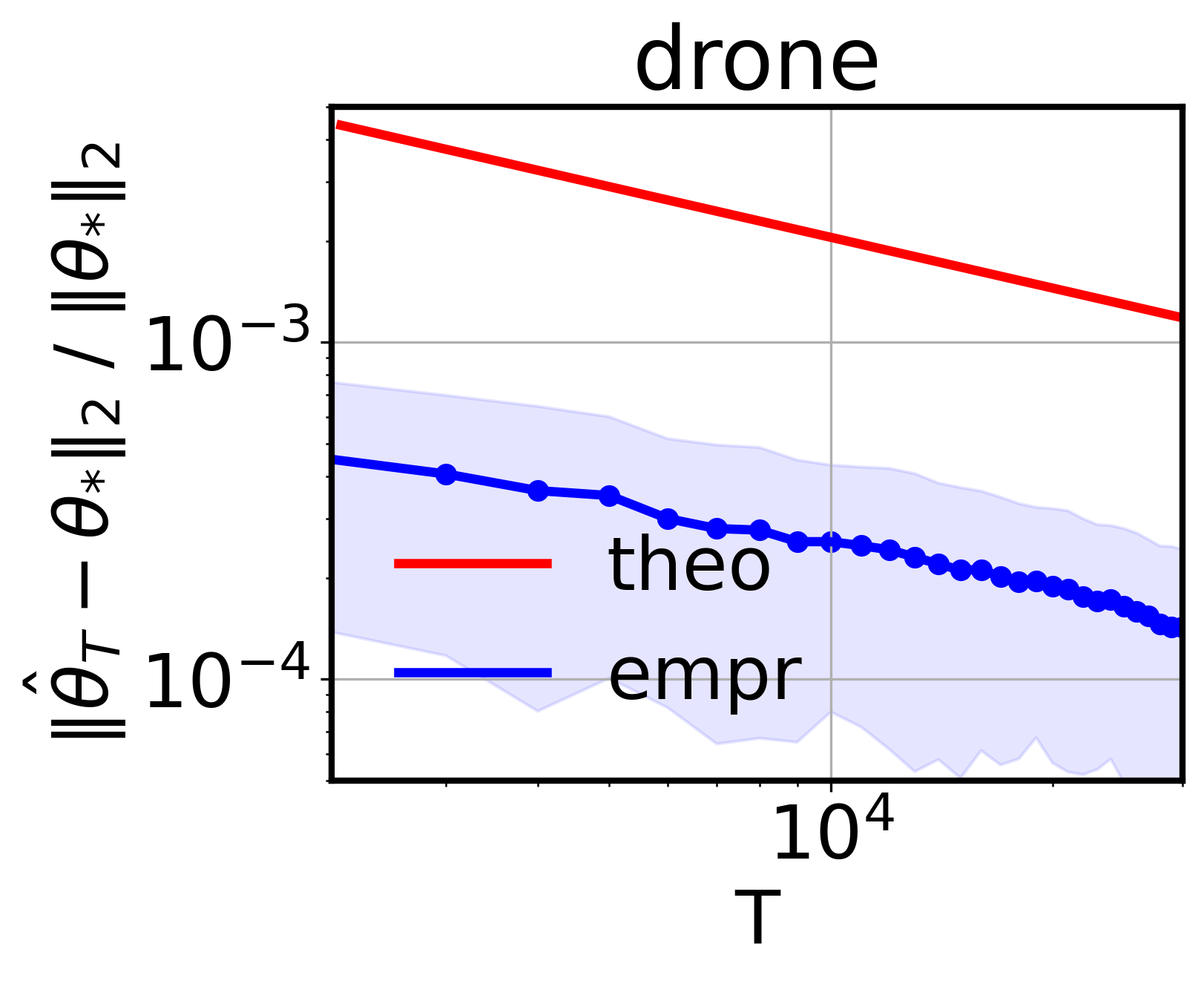}
    }
    \caption{Convergence rate of the LSE for pendulum and drone scenarios.
     "theo" denotes the theoretical convergence rate, and "empr" represents the empirical rate. The shaded areas illustrate empirical standard deviation.}    
\end{figure}

\subsection{Convergence Rates of SME}
Figures~\ref{fig:sme_pend_uni} and~\ref{fig:sme_pend_trunc} show the empirical convergence rates of SME for the pendulum example, for uniform and truncated-Gaussian noises and disturbances, in comparison to the theoretical rates from Corollary~\ref{cor:sme-rate}. Here, uniform noises and disturbances are i.i.d. generated from $\texttt{uniform}([-1, 1])$, and truncated-Gaussian noises and disturbances are i.i.d. generated from $\texttt{trunc-G}(0, 0.5, [-1, 1])$. In each figure, both the theoretical bound and the empirical error are normalized by the $l_2$ norm of the nominal parameter $\theta_{*}$. The empirical errors are averaged across 10 trials and standard deviation is reflected by the shaded area.  

The log-log plots indicate that the empirical rates achieve $O(\frac{1}{T})$, which is consistent with the results from Corollary~\ref{cor:sme-rate} and with the related results for linear systems in~\cite{li2024icml}. A similar result can be observed for the drone cases in Figures~\ref{fig:sme_quad_uni} and~\ref{fig:sme_quad_trunc}. 

Further, Figure~\ref{fig:pend_sme} shows the uncertainty sets estimated by SME for the two unknown parameters, labeled \( \theta_{1} \) and \( \theta_{2} \), in the pendulum example, along with the diameters of these sets as trajectory length grows. We observe that these sets contract as trajectory length increases, with the true values of the unknown parameters lying within the estimated uncertainty sets. Similarly, 
Figure~\ref{fig:sm_drone_2} displays the uncertainty set estimated by SME for  the drone example for various trajectory lengths, with \(\eta_{t}\) and \(w_{t}\) being i.i.d. samples from truncated-Gaussian distributions. The uncertainty sets are observed to shrink as the trajectory length increases and the ground truth  is contained within all the uncertainty sets.

\begin{figure}[htp]
    \centering
    \subcaptionbox{Uniform\hspace*{-0.75cm} \label{fig:sme_pend_uni}}[0.24\textwidth]{     
        \includegraphics[width=\linewidth]{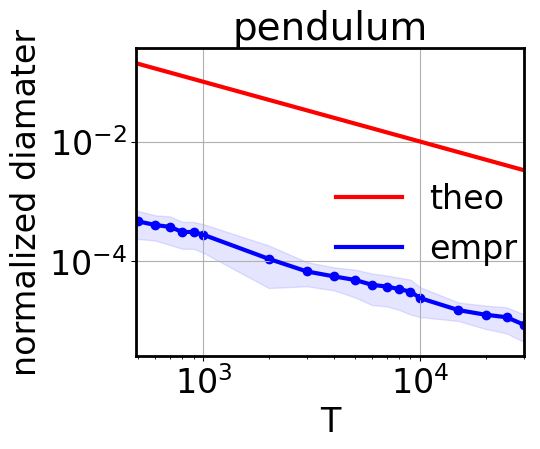}
    }
    \subcaptionbox{Truncated-Gaussian\hspace*{-0.4cm}\label{fig:sme_pend_trunc}}[0.24\textwidth]{  
        \includegraphics[width=\linewidth]{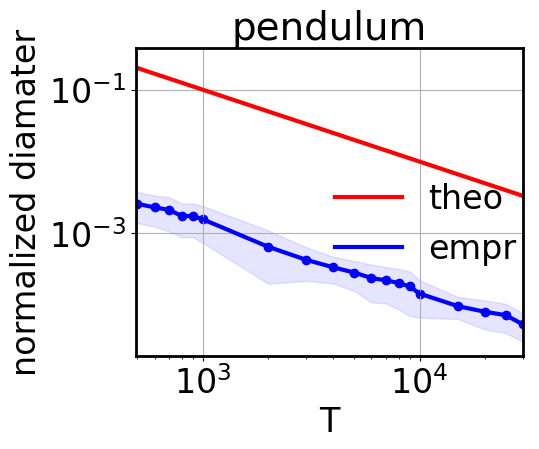}
    }
    \subcaptionbox{Uniform\hspace*{-0.9cm}\label{fig:sme_quad_uni}}[0.24\textwidth]{     
        \includegraphics[width=\linewidth]{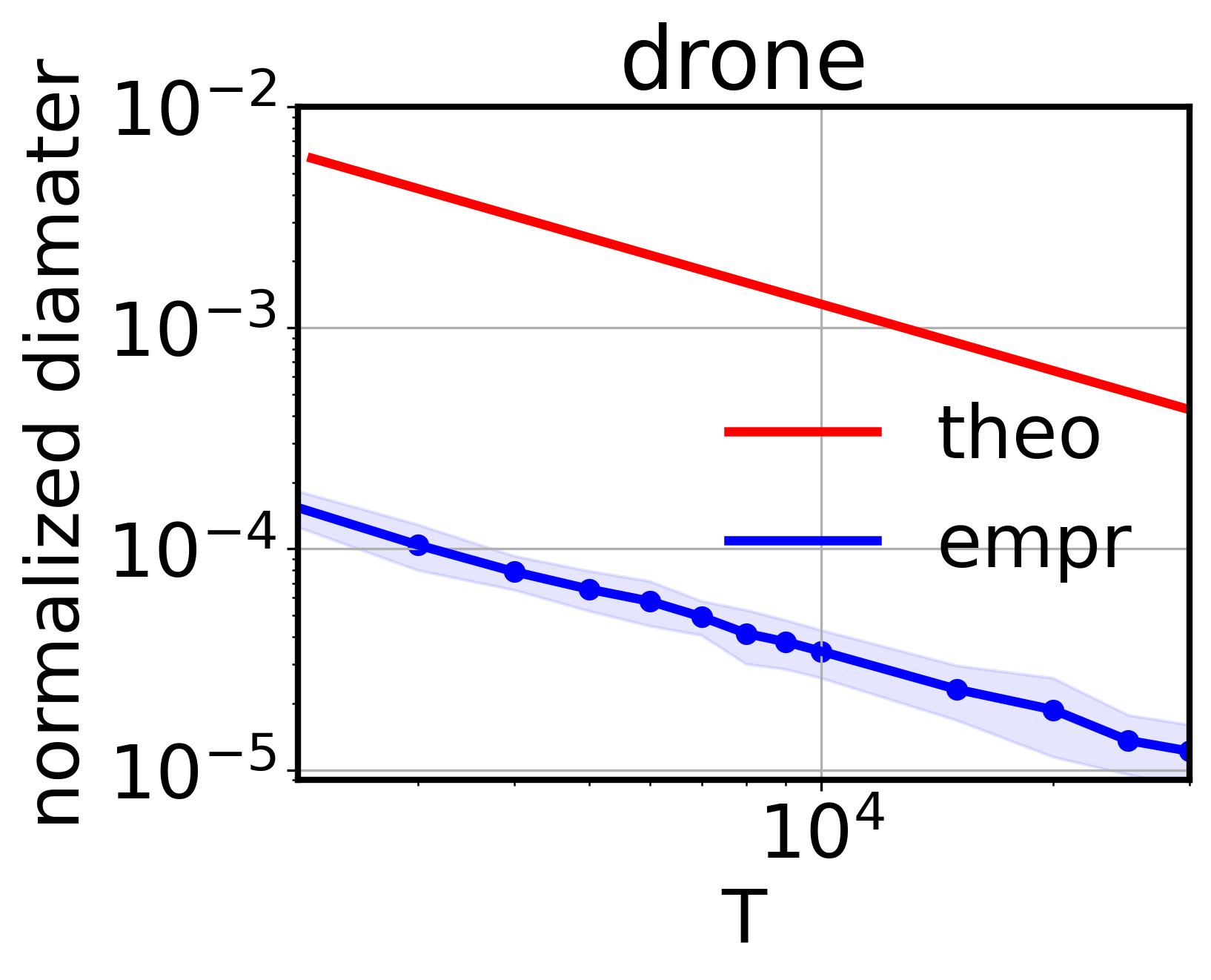}
    }
    \subcaptionbox{Truncated-Gaussian\hspace*{-0.75cm}\label{fig:sme_quad_trunc}}[0.24\textwidth]{  
        \includegraphics[width=\linewidth]{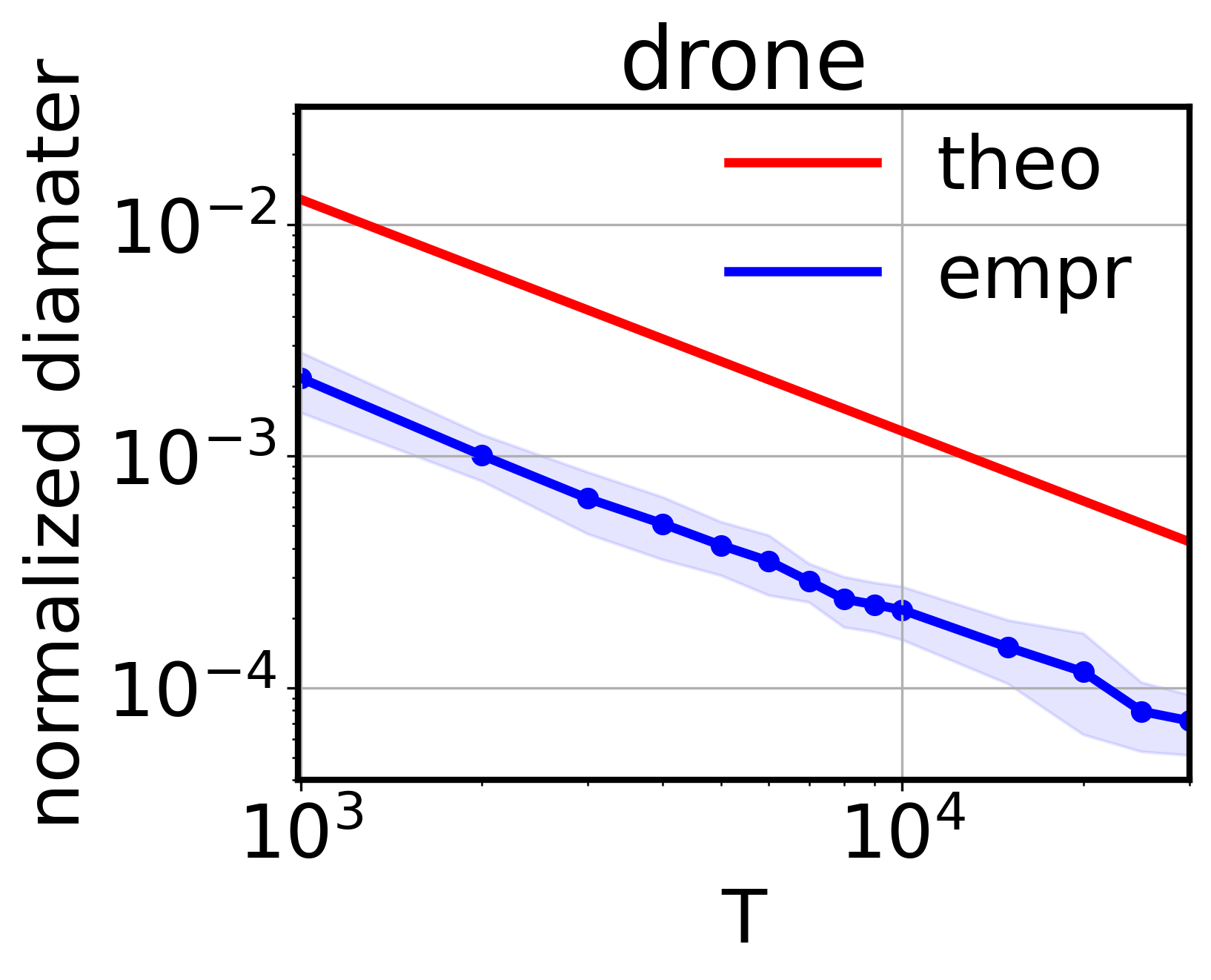}
    }
    \caption{Convergence rate of the SME for pendulum and drone scenarios.  "theo" denotes the theoretical convergence rate, and "empr" represents the empirical rate. }
\end{figure}

\begin{figure}[!htp]
    \subcaptionbox{Uncertainty set diameter\hspace{1.5cm}(b) Uncertainty set\hspace*{0.0cm}}[0.5\textwidth]{      
        \includegraphics[width=0.98\linewidth]{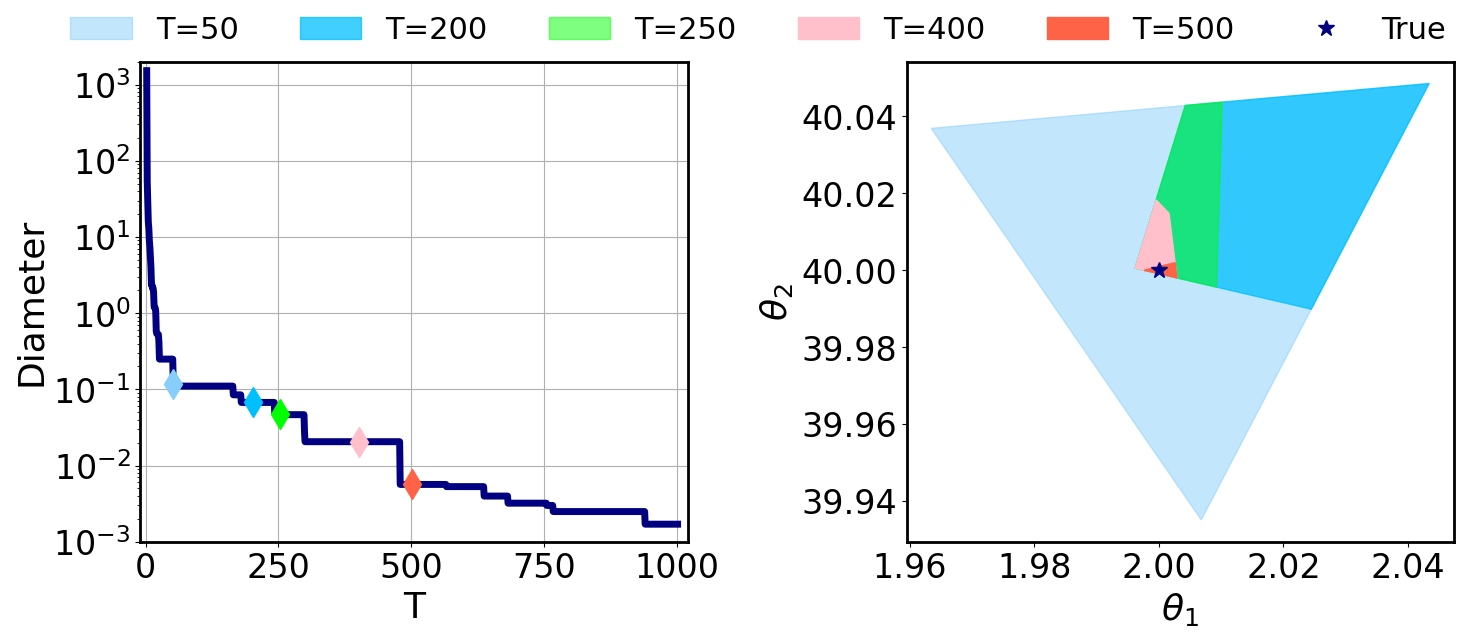}
    }
    \caption{Performance of SME for pendulum with controller $u_{t} = -k_d\omega_{t}+\eta_{t}$, where $\eta_{t}$ i.i.d. generated from $\texttt{trunc-G}(0, 2, [-2, 2])$. System disturbances $w_{t}$ are i.i.d. generated from $\texttt{trunc-G}(0, 1, [-1, 1])$. (a) Diameter of the uncertainty set estimated by SME. (b) Uncertainty set depicted for $T=50, 200, 250, 400, 500$.}
    \label{fig:pend_sme}
\end{figure}

\begin{figure}[ht]
    \centering    \includegraphics[width=0.98\linewidth]{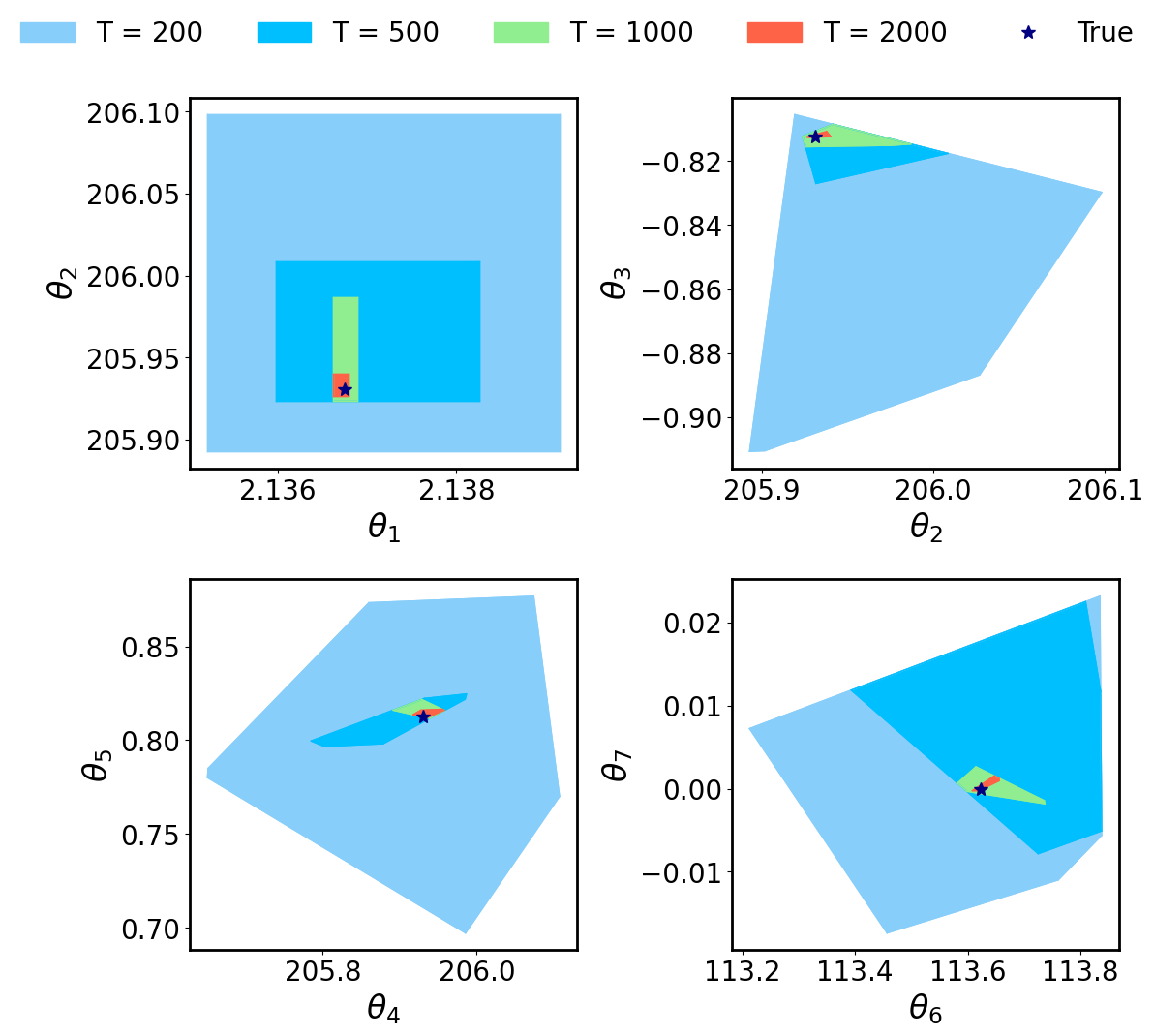}
    \caption{2D projections of the uncertainty set estimated by SME for the unknown parameters of the drone example. The noises and disturbances are i.i.d generated from $\texttt{trunc-G}(0,0.5,[-1,1])$. 
    }
    \label{fig:sm_drone_2}
\end{figure}

\subsection{Numerical Comparison with Other System ID Methods}

Finally, we compare LSE and SME with other system identification methods in the literature. First, we compare with an active exploration method in \cite{mania2022active}  to illustrate a central message of this paper: for LPN systems with real-analytic feature functions, non-active exploration already generates sufficient excitation for efficient identification.  In addition, we compare with a computationally efficient variation of  SME \cite{lu2019robust}  that shows promising convergence performance  under much less computation time. The details of fast SME are discussed in Appendix~\ref{app:numerical details} and its theoretical analysis  is left for future work.

Figure~\ref{fig:error_active} compares the four methods over the sample horizon for the pendulum example. Since the active method does not update at every step due to its planning phase, we match the average update frequency across all methods and report means and standard deviations over five trajectories. Parameter ambiguity is quantified by the normalized estimation error for LSE-type methods and by the normalized maximum coordinate width for SME-type methods. We observe that active exploration reduces the estimation error faster in the early stage and consistently improves over non-active LSE. After the initial transient, however, the SME methods achieve the smallest uncertainty width for the same sample size, with vanilla SME the tightest and Fast SME a close approximation.

Table~\ref{tab:active_nonactive_comparison} reports the number of samples and the wall-clock time required to reach a  precision level $\epsilon = 10^{-4}$, where the wall time includes both data collection and estimation. Because the SME widths decrease monotonically while the LSE-type errors need not, we use different stopping criteria: for the SME methods, the sample complexity is the first time all coordinate widths fall below $\epsilon$; for LSE and active exploration, it is the first time the normalized error falls below $\epsilon$ and remains below it for the rest of the trajectory. The table shows that the SME methods require substantially fewer samples but incur a higher per-sample cost due to the LP updates; LSE has the lowest per-update cost but requires many more samples; and active exploration improves the sample efficiency of non-active LSE, with wall time between the two.

\begin{figure}[ht]
    \centering
    \includegraphics[width=0.98\linewidth]{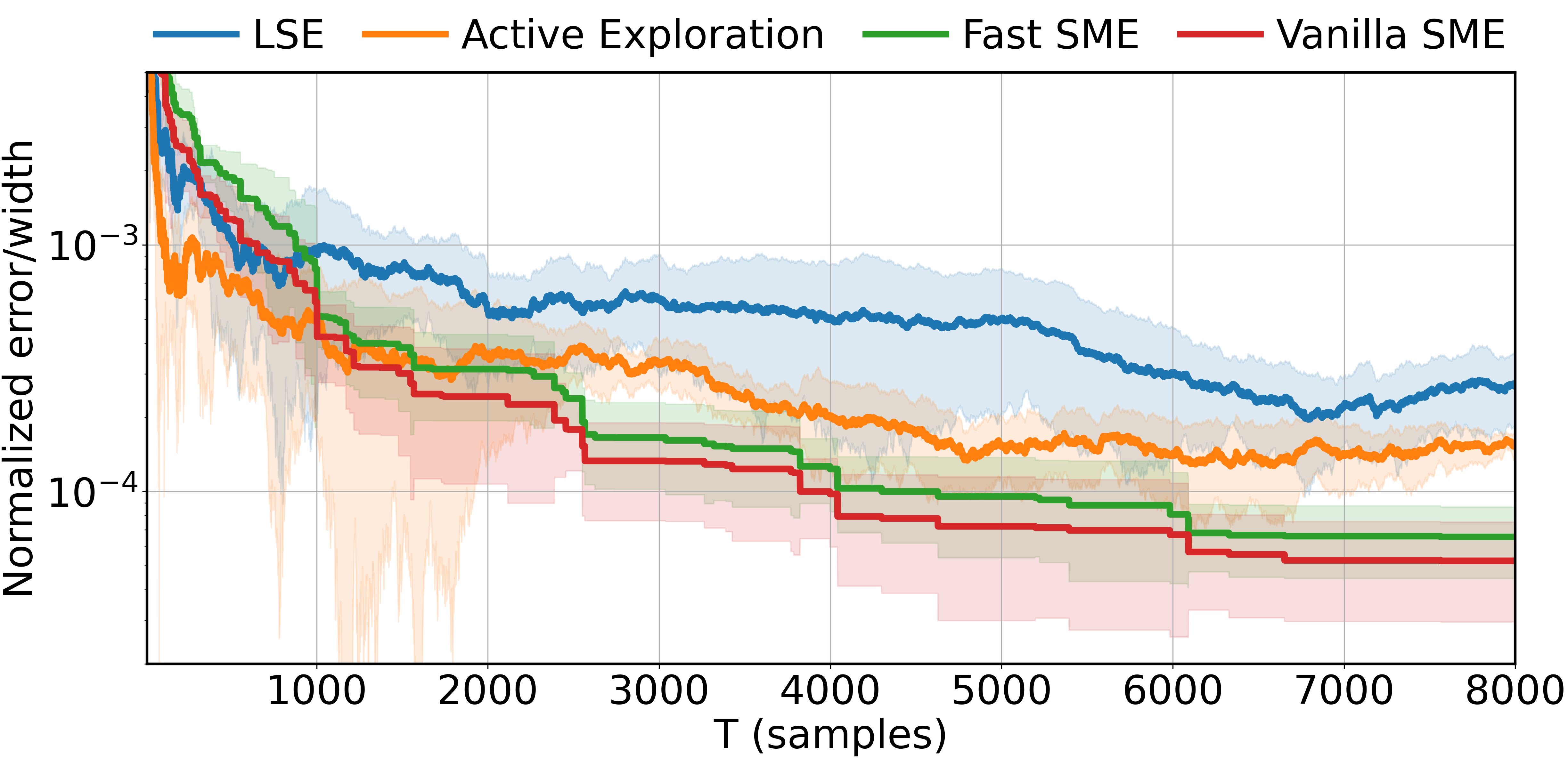}
    \caption{Normalized estimation error/width using active exploration~\cite{mania2022active} and non-active methods including least square and set membership. Exploration noise $\eta_{t}$ and system disturbances $w_{t}$ are i.i.d. generated from $\texttt{trunc-G}(0, 1, [-1, 1])$. The same $w_{t}$ sequence is used in the active exploration method. }
    \label{fig:error_active}
\end{figure}

\begin{table}[ht]
\centering
\footnotesize
\begin{tabular}{lcc}
\hline
Methods & Samples & Wall Time (s) \\
\hline
LSE
& $(1.6\pm 0.7)\times 10^{5}$
& $11.7\pm 5.2$ \\
Active Exploration
& $(4.1\pm 2.8)\times 10^{4}$
& $20.1\pm 13.5$ \\
Fast SME
& $(4.5\pm 1.4)\times 10^{3}$
& $17.2\pm 5.3$ \\
Vanilla SME
& $(3.5\pm 1.5)\times 10^{3}$
& $187.2\pm 65.8$ \\
\hline
\end{tabular}
\vspace{0.2cm}
\caption{The number of samples and wall-clock time required to reach the estimation error threshold of $10^{-4}$. Entries are reported as mean $\pm$ empirical standard deviation across five independent trajectories.} 
\label{tab:active_nonactive_comparison}
\end{table}

%% file: Conclusion.tex
\section{Conclusion}\label{sec:con}

This paper studies the identification of LPN systems with real-analytic feature functions from a single trajectory under non-active exploration. Our  results include establishing BMSB conditions for LPN systems, providing convergence rates of LSE and SME for non-active exploration, constructing counter-examples to demonstrate the importance of the assumed conditions, and numerically validating our theoretical results.

Interesting and important future directions include 1)~deriving explicit expressions for the BMSB constants $(s_\phi, p_\phi)$, whose existence is established in this work, and characterizing their dependence on system dimensions for structured sub-classes of LPN systems; 2)~extending the analysis to nonlinear systems beyond linear parameterization, such as generalized linear system; 3)~leveraging the convergence guarantees for online control designs, such as robust adaptive model predictive control; 4) analyzing convergence rates of fast SME variants, 5) analyzing  convergence rates of system identification of LPN systems under unbounded Gaussian noises, 6) exploring convergence rates for more general nonlinear systems, including non-parametric learning methods, 7) analyzing finite sample identification of LPN systems with output feedback, etc.

%% file: appendix_1.tex
\section*{Appendix}

\subsection{Proof of Corollary \ref{cor: bmsb closed loop}}\label{appendix: LSE closed loop cor}


The proof is straightforward by applying Theorems \ref{th:BMSB-o-l} and  \ref{thm: LSE open loop} to the  closed-loop system. 

With the stabilizing controller $u_t=\pi(x_t)+\eta_t$, the closed loop system is $x_{t+1}= \Theta_* \phi(x_t, \pi(x_t)+\eta_t)+w_t$. We can define new feature functions as $\tilde \phi(x_t, u_t)= \phi(x_t, \pi(x_t)+u_t)$. Notice that this equivalent LPN system 
\begin{equation}
    x_{t+1} =\Theta_* \tilde \phi(x_t, u_t)+w_t\label{equ: new closed loop}
\end{equation}
with i.i.d. random inputs $u_t=\eta_t$
satisfies the assumptions in Theorems \ref{th:BMSB-o-l} and \ref{thm: LSE open loop}. 

In particular, the feature functions $\tilde \phi$ are real-analytic because the composition of two real-analytic functions is still real-analytic by Proposition \ref{prop: real analytic properties} and both $\phi$ and $\pi$ are real-analytic by Assumptions \ref{ass:analytic} and \ref{ass: stabilizing controller}. Further, the conditions on disturbances and inputs are satisfied by Assumption \ref{ass:bounded-iid-noise}. Finally, due to the stabilizing property in Assumption \ref{ass: stabilizing controller}, the system \eqref{equ: new closed loop} is open-loop stable and thus satisfies Assumption \ref{ass:liss}. 

Therefore, we can apply Theorem \ref{th:BMSB-o-l} to establish $(1,\, \tilde s_{\phi}^{2}I_{n_{\phi}},\, \tilde p_{\phi})$-BMSB condition with $\tilde s_{\phi}>0$ and $0<\tilde p_{\phi} <1$ for the closed-loop system. Consequently, we can apply the BMSB condition above and Theorem \ref{thm: LSE open loop} to establish the boundedness of the feature functions $\|\phi(x_t,u_t)\|_2 \leq \tilde \phi_{\max}<+\infty$ and the convergence rate $O(1/\sqrt T)$ of LSE for the closed loop system.

\subsection{Semi-Continuity Property in the Proof of Lemma \ref{lem: P(Nvz)<1}}\label{appendix: semi continuity}
This subsection provides a proof for the following semi-continuity property used in the proof of Lemma \ref{lem: P(Nvz)<1}.
\begin{lemma}\label{lem: semi continuous w eta}
    If $\eta$ and $w$ follow mutually independent distributions that are semi-continuous individually, then the joint distribution of $(\eta, w)$ is also semi continuous.
\end{lemma}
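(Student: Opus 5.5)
The plan is to characterize semi-continuity through the Lebesgue decomposition and then use the fact that a product of absolutely continuous measures is absolutely continuous. Write $\mathbb{P}_\eta$ and $\mathbb{P}_w$ for the laws of $\eta\in\mathbb{R}^{n_u}$ and $w\in\mathbb{R}^{n_x}$. By independence, the joint law of $(\eta,w)$ is the product measure $\mathbb{P}_\eta\otimes\mathbb{P}_w$ on $\mathbb{R}^{n_u+n_x}=\mathbb{R}^{n_z}$. Reordering the coordinates as $(w,\eta)$ is a linear isometry, so it preserves Lebesgue-null sets and the order does not matter.

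\textbf{Step 1 (reformulating Definition \ref{def:semi-continuous}).} By the Lebesgue decomposition theorem, any probability measure $\mu$ on $\mathbb{R}^n$ can be written as $\mu=\mu_{ac}+\mu_s$. Here $\mu_{ac}\ll\texttt{Leb}^n$ and $\mu_s$ is concentrated on some set $S$ with $\texttt{Leb}^n(S)=0$. I claim that $\mu$ is semi-continuous if and only if $a:=\mu_{ac}(\mathbb{R}^n)>0$.
\begin{itemize}
\item If $a=0$, then $\mu(S)=\mu_s(S)=1$ with $\texttt{Leb}^n(S)=0$, which violates semi-continuity.
\item If $a>0$, then for every null set $\mathcal E$ we have $\mu_{ac}(\mathcal E)=0$. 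Hence $\mu(\mathcal E)=\mu_s(\mathcal E)\le 1-a<1$.
\end{itemize}

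\textbf{Step 2 (product bound).} Apply Step 1 to obtain $\mathbb{P}_\eta=\mu_{ac}+\mu_s$ with $a=\mu_{ac}(\mathbb{R}^{n_u})>0$, and $\mathbb{P}_w=\nu_{ac}+\nu_s$ with $b=\nu_{ac}(\mathbb{R}^{n_x})>0$. Expanding the product gives
\begin{equation*}
\mathbb{P}_\eta\otimes\mathbb{P}_w=\mu_{ac}\otimes\nu_{ac}+\big(\mu_{ac}\otimes\nu_s+\mu_s\otimes\nu_{ac}+\mu_s\otimes\nu_s\big).
\end{equation*}
The bracketed remainder is a nonnegative measure with total mass $1-ab$.

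Next, $\mu_{ac}\otimes\nu_{ac}$ is absolutely continuous with respect to $\texttt{Leb}^{n_z}=\texttt{Leb}^{n_u}\otimes\texttt{Leb}^{n_x}$. Indeed, if $f$ and $g$ are the Radon--Nikodym densities of $\mu_{ac}$ and $\nu_{ac}$, then Tonelli's theorem shows that $f(\eta)g(w)$ is a density for the product measure. Consequently, for any $\mathcal E$ with $\texttt{Leb}^{n_z}(\mathcal E)=0$,
\begin{equation*}
\mathbb{P}(\mathcal E)\le 0+(1-ab)<1.
\end{equation*}
This is exactly semi-continuity of the joint law.

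\textbf{Main obstacle.} The argument is essentially measure-theoretic bookkeeping. The one step needing care is Step 1: Definition \ref{def:semi-continuous} is phrased as a property of all null sets, and it must be converted into the quantitative statement that the absolutely continuous part has positive mass. A direct approach avoiding the decomposition would slice a null set $\mathcal E$ into sections $\mathcal E_\eta=\{w:(\eta,w)\in\mathcal E\}$. Fubini shows that almost every section is null, but the exceptional set of $\eta$ could carry positive $\mathbb{P}_\eta$-mass. This is why I would route the proof through the Lebesgue decomposition, which handles that issue cleanly.
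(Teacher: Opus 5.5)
Your proposal is correct and follows essentially the same route as the paper: reformulate semi-continuity as ``the absolutely continuous part of the Lebesgue decomposition has positive mass,'' then expand the product measure and isolate $\mu_{ac}\otimes\nu_{ac}$. The only difference is minor. You bound $\mathbb{P}(\mathcal E)\le 1-ab$ directly and explicitly justify absolute continuity of $\mu_{ac}\otimes\nu_{ac}$ via the density $f(\eta)g(w)$. The paper instead re-invokes its Claim 1 on the joint law and leaves that absolute-continuity step implicit.
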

\begin{proof}
This statement is intuitively correct, but a rigorous proof relies on the Lebesgue decomposition theorem below.
\begin{proposition}[Lebesgue Decomposition \cite{pollard2002user}]\label{prop: lebesgue decompose}
    Consider a probability distribution $\Pb$ on $\R^n$, there exists unique finite measures $\Pb_{ac}$ and $\Pb_{s}$ on $\R^n$ such that
    $$\Pb=\Pb_{ac}+\Pb_s$$
    where $\Pb_{ac}$ is absolutely continuous on $\R^n$, i.e. $\texttt{Leb}^{n}(E)=0$ implies $\Pb_{ac}(E)=0$ for any measurable set $E$; and $\Pb_s$ is singular on $\R^n$, i.e. there exists a set $E$ with $\texttt{Leb}^{n}(E)=0$ such that $\Pb_s(E^{\mathrm c})=0$.\footnote{$\Pb_{ac}$ and $\Pb_s$ may not be probability measures because their total masses may be smaller than 1. }
\end{proposition}

Based on Proposition \ref{prop: lebesgue decompose}, we establish the following characterization of semi-continuity.

\textit{Claim 1:} A probability measure $\Pb$ is semi-continuous if and only if $\Pb_{ac}\not=0$. 

\begin{proof} If $\Pb_{ac}=0$, then $\Pb=\Pb_s$ is concentrated on a
set $E$ with $\texttt{Leb}^{n}(E)=0$ but $\Pb(E)=1$, which violates the definition of semi-continuity. 

Conversely, if $\Pb_{ac}(\mathbb{R}^n)>0$, then for every $E$ with
$\mathrm{Leb}_n(E)=0$,
we have $
\Pb(E)=\Pb_{ac}(E)+\Pb_{s}(E)
      =\Pb_s(E)\le\Pb_{s}(\mathbb{R}^n)
      =1-\Pb_{ac}(\mathbb{R}^n)<1,
$
which proves the semi-continuity by Definition \ref{def:semi-continuous}. 
\end{proof}

\vspace{4pt}

Now, we are ready to prove Lemma \ref{lem: semi continuous w eta}. By Proposition \ref{prop: lebesgue decompose}, we decompose the semi continuous distributions $\Pb_\eta$ and $\Pb_w$ as follows:
$$\Pb_\eta=\Pb_{\eta,ac}+\Pb_{\eta,s}, \quad \Pb_w=\Pb_{w,ac}+\Pb_{w,s}.$$
Then, by the independence of $\eta$ and $w$, the joint distribution $\Pb_{\eta,w}$ can be decomposed by 
$$\Pb_{\eta,w}= \Pb_\eta\otimes \Pb_w= \Pb_{\eta,ac}\otimes \Pb_{w,ac}+ \Pb_{\eta,w,s}$$
where $\Pb_{\eta,w,s}=\Pb_{\eta,s}\otimes \Pb_{w,ac}+ \Pb_{\eta,ac}\otimes \Pb_{w,s}+ \Pb_{\eta,s}\otimes \Pb_{w,s}$ and $\otimes$ denotes the product of two measures.  By Claim 1, we have $\Pb_{\eta,ac}(\R^{n_u})>0$ and $\Pb_{w,ac}(\R^{n_x})>0$. Therefore, the joint distribution has a nonzero absolutely continuous component: $\Pb_{\eta,ac}\otimes \Pb_{w,ac}(\R^{n_u+n_x})>0$, which guarantees the semi continuity of the joint distribution by Claim 1. 
\end{proof}

%% file: appendix_2.tex
\subsection{Proofs for Section \ref{sec: SME} on SME}\label{appendix: SME}

This subsection provides the proofs for the convergence rates of SME in Section \ref{sec: SME}, including Theorem \ref{th:sme-con-o-l}, Corollaries \ref{cor:sme-rate} and \ref{cor:sme-con-c-l}, and examples of $q_w(\ell)=c_w\ell$.

\vspace{4pt}

\noindent \textbf{Proof of Theorem \ref{th:sme-con-o-l}.} The proof is straightforward by applying the BMSB condition in Theorem \ref{th:BMSB-o-l} and the SME's convergence rate for general linear regression in \cite{li2024icml}, which is reviewed in the proposition below.

\begin{proposition}[SME  for General Linear Regression~\cite{li2024icml}]\label{prop: SME general}
Consider a general linear regression $y_{t} = \Theta_{*} x_{t} + w_{t}$, where  $y_t\in \R^{n_y}$, $x_t \in \R^{n_x}$ is adapted to a natural filtration $\F_t$. Under Assumption \ref{ass:bounded-iid-noise}, Assumption \ref{ass:w-tight}, and the following conditions, if  $\{x_{t}\}_{t \geq 1}$ satisfies the $(1, s_x^2 I_{n_x}, p_x)$-BMSB condition and there exists $x_{\max} <+\infty $ such that $\|x_{t}\|_{2} \leq x_{\max}$ almost surely for all $t \geq 0$, then for any $m \geq 1$ and  $\varrho \in (0, 1)$, when $T > m$, the diameter of the SME's uncertainty set 
satisfies:
\begin{align*}
&\mathbb{P}\bigg(\textup{diam}(\ThetaSet) > \varrho\bigg)\leq 544 \frac{T}{m} n_{x}^{2.5} \log(a_{2}n_{x})a_{2}^{n_{x}}\exp(-a_{3}m)\\
    &+ 544 n_{x}^{2.5} n_{y}^{2.5} \log(a_{4} n_{x} n_{y}) a_{4}^{n_{x}n_{y}}\bigg(1-q_{w} \bigg(\dfrac{a_{1}\varrho}{4\sqrt{n_{y}}}\bigg)\bigg)^{\lceil T/ m \rceil},
\end{align*}
where $a_{1} = \frac{s_{x}p_{x}}{4},a_{2} = \frac{64x_{\max}^{2}}{s_{x}^{2}p_{x}^{2}},a_{3} = \frac{p_{x}^{2}}{8}$, and $a_{4} = \max (\frac{4x_{\max}\sqrt{n_{y}}}{a_{1}}, 1)$.
\end{proposition}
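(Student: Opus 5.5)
Proposition~\ref{prop: SME general} is quoted from \cite{li2024icml}, so my plan is to reconstruct its argument in the general regression setting $y_t=\Theta_* x_t+w_t$ with bounded regressors satisfying $(1,s_x^2I_{n_x},p_x)$-BMSB.

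\textbf{Reduction to a covering argument.} Since $\Theta_*\in\ThetaSet$, any $\hat\Theta\in\ThetaSet$ satisfies $\|\hat\Theta-\Theta_*\|_F\ge \varrho/2$ whenever $\textup{diam}(\ThetaSet)>\varrho$. Writing $\Delta=\hat\Theta-\Theta_*$, membership means $w_t-\Delta x_t\in\mathcal W$ for all $t$. Hence it suffices to bound the probability that some $\Delta$ with $\|\Delta\|_F\ge\varrho/2$ satisfies $\|w_t-\Delta x_t\|_\infty\le w_{\max}$ for every $t$. By convexity of $\ThetaSet$, we may restrict to $\|\Delta\|_F=\varrho/2$, or rescale to a sphere.

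\textbf{Step 1: excitation via BMSB.} Split $[0,T)$ into $\lceil T/m\rceil$ blocks of length $m$. Fix a unit direction $v$. BMSB with $k=1$ and a martingale Chernoff/Azuma bound give that, within each block, at least $p_x m/2$ indices satisfy $|v^\intercal x_t|\ge s_x$, except with probability $e^{-p_x^2m/8}$. A union bound over an $\epsilon$-net of the sphere of size $(c\,x_{\max}/(s_xp_x))^{n_x}$, using $\|x_t\|\le x_{\max}$ to pass from the net to all directions, and then over the $T/m$ blocks, produces the first term. The factors $a_2^{n_x}$, $a_3=p_x^2/8$ and the polynomial prefactor come from the net cardinality.

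\textbf{Step 2: boundary visits.} On the good event, for any $\Delta$ with $\|\Delta\|_F=\varrho/2$, pick a row $j$ with $\|\Delta_j\|_2\ge\varrho/(2\sqrt{n_y})$. Each block then contains a time $t$ with $|\Delta_j x_t|\ge a_1\varrho/\sqrt{n_y}$ up to constants, where $a_1=s_xp_x/4$ absorbs the fraction $p_x/2$ and the net slack. Choose this $t$ measurably, e.g.\ as the first time in the block satisfying the excitation criterion relative to a net point. If the sign of $\Delta_jx_t$ is positive, feasibility requires $w_t^j\ge -w_{\max}+\Delta_jx_t$. This fails if $w_t^j+w_{\max}\le\ell$ with $\ell=a_1\varrho/(4\sqrt{n_y})$, an event of probability at least $q_w(\ell)$ by Assumption~\ref{ass:w-tight}, and independent of $x_t$ since $w_t$ is independent of $\mathcal F_t$. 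So per block, $\Delta$ survives with probability at most $1-q_w(\ell)$, and by conditional independence across blocks, through all blocks with probability at most $(1-q_w(\ell))^{\lceil T/m\rceil}$.

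\textbf{Step 3: uniformity over $\Delta$.} Take a net over the $n_xn_y$-dimensional set of $\Delta$ of size about $a_4^{n_xn_y}$. The union bound gives the second term, using $x_{\max}$ to control the perturbation $|(\Delta-\Delta')x_t|$ within the slack $\ell$.

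\textbf{Main obstacle.} The hardest part is Step 2's measurability and independence bookkeeping: the excited time $t$ must be selected before $w_t$ is revealed, and the net approximation in Steps 1 and 3 must keep the margin $a_1\varrho/(4\sqrt{n_y})$ intact. I would handle this by fixing net points first, applying the stopping-time selection per net point, and only then taking union bounds.
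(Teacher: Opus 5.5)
The paper does not prove this proposition. It is imported verbatim from \cite{li2024icml}, and the paper only checks its hypotheses (the BMSB property from Theorem~\ref{th:BMSB-o-l} and the bound $\phi_{\max}$) when applying it to prove Theorem~\ref{th:sme-con-o-l}. Your outline is therefore a reconstruction of the cited argument rather than an alternative to one in the paper, and its architecture is sound:
\begin{itemize}
\item reduce to error matrices with $\|\Delta\|_F=\varrho/2$, using convexity of $\ThetaSet$ and $\Theta_*\in\ThetaSet$ (your first sentence should say \emph{some} $\hat\Theta$, not \emph{any});
\item apply blockwise Azuma to the BMSB indicators, plus a sphere net, for the first term; this is where $a_3=p_x^2/8$ comes from;
\item pick a row with $\|\Delta_j\|_2\ge\varrho/(2\sqrt{n_y})$ and a per-block stopping time fixed per net point, then use the two-sided boundary-visiting assumption to get a factor $1-q_w(\ell)$ per test by iterated conditioning;
\item take a net over $\Delta$ whose mesh times $x_{\max}$ fits in the slack, for the $a_4^{n_xn_y}$ factor.
\end{itemize}

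What the citation buys the paper is the exact constants: $544$, the $n^{2.5}\log(\cdot)$ prefactors, and the precise $a_2$ and $a_4$. These depend on the specific covering bound and net scales used in the reference. Your sketch asserts them rather than derives them, so as written it gives the bound only up to constants.

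There is one bookkeeping slip. The interval $[0,T)$ contains only $\lfloor T/m\rfloor$ full blocks of length $m$, so one test per block certifies the exponent $\lfloor T/m\rfloor$, not $\lceil T/m\rceil$. This is easy to repair within your framework. Your count version of Step 1 gives at least $p_xm/2$ excited indices in every full block. Take the test times to be successive excited times, each of which is a stopping time; this gives at least $2\lfloor T/m\rfloor\ge\lceil T/m\rceil$ tests whenever $p_xm>2$. If instead $p_xm\le 2$, the first term already exceeds $1$: BMSB with $\|x_t\|_2\le x_{\max}$ forces $s_x\le x_{\max}$, hence $a_2\ge 64$, while $a_3m\le 1/4$. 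Using all excited times this way would even give an exponent of order $p_xT$ in place of $\lceil T/m\rceil$.
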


\vspace{2pt}

The remaining proof is straightforward. First,  we verify  the conditions in  Proposition \ref{prop: SME general}. Notice that our LPN system \eqref{eq:sys} can be viewed as a linear regression. Assumptions \ref{ass:bounded-iid-noise} and \ref{ass:w-tight} are assumed to hold in Theorem \ref{th:sme-con-o-l}. The BMSB condition  and the boundedness of $\phi(x_t,u_t)$ have been established in Theorem \ref{th:BMSB-o-l}. Therefore, we can prove Theorem \ref{th:sme-con-o-l} by applying Proposition \ref{prop: SME general}. 

The explicit formulas of the coefficients hidden in the $\tilde O(\cdot)$ notation are provided below.
\begin{align*}
&\mathbb{P}\bigg(\textup{diam}(\ThetaSet) > \varrho\bigg)\leq \underbrace{544 \frac{T}{m} n_{\phi}^{2.5} \log(a_{2}n_{\phi})a_{2}^{n_{\phi}}\exp(-a_{3}m)}_{\text{Term 3}}\\
    &+ \underbrace{544 n_{x}^{2.5} n_{\phi}^{2.5} \log(a_{4} n_{x} n_{\phi}) a_{4}^{n_{x}n_{\phi}}\bigg(1-q_{w} \bigg(\frac{a_{1}\varrho}{4\sqrt{n_{x}}}\bigg)\bigg)^{\lceil T/ m \rceil}}_{\text{Term 4}},
\end{align*}
where $a_{1} = \frac{s_{\phi}p_{\phi}}{4}$, $a_{2} = \frac{64\phi_{\max}^{2}}{s_{\phi}^{2}p_{\phi}^{2}}$, $a_{3} = \frac{p_{\phi}^{2}}{8}$, and $a_{4} = \max(\frac{16\phi_{\max}\sqrt{n_{x}}}{s_{\phi}p_{\phi}},1)$.

\vspace{4pt}

\noindent \textbf{Examples of $q_w(\ell)=c_w\ell$.} 
We provide explicit formulas of $q_w(\ell)$ for the  two example distributions discussed in Section \ref{sec: SME}.

When $w_{t}$ follows a uniform distribution on $[-w_{\max}, w_{\max}]^{n_{x}}$, then $q_{w}(\ell) = c_{w} \ell$ with $c_{w} = \frac{1}{2w_{\max}}$.  

When $w_{t}$ follows a truncated-Gaussian distribution on $[-w_{\max}, w_{\max}]^{n_{x}}$, generated by a Gaussian distribution with zero mean and covariance matrix $\sigma_{w}^{2}I_{n_{x}}$, then $q_{w}(\ell) = c_{w} \ell$ with 
$$c_{w} = \frac{1}{\min(\sqrt{2\pi}\sigma_{w}, 2w_{\max})} \exp(\frac{-w_{\max}^{2}}{2\sigma_{w}^{2}}).$$

\vspace{4pt}

\noindent \textbf{Proof of Corollary \ref{cor:sme-rate}.} 
We prove this corollary by
showing that $\mathbb{P}\big(\mathrm{diam}(\ThetaSet) > \varrho\big)
\leq 2\delta$. In particular, we will choose proper $m$ and $\varrho$ such that $\text{Term 3}\leq \delta$ and $\text{Term 4}\leq \delta$, which completes the proof. 

First, to ensure $\text{Term 3}\leq \delta$, we need 
\begin{align*}
    544 T n_{\phi}^{2.5} \log(a_{2}n_{\phi})a_{2}^{n_{\phi}}\exp(-a_{3}m) \leq \delta.
\end{align*}
By reorganizing the coefficients and taking logarithms on both sides, we can show that  
\begin{align*}
    m & \geq \dfrac{1}{a_{3}} \Big( \log(T/\delta) + n_{\phi} \log(a_{2})+ 2.5 \log(n_{\phi})  \\
    & \qquad \qquad \qquad \qquad \qquad + \log\log(a_2 n_\phi) + \log(544)\Big)
    \end{align*}
    can guarantee $\text{Term 3}\leq \delta$. 

Next, we show that the following choice of $\varrho$
\begin{align*}
     \varrho & = \frac{4\sqrt{n_x}\, m}{c_w a_1 T}
    \Big( \log\big(\tfrac{1}{\delta}\big) + n_x n_{\phi}\log(a_4) \notag \\
    & \qquad  + 2.5\log(n_x n_{\phi}) + \log\log(a_4 n_x n_{\phi}) + \log(544) \Big),
\end{align*}
    can guarantee $\text{Term 4}\leq \delta$. In particular, we denote $C\coloneqq 544\, n_x^{2.5} n_{\phi}^{2.5}
\log(a_4 n_x n_{\phi})\, a_4^{n_x n_{\phi}}$ for simplicity and note $C \geq 1$.
Since $q_w(\ell) = c_w \ell$, $1 - y \leq \exp(-y)$ for all
$y \in \mathbb{R}$, and since $\lceil T/m \rceil \geq T/m$, we can derive the following upper bound on $\text{Term 4}$:
\begin{align*}
    \text{Term 4}
    &= C \Big( 1 - q_w\Big( \frac{a_1 \varrho}{4\sqrt{n_x}} \Big)
       \Big)^{\lceil T/m \rceil}
    \leq C \Big( 1 - \frac{c_w a_1 \varrho}{4\sqrt{n_x}}
       \Big)^{T/m} \\
    & \leq C \exp\Big( - \frac{T}{m} \cdot
       \frac{c_w a_1 \varrho}{4\sqrt{n_x}} \Big).
\end{align*}
Then, with the $\varrho$  defined above, we have
\begin{align*}
    &\frac{T}{m} \cdot \frac{c_w a_1 \varrho}{4\sqrt{n_x}}
    = \log\big(\tfrac{1}{\delta}\big) + n_x n_{\phi}\log(a_4)
    + 2.5\log(n_x n_{\phi}) \\
    &\qquad \qquad \qquad \qquad + \log\log(a_4 n_x n_{\phi}) + \log(544)
    = \log\big(\tfrac{C}{\delta}\big).
\end{align*}
Consequently,  $\text{Term 4} \leq C \exp\big( -\log(C/\delta) \big) = \delta$.

In conclusion, we have shown 
$\mathbb{P}\big(\mathrm{diam}(\ThetaSet) > \varrho\big)
\leq 2\delta$. Equivalently, with probability at least $1 - 2\delta$, we have $\mathrm{diam}\big(\ThetaSet\big) \leq \varrho$. By substituting the definitions of $\varrho$ and $m$ above, we have completed the proof.

\vspace{3pt}

\noindent \textbf{Proof of Corollary \ref{cor:sme-con-c-l}.} 
The proof is  similar to   the proof of Theorem \ref{th:sme-con-o-l}. In particular, by Corollary \ref{cor: bmsb closed loop}, we have the BMSB condition and the boundedness of $\phi(x_t, u_t)$, so we can apply Proposition \ref{prop: SME general} to obtain the diameter bound. Then, by following the proof of Corollary \ref{cor:sme-rate}, we obtain the $\tilde O(1/T)$ convergence rate of SME for the closed-loop system.

%% file: appendix_6.tex
\subsection{More Details for Numerical Experiments}\label{app:numerical details}

\vspace{3pt}

\noindent \textbf{Numerical Estimation of Parameters in  Theoretical Bounds.}\label{append: numerical}
This appendix will discuss how to estimate the BMSB parameters numerically.  In the following, we first fix $s_{\phi} = \bar{s}$ for some $\bar s>0$ and numerically estimate $p_\phi$. If the estimation  satisfies $0<p_\phi<1$, we have completed the numerical procedures of the BMSB parameter estimation. If the estimated $p_\phi \not \in (0,1)$, we   decrease the value of $s_\phi$ and repeat the procedures, until we identified $s_{\phi}, p_{\phi}$ that satisfy the BMSB conditions. 

 The estimation of $p_{\phi}$ is discussed below. Recall that the BMSB condition implies that 
 \[
    p_{\phi} \leq \inf_{\mathcal{F}_{t}, t \geq 0}\ \inf_{\|v\|_{2}=1}\  {\mathbb{P}}\bigg(|v^{\intercal}\phi(z_{t+1})| \geq s_{\phi} \ \big|\ {\mathcal{F}}_{t}\bigg).
\]
Further, recall that $z_t \in \F_t$ and \( \phi(z_{t+1}) = \phi( \Theta_*\phi(z_{t}) + w_{t}, u_{t+1} ) \), so \( \phi(z_{t+1})\ |\ {\mathcal{F}}_{t} \) is random due to $w_t,u_{t+1}$. To estimate $p_{\phi}$, we generate independent trajectories of data $\{\{x_t^i, u_t^i, w_t^i\}_{t=1}^T\}_{i=1}^{N_d}$. In addition, we sample $N_v$ vectors uniformly from $\Sb$ and denote the set as $\Sb_d$. We  estimate $p_{\phi}$ by 
\[
\bar{p} = \min_{1\leq i \leq N_d}\min_{1\leq t \leq T}  \min_{v \in \Sb_d}\  \mathbb{P} \bigg(|v^{\intercal}\phi(\Theta_{*}\phi(z_t^i) + w_{t}^i, u_{t+1}^i)| \geq \bar{s} \bigg).
\] 
The probability is estimated by Monte Carlo simulations. In particular, we generate multiple random samples of $w_t, u_{t+1}$ given $z_t$ and the probability is estimated by the ratio between the number of samples satisfying the condition over the total samples.

\vspace{4pt}

\noindent \textbf{Fast SME.} 
 A brief explanation of fast SME used in Section \ref{sec:num-exp} is provided below.
Fast SME maintains a fixed-complexity polytopic outer approximation of the exact unfalsified set, $\widehat{\Theta}^J_t = \{\widehat{\Theta} : A^J \widehat{\Theta} \le b_t\}$, where the rows of $A^J$ are generated from a regular polytope with $J$ sides and $b_t$ is tightened at each step so that $\widehat{\Theta}^J_t$ remains an outer approximation. This is similar in spirit to the bounded-complexity set-membership updates used in robust adaptive MPC \cite{lorenzen2019robust, lu2019robust}.